\newcommand{\ts}{&  \hspace{-0.05in}}
\newcommand{\nn}{\nonumber}
\newcommand{\beq}{\begin{equation}}
\newcommand{\eeq}{\end{equation}}
\newcommand{\bea}{\begin{eqnarray}}
\newcommand{\eea}{\end{eqnarray}}
\newtheorem{thm}{\bf Theorem}
\newtheorem{lem}{\bf Lemma}
\newtheorem{defn}{\bf Definition}
\newtheorem{rem}{\bf Remark}
\tikzset{
  state/.style={draw,circle,minimum size=14pt,inner sep=1pt},
  hyper/.style={draw,ellipse,inner sep=4pt},
  arr/.style={-Latex,thick},
  lab/.style={font=\scriptsize,fill=white,inner sep=1pt},
  layer/.style={font=\small}
}
\def\BibTeX{{\rm B\kern-.05em{\sc i\kern-.025em b}\kern-.08em
    T\kern-.1667em\lower.7ex\hbox{E}\kern-.125emX}}
\begin{document}
\title{Global and local observability of hypergraphs}
\author{Chencheng Zhang, 
        Hao Yang, \IEEEmembership{Senior Member, IEEE,}
        Shaoxuan Cui, 
        Bin Jiang, \IEEEmembership{Fellow, IEEE,}\\
        and~Ming Cao, \IEEEmembership{Fellow, IEEE}
\thanks{This work was supported in part by the National Natural Science Foundation of China under Grants 62233009 and 62403241, in part by Postdoctoral Fellowship Program of CPSF under Grant GZB20240975 and Jiangsu Provincial Outstanding Postdoctoral Program, in part by the National Key Laboratory Foundation of Helicopter Aeromechanics under Grant 2023-HA-LB-067-04, and in part by the Netherlands Organization for Scientific Research under Grant NWO-Vici-19902. }
\thanks{C. Zhang, H. Yang and B. Jiang are with the College of Automation Engineering, Nanjing University of Aeronautics and Astronautics, Nanjing, 211106, China (e-mail: zhangchencheng@nuaa.edu.cn; haoyang@nuaa.edu.cn; binjiang@nuaa.edu.cn). 
(Corresponding author: Hao Yang)}
\thanks{S. Cui is with the Bernoulli Institute for Mathematics, Computer Science and Artificial Intelligence, University of Groningen, Groningen, 9747 AG Netherlands  (E-mail:  s.cui@rug.nl)}
\thanks{M. Cao is with the Institute of Engineering and Technology (ENTEG), University of Groningen, Groningen, 9747 AG Netherlands  (e-mail: m.cao@rug.nl).
}}

\maketitle

\begin{abstract}
\textcolor{blue}{
This paper develops a unified algebraic-structural framework for analyzing global and local observability of dynamical network systems defined on non-uniform, directed, and weighted hypergraphs.
By exploiting polynomial rings and Hilbert's basis theorem, a tensor-based representation encodes group-wise interactions in dynamics and measurements, enabling Lie-derivative ideals with finitely many generators. Necessary and sufficient global observability conditions are established via ordered hyperedge contractions, linking each generator to a directed higher-order propagation path on the hypergraph. Building on this, higher-order network structure and weight design guidance are considered to achieve or reconstruct an observable system. Then, the structural aspects of observability are investigated by analyzing hyperedge-level reachability and dynamic-output hypergraph automorphisms. Finally, local observability conditions are developed whose rank tests explicitly reflect the 
higher-order hypergraph structure: different input/output configurations activate distinct 
tensor contractions and propagation paths. The effectiveness of the proposed criteria is demonstrated through a competitive population model with third-order interactions.}
\end{abstract}

\begin{IEEEkeywords}
Algebraic Observability; hypergraphs; higher-order network; polynomial ideals; structural observability
\end{IEEEkeywords}

\section{Introduction}
\label{sec:introduction}
\IEEEPARstart{N}{etwork} science has emerged and evolved over the past two decades, unveiling the underlying mechanisms of networked systems with interdisciplinary applications in systems engineering, systems biology, genetics, and neuroscience.\cite{Latora17}. Most research subjects focus on networks inherently depicted by only interactions between pairs of nodes \cite{Petri13}, whereas, in reality, the interactions within many systems go beyond dyadic relationships. It has been shown that functional brain networks \cite{Lee12}, protein interaction networks \cite{Estrada18}, and co-authorship graphs of publications \cite{Patania17} cannot be simply factorized by pairwise interactions.
Instead, they involve group-wise, or so-called higher-order, interactions among the nodes in the network. 
To bridge this gap, hypergraphs, where hyperedges link multiple nodes, are a potentially more powerful modeling tool to represent such kinds of interactions \cite{Carletti20}. 
For example, hypergraphs can be utilized to model transportation systems where routes involve multiple locations or intersections, allowing a more realistic depiction of traffic flows \cite{Huang23}.
Therefore, investigating the performance of hypergraphs has emerged at the forefront of research in network dynamics.

As one of the most important properties of a dynamical system, observability is a measure that tells whether the trajectory temporal evolution of the internal states of the system can be reconstructed from the knowledge of inputs and outputs \cite{kalman1959}. 
It can further indicate how to place sensors, as few as possible, to determine states \cite{Luenberger1966}. 
As for network systems, it can help position sensors on nodes to observe the states of all nodes.
Global and local observability of an initial state is defined according to whether the initial state is distinguishable from all the possible initial states or initial states in one of its neighborhoods. Most existing theoretical analysis for general nonlinear systems is based on algebra and differential geometry. Usually, the resulting approaches propose sufficient conditions for observability by computing the dimension of the subspace spanned by the gradients of the Lie derivatives of the measurements \cite{Hermann1977}\cite{Kawano2017}, which leads to the infinite number of Lie derivatives, especially for the global observability criterion for general nonlinear systems \cite{Zabczyk20}. 
However, for polynomial systems, it has been proved that global observability can be characterized by a finite set of equations based on commutative algebra \cite{Kawano13}. This can be further extended to analyze a class of nonlinear systems that can be transformed into polynomial expressions. 
Distinct from global observability, local case focuses on the observability properties in the vicinity of a specific state rather than across the entire state space of it \cite{Bar}.

\textcolor{blue}{
For network systems, the structure or interaction form plays an important role in affecting the properties of the systems, so as to observability.
Structural observability, originally developed for linear and pairwise networks \cite{Bellman1970, DiStefano1977}, has
been extended to networked and multi-agent systems to capture state identifiability properties that
hold for almost all numerical weights \cite{Angulo2019, Zhang2024}. These developments show that
propagation patterns and graph automorphisms play an important role in observability.}

\subsection{Problem Description}
\textcolor{blue}{
Due to the central role of hypergraphs in modeling higher-order interactions, a key open problem
is to determine how such group-wise couplings shape the system model through which information
propagates to the outputs. The challenge is not merely to decide whether a system is observable
in the classical sense, but to characterize, \emph{in finite and verifiable algebraic terms, how
directed higher-order hyperedges contribute to or impede state distinguishability}.
When moving beyond pairwise networks, \emph{structural observability}, which notion is
observability for almost all numerical hyperedge weights consistent with a given hypergraph
topology, becomes equally essential for sensor placement and model design. This raises a
fundamental question: \emph{which structural features of a hypergraph, its propagation patterns,
multi-way coupling orders, or inherent symmetries—govern the ability of outputs to distinguish
states?}  
Motivated by these considerations, this paper aims to establish algebraic and structural criteria
that characterize both global and local observability of higher-order network systems modeled
by hypergraphs, and to provide design principles for ensuring observability through output
selection and structural modification.}





\subsection{Related Work and Main Contribution}

\textcolor{blue}{
To the best of our knowledge, observability analysis for higher-order or hypergraph-based dynamical network systems remains largely unexplored. Existing
results are limited in scope: Ref.~\cite{Joshua23} considers only local observability for
uniform hypergraphs, and \cite{Chen24} provides tensor-rank tests for weak controllability and
observability of temporal hypergraphs. These approaches do not address global observability,
do not yield finite-step verifiable conditions, and do not cover general non-uniform, directed,
and weighted hypergraphs that arise in higher-order network dynamics. Moreover, none of the
existing work explains how higher-order propagation paths, tensor contractions, or hypergraph
symmetries fundamentally shape distinguishability. We try to \textit{establish the first comprehensive algebraic–structural framework for
global and local observability of higher-order network systems on general hypergraphs}. The main contributions are as follows:}

\begin{itemize}

    \item 
    \textcolor{blue}{
Inspired by the coupled cell system model on hypergraphs, a tensor algebra-based model is established for the higher-order network systems with inputs and outputs.
Both the dynamics and outputs of this system are on non-uniform, directed, and weighted hypergraphs, whose hyperedges involve different cardinalities. 
Thus, the established model is capable of capturing group-wise information in a network with groups of different sizes.}

\item
\textcolor{blue}{
By leveraging polynomial rings and Hilbert’s basis theorem, we derive necessary and sufficient 
global observability conditions that can be verified in finitely many steps. We further establish 
theorems that express the observability ideal through ordered tensor contractions along directed 
hyperedges, showing that each Lie–derivative generator corresponds to a higher-order propagation 
path. This reveals the relation between algebraic observability and the hypergraph structure.}


\item
\textcolor{blue}{
We connect the proposed algebraic observability criteria with the \emph{designability} of hypergraph structures and weights.  
(i) A specific situation that our criteria enable
fast certification of global observability is figured out:
in classical observability matrix rank-deficient cases, the Lie–derivative ideal chain stabilizes at very small order.
(ii) A guidance for reconstruction of unobservable network systems is proposed by adjusting higher-order couplings.
(iii) An efficient output design algorithm is established to guarantee the observability based on incremental Lie-derivative vanishing design.}

\item 
\textcolor{blue}{
We extend the analysis from numerical tensor conditions to purely structural ones for higher-order networks. Novel structural observability criteria are proposed by introducing observational diameter and hypergraph asymmetries from the structural perspective. The result shows that structural global observability holds if:
(i) the observational diameter is finite, which means that the dynamic–output hypergraph is topologically reachable, and  
(ii) the coupled hypergraph admits no nontrivial automorphism, which breaks symmetry structures.
This provides a structural characterization for observability independent of specific hyperedge coupling weights. }

   



\item 
\textcolor{blue}{
In the form of classical rank condition, three local observability criteria are respectively proposed for the higher-order network systems with input, with direct transmission, and without neither of them.
They identify the governing rule for how the multi-order derivatives of the system output evolve with the structure of higher-order networks.
This establishes the normalized relation between the observable matrices and the general hypergraph structures. }

\end{itemize}

The rest of the paper is organized as follows: Section II gives preliminaries of tensors and definitions of both global and local observability. Section III constructs general higher-order network dynamical systems on hypergraphs and transforms them into Kronecker product forms. The global observability criteria for the higher-order network systems are proposed in Section IV, which is followed by the local observability criteria presented in Section V. Section IV gives conclusions.

\section{Preliminaries}

\noindent\textbf{Notations : } 
Let $\mathbf{R}$ denote the field of real numbers, and $\mathbf{N}$ denote the set of non-negative integers. 
$\otimes$ represents the Kronecker product operator between matrices. 
${x}^{[k]}$ denotes the $k$-times Kronecker product of the vector $x$. 
A polynomial ring over $\mathbf{R}$ in the variables $x_i$ $(i = 1, 2, \dots, n)$ is denoted by $\mathbf{R}[x] := \mathbf{R}[x_1, \dots, x_n]$. 
Similarly, the polynomial rings over $\mathbf{R}$ in the variables $\xi_i$ $(i = 1, 2, \dots, n)$ and in both $\xi_i$ and $\eta_i$ $(i = 1, 2, \dots, n)$ are denoted by $\mathbf{R}[\xi] := \mathbf{R}[\xi_1, \dots, \xi_n]$ and $\mathbf{R}[\xi, \eta] := \mathbf{R}[\xi_1, \dots, \xi_n, \eta_1, \dots, \eta_n]$, respectively. 
The ideal generated by elements $a_1, a_2, \dots, a_s \in R[\xi, \eta]$ is written as $\langle a_1, \dots, a_s \rangle \subset \mathbf{R}[\xi, \eta]$. The sets $\mathbf{V}(I) \subset \mathbf{R}^n$ and $\mathbf{V}(J) \subset \mathbf{R}^n \times \mathbf{R}^n$ represent affine algebraic varieties, which are the sets of common zeros of all elements of the ideals $I \subset \mathbf{R}[\xi]$ and $J \subset \mathbf{R}[\xi, \eta]$, respectively.
For technical terms in commutative algebra and algebraic geometry, please refer to \cite{Atiyah, Cox1, Kunz, Cox2}.

\subsection{Tensors}
A tensor is a multidimensional array. The order of a
tensor is the number of its dimensions, and each dimension
is called a mode. A k-th order tensor usually is denoted by $\mathbf{T}\in \mathbf{R}^{n_1\times n_2\times \cdot\cdot\cdot\times n_k}$ and it is called cubical if all modes have same size, i.e., $n_1=n_2= \cdot\cdot\cdot=n_k=n$. A cubical tensor T is called super-symmetric if $\mathbf{T}_{j_1j_2\cdot\cdot\cdot j_k}$ is invariant under any permutation of the indices. To simplify the notation, we use symmetric to represent super-symmetric throughout the paper.

\begin{defn}[\!{\cite{Kolda09}}] \label{defn1} The tensor vector multiplication $\mathbf{T}\times_{p}\mathbf{v}$ along mode $p$ for a vector $\mathbf{v}\in \mathbf{R}^{n_k}$ is defined as 
\begin{align}
(\mathbf{T}\times_{p}\mathbf{v})_{j_1j_2\cdot\cdot\cdot j_{p-1}j_{p+1}\cdot\cdot\cdot j_k}=\sum_{j_p=1}^{n_p}\mathbf{T}_{j_1j_2\cdot\cdot\cdot j_p\cdots j_k}\mathbf{v}_{j_p}, 
\end{align}
which can be extended to 
\begin{align}
\mathbf{T}\times_{1}\mathbf{v}_1\times_{2}\mathbf{v}_2\cdot\cdot\cdot\times_{k}\mathbf{v}_k=\mathbf{T}\mathbf{v}_1\mathbf{v}_2\cdot\cdot\cdot\mathbf{v}_k \in \mathbf{R} \label{a11}
\end{align}
where $\mathbf{v}_p\in\mathbf{R}^{n_p}$ with $p\in\{1,...,k\}$. For simplicity, we define $\mathbf{T}\mathbf{v}_1\mathbf{v}_2\cdot\cdot\cdot\mathbf{v}_k :=\mathbf{T}\mathbf{v}^k$ if $\mathbf{v}_1=\cdots=\mathbf{v}_k=\mathbf{v}$.
\end{defn}

\subsection{Hypergraphs}

Some definitions regarding directed and undirected hypergraphs are introduced.  For more details, please see \cite{chen2021controllability} for undirected uniform hypergraphs and see \cite{gallo1993directed, xie2016spectral} for directed uniform hypergraphs. 

A weighted directed hypergraph is a triplet $\mathbf{H}=(\mathcal{V},\mathcal{E}, \Tilde{A})$. The set $\mathcal{V}$ denotes a set of vertices and $\mathcal{E}=\{E_1, E_2, \cdots,E_n \}$ is the set of hyperedges. A hyperedge is an ordered pair $E=(\mathcal{X},\mathcal{Y})$ of disjoint subsets of vertices; $\mathcal{Y}$ is the tail of $E$ and $\mathcal{X}$ is the head.
As a special case, a weighted undirected hypergraph is a triplet $\mathbf{H}=(\mathcal{V},\mathcal{E}, \Tilde{A})$, where $\mathcal{E}$ is a finite collection of non-empty subsets of $\mathcal{V}$. The cardinality of a hyperedge represents the number of nodes contained in it. If the cardinality of every hyperedge in a hypergraph is equal, then the hypergraph is uniform. Otherwise, the hypergraph is non-uniform. An illustration of uniform and non-uniform undirected hypergraphs is presented in Fig. 1.

\begin{figure}[h!]
        \centering
        \includegraphics[height=4.2cm]{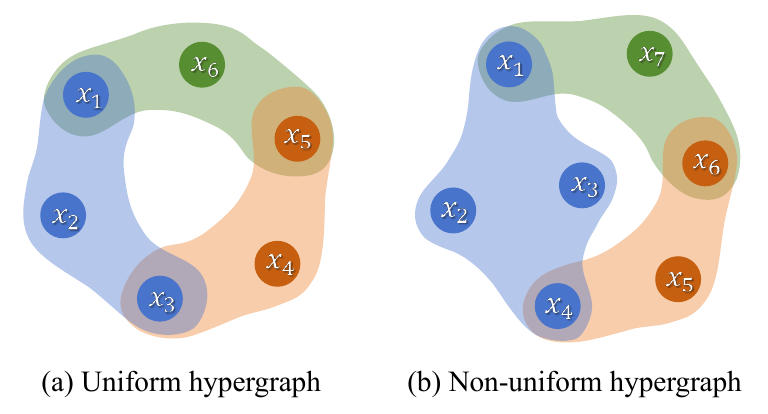}
        \caption{Illustration of uniform and non-uniform hypergraphs. 
        In (a), $e_1=\{x_1,x_2,x_3\}$, $e_2=\{x_3,x_4,x_5\}$, and $e_3=\{x_5,x_6,x_7\}$ are hyperedges. The cardinality of every hyperedge is equal to 3. 
        In (b), $\bar e_1=\{x_1,x_2,x_3,x_4\}$, $\bar e_2=\{x_4,x_5,x_6\}$, and $\bar e_3=\{x_1,x_6,x_7\}$ are hyperedges. The cardinality of hyperedge $\bar e_1$ is 4 and that of $\bar e_2$ and $\bar e_3$ is 3. The non-uniform hypergraph can be used to represent a network system containing different (higher-)order interactions. Please see Section III-B for details.
        }
        \label{fig:ill}
\end{figure}

\subsection{Observability of nonlinear systems}
Consider a nonlinear system in the following form
\begin{align}
\left\{\begin{aligned}   
\dot x&=F(x,u),x(0)=x_0  \\
y&=H(x,u) \label{a1}
\end{aligned}
\right.
\end{align}
where $x\in \mathbf{R}^n$, $u\in \mathbf{R}^m$, and $y\in\mathbf{R}^q$ denote state, input, output vectors. $F: \mathbf{R}^n\rightarrow\mathbf{R}^n$ and $H: \mathbf{R}^n\rightarrow\mathbf{R}^q$ are real analytic functions.

Definitions of observability are given here. For more details, please see references \cite{Kawano13}, \cite{Moog, Hermann77, Isidori95}.


\begin{defn} \label{defn2} A pair of initial states $(\xi,\eta)\in \mathbf{R}^n\times \mathbf{R}^n$ of the system (\ref{a1}) is distinguishable if there exists an admissible piecewise constant input $u$ and an instant $t$ such that $y(t;\xi,u)\neq y(t;\eta,u)$.
\end{defn}

\begin{defn} \label{defn3}  The system (\ref{a1}) is globally observable at an initial state $\xi\in \mathbf{R}^n$ if for any $\eta \in \mathbf{R}^n\setminus \{\xi\}$, the pair of $(\xi,\eta)$ is distinguishable.
\end{defn}

\begin{defn} \label{defn4} The system (\ref{a1}) is locally observable at an initial state $\xi \in M \subseteq \mathbf{R}^n$ if there exists an open neighborhood $U \subseteq M$ of $\xi$ such that for any open neighborhood $V \subseteq M$ of $\xi$ and $\eta \in V/ \{\xi\}$, the pair of $(\xi,\eta)$ is distinguishable.
\end{defn}

Note that here the definition of observability is depending on inputs. Definition \ref{defn3} represents that each pair of states can be distinguished by some input sequence that may depend on the initial states.
In this paper, the observability and local weak observability are called as global observability and local observability to distinguish the global and local cases.


\subsection{Problem formulation}


In the following sections III-VII, a general higher-order network control system is constructed on hypergraphs that can intuitively capture group-wise interactions, and the global and local observability testing methods are investigated for the system. To better demonstrate the logical relationship and structure of our main results, an outline of the proposed theorems is presented in Fig. \ref{fig3}.

\begin{figure}[h]
        \centering
        \includegraphics[height=6.7cm]{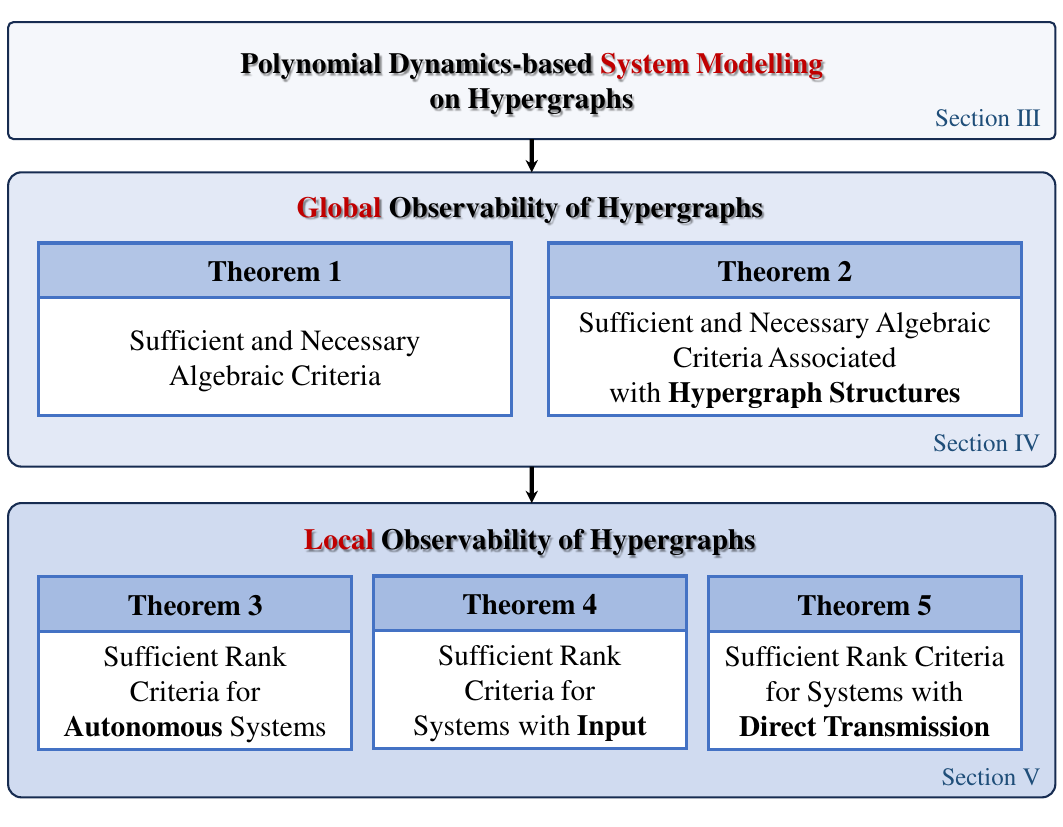}
        \caption{Outline of the proposed theorems.}
        \label{fig3}
\end{figure}

\section{Polynomial dynamics-based higher-order network systems on hypergraphs}

\subsection{Polynomial systems and relation to hypergraphs}

In this section, we illustrate the relationship between hypergraphs and polynomial dynamical systems through examples.

First of all, we take a coupled cell system \cite{bick2023higher, cui2024metzler} on a hypergraph as an example. Such systems can often be written in the form
\begin{equation}\label{eq:cell}
\begin{split}
    \dot{x}_i &= F(x_i) + \sum_{j=1}^N ( A_2)_{ji} \mathbf G_i(x_j, x_i) \\
    &\quad + \sum_{j, l=1}^N (\mathbf A_3)_{jli} \mathbf G_i^{(3)}(x_j, x_l, x_i) + \cdots
\end{split}
\end{equation}
where $F$ describes the intrinsic dynamics of the node, $A_2$ represents pairwise interactions between nodes, and higher-order interactions are captured by adjacency tensors $\mathbf A_s$ (with $s \geq 3$). For instance, $(\mathbf A_3)_{jli}$ and $\mathbf G_i^{(3)}( x_j, x_l, x_i)$ describe the joint influence of nodes $j$ and $l$ on node $i$. These interactions can be described by directed hyperedges with a single tail and multiple heads. 

To further illustrate, consider a system in which the coupling functions $\mathbf G_i^{(s)}( x_j, x_l, \dots, x_i)$ take a multiplicative form, such as $x_j x_l \dots x_i$. This form of interaction is commonly seen in physical systems, for example, where independent events occur simultaneously with probabilities given by multiplicative terms. The dynamical equation \eqref{eq:cell} can then be simplified and written in tensor form as:
\begin{align}
\dot{x} = \mathbf A_k x^{k-1} + \mathbf A_{k-1} x^{k-2} + \cdots + \mathbf A_2 x \label{ps1}
\end{align}
which mirrors the structure of many polynomial dynamical systems. It should be noted that any polynomial system, where the exponents are positive integers, can be written in this form. On the other hand, it is also worthwhile mentioning that if the interaction functions are not in a polynomial form but smooth, by the Taylor expansion, the system can be locally approximated by the polynomial system \eqref{ps1}. This further motivates us to study the polynomial system \eqref{ps1} on hypergraphs.

As is shown in \eqref{ps1}, hypergraphs provide a natural framework for representing polynomial dynamical systems, particularly those involving higher-order interactions. To demonstrate this, let us examine a few examples of real-world systems that utilize polynomial coupling functions and are modeled using hypergraphs. One such example is an epidemic model \cite{Epidemic, liang2024discrete, cui2023general}, where the spread of a disease through a population can be represented by a hypergraph. Each hyperedge captures how groups of individuals (nodes) interact and contribute to the transmission of the disease. This type of system can often be modeled using polynomial terms, such as the generalized Lotka-Volterra model, where the interaction between species in an ecological network can be described by polynomial coupling functions \cite{letten2019mechanistic, cui2023species}.  Inspired by the above discussion, a class of general hypergraph dynamical system model are proposed in the following section.

\subsection{General hypergraph dynamics modeling}
Consider a hypergraph $\mathcal{G}= (\mathcal{V},\mathcal{E})$ with $n$ nodes and $h$ hyperedges. Let $c$ be the maximum cardinality of the hyperedges. Define a general input-affine higher-order network control system on hypergraph $\mathcal{G}$, whose dynamics are as follows
\begin{align}\label{aa2}
\left\{\begin{aligned}      
\dot{x}=&\sum_{k=2}^{c}\mathbf{A}_k{x}^{k-1}+\sum_{j=1}^{m}\sum_{k=2}^{c}\mathbf{B}_{k,j}{x}^{k-1}u_j  \\
y_i=&\sum_{k=1}^{c}\mathbf{C}_{i,k}{x}^{k}+\sum_{l=1}^{w_i}\sum_{k=1}^{c}\mathbf{D}_{i,k,l}{x}^{k}u_l 
\end{aligned}   \right.
\end{align}
where $i=1,\cdots,q$, ${x}:=(x_1,x_2,...,x_n)^T$, and ${x}(0)={x}_o$. $u_j$ and $y_i$ are the $j$-th input and the $i$-th output where $j=1,\cdots,m$ and $i=1,\cdots,q$. 
$\mathbf{A}_k,\mathbf{B}_k,\textbf{C}_{i,k},\mathbf{D}_{i,k,l} \in \mathbf{R}^{n\times n\times\cdots\times n}$ denote $k$-th order system, input, output and direct transmission tensors, respectively. In different hypergraphs, these tensors are all adjacency tensors that capture the strength of interactions among nodes.
Define $e_{\mathbf{A}kr}$, $e_{\mathbf{B}kjr}$, $e_{\mathbf{C}ikr}$ and $e_{\mathbf{D}iklr}$ as the $r$-th hyperedge with $k$ cardinality where $r=1,2,\cdots$ in coupled dynamics of subsystems corresponding to the system ${x}$, input $u_j$, output $y_i$ and direct transmission $u_l$ for output $y_i$, respectively. Since we focus on nodal dynamics, it is sufficient to consider hypergraphs where all hyperedges have a single tail. The details of the setting will be explained later in this section.  
$\mathbf{A}_k, \mathbf{B}_{k,j}, \mathbf{C}_{i,k}$ and $\mathbf{D}_{i,k,l}$ are defined as follows
\begin{align}\nonumber
{(\mathbf{A}_k)}_{a_1\cdots a_k}:=\left\{\begin{array}{cc}
&\hspace{-6mm} \frac{{({\tilde{\mathbf{A}}}_k)}_{a_1\cdots a_k}}{(k-1)!},
\left(\{a_1,\cdots,\! a_{k-1}\}, \{a_k\}\right) \in \!E_{\mathbf{A}k}\\
&\hspace{-3cm} 0, \ \ \mathbf{otherwise}
\end{array}
\right.
\end{align}

\begin{align}\nn
{(\mathbf{B}_{k,j})}_{b_1\cdots b_k}:=\left\{\begin{array}{cc}
&\hspace{-6mm} \frac{{({\tilde{\mathbf{B}}}_{k,j})}_{b_1\cdots b_k}}{(k-1)!},
\left(\{b_1,\cdots,\! b_{k-1}\}, \{ b_k\}\right) \in \!E_{\mathbf{B}kj}
\\
&\hspace{-3cm} 0, \ \ \mathbf{otherwise}
\end{array}
\right.
\end{align}

\begin{align}\nn
{(\mathbf{C}_{i,k})}_{c_1\cdots c_k}:=\left\{\begin{array}{cc}
&\hspace{-6mm}\frac{{({\tilde{\mathbf{C}}}_{i,k})}_{c_1\cdots c_k}}{k!}, 
\left(\{c_1,\cdots,\! c_{k-1}\}, \{ c_k\}\right) \in \!E_{\mathbf{C}ik}
\\
&\hspace{-3cm} 0,   \ \mathbf{otherwise}
\end{array}
\right.
\end{align}

\begin{align}\nn
{(\mathbf{D}_{i,k,l})}_{d_1\cdots d_k}\!\!:=\!\left\{\begin{array}{cc}
&\hspace{-6mm}\frac{{({\tilde{\mathbf{D}}}_{i,k,l})}_{d_1\cdots d_k}}{k!},  \left(\{d_1,\cdots,\! d_{k-1}\}, \{d_k\}\right) \in \!E_{\mathbf{D}ikl}
\\
&\hspace{-3.2cm} 0,   \ \mathbf{otherwise}.
\end{array}
\right.
\end{align}
where 
$E_{\mathbf{A}k}:=\{e_{\mathbf{A}kr}|r=1,2,\cdots \}$,
$E_{\mathbf{B}kj}:=\{e_{\mathbf{B}kjr}|r=1,2,\cdots \}$,
$E_{\mathbf{C}ik}:=\{e_{\mathbf{C}ikr}|r=1,2,\cdots \}$, and $E_{\mathbf{D}ikl}:=\{e_{\mathbf{D}iklr}|r=1,2,\cdots\}$.

From Definition \ref{defn1}, one can further obtain that 
\begin{equation}
(\mathbf{A}_kx^{k-1})_i=\sum_{i_1,\cdots,i_{k-1}=1}^{n}{(\mathbf{A}_k)}_{i_1\cdots i_{k-1}i}x_{i_1}\cdots x_{i_{k-1}} \label{product}
\end{equation}
According to (\ref{product}), it is evident that $\mathbf{A}_kx^{k-1}$ can represent any arbitrary homogeneous polynomial vector field.

In the above, we design a general higher-order network system framework on hypergraph \eqref{aa2} based on tensors. Utilizing tensors to represent hyperedges allows for a compact and structured way to {\color{blue}capture the combined influence of all head nodes on tail nodes in a hypergraph}. Specifically, each entry in the tensor corresponds to a hyperedge, with indices representing the involved nodes. 
In the above weighted and directed setting, the tensor encodes how the group of head nodes collectively influences a tail node. This modeling approach effectively generalizes pairwise interactions (as seen in standard graphs) to higher-order interactions in hypergraphs, where multiple nodes simultaneously impact one node through a single hyperedge, i.e., $\left(\{a_1,\cdots,\! a_{k-1}\}, \{a_k\}\right) $ represents the hyperedge with a tail $a_k$ and heads $a_1,\cdots,a_{k-1}$.  ${(\mathbf{A}_k)}_{a_1\cdots a_k}$ represents the weight of the joint influence acted on $a_k$ by $a_1,\cdots,a_{k-1}$.

Additionally, when the hypergraph is undirected and the weight tensors ${({\tilde{\mathbf{A}}}_k)}_{a_1\cdots a_k}$, $ {({\tilde{\mathbf{B}}}_{k,j})}_{b_1\cdots b_k}$, $ {({\tilde{\mathbf{C}}}_{i,k})}_{c_1\cdots c_k}$, ${({\tilde{\mathbf{D}}}_{i,k,l})}_{d_1\cdots d_k}$ are all set to 1, the proposed system design framework guarantees that all hyperedges are unweighted, resulting in a simplified structure. Specifically, the removal of weights allows for a more direct representation of the hypergraph’s interactions without additional complexity. This method underscores the flexibility of tensor-based modeling for capturing both weighted and unweighted higher-order network interactions, effectively adapting to various scenarios in hypergraph representation. To illustrate the modeling process vividly, Fig. \ref{fig2} is presented.

\begin{figure}[h]
        \centering
        \includegraphics[height=3.3cm]{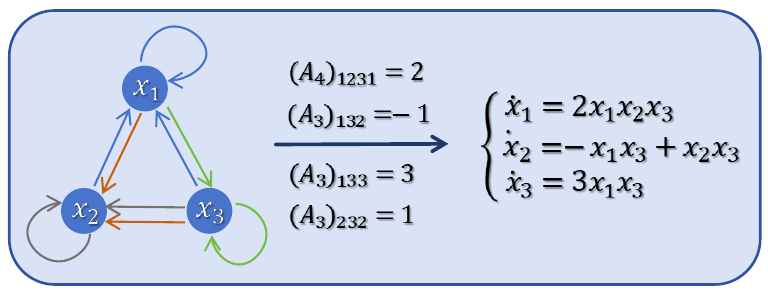}
        \caption{Illustration of higher-order network system modeling. 
        In the left-side hypergraph, arrays in the same color are used to capture a group of interactions. For example, the blue arrays represent the joint influence from $x_1$, $x_2$ and $x_3$ acted on $x_1$ such that there is a directed hyperedge $(\{x_1,x_2,x_3\},\{x_1\})$; the orange arrays represent the joint influence from $x_1$ and $x_3$ acted on $x_2$ such that there is a hyperedge $(\{x_1,x_3\},\{x_2\})$, and so on. 
        Furthermore, the weight of each hyperedge is designed as the corresponding tensor element. This represents the right-side higher-order network system.  
        }
        \label{fig2}
\end{figure}

\begin{rem} \label{rem1}
Compared with the previous research \cite{chen2021controllability, Chen24, Joshua23}, the  established higher-order network system (\ref{aa2}) broadens the scope of higher-order network structures that can be effectively represented. First of all, the system (\ref{aa2}) is modeled on non-uniform hypergraphs rather than uniform ones. Specifically, the system (\ref{aa2}) is modeled by dividing it into several uniform hypergraphs described by tensors with different orders. For every $k\in\{1,\cdots, c\}$, $\mathbf{A}_k{x}^{k-1}$ can be viewed as all the hyperedges with $k$ nodes in the hypergraph. Secondly, not only in system dynamics, but also that in system input and output units,  the interaction between nodes can be higher-order. They are captured with different hypergraphs. In details,
while the dynamics of any node is dominated by all the other nodes that be contained in hyperedges of dynamics hypergraphs, i.e., $\mathbf{A}_k{x}^{k-1}$, any output in the system is dominated by all the hyperedges in output hypergraphs, i.e., $\mathbf{C}_k{x}^{k}$. This setting is necessary
since the output of many real-word systems depends not just on individual node states but also on their collective or group behaviors. For example, 
in an economic system, the price of a product may depend not only on individual factors (e.g., supply and demand) but also on complex market-wide phenomena that involve multiple agents interacting simultaneously, which can be effectively modeled with hyperedges. 
Overall, system (\ref{aa2}) allows to accurately capture more dynamics, offering a comprehensive insight into the network system's overall functioning.
Once $k=1$, both the dynamics and output reduce to the classical pairwise interacted network case.
\end{rem}

\begin{rem} \label{rem2}According to the above analysis, one can figure out that higher-order network system \eqref{aa2} can be expressed in the polynomial form, this further leads us to investigate the observability problems of hypergraphs by virtue of properties of polynomials.  
Please note that although many polynomial systems can be mapped to hypergraphs, 
there exist structural constraints of hypergraphs such as the requirement of defining interactions through hyperedges. The relation between hypergraphs and polynomial systems is presented in Section III-A in detail. In particular, any homogeneous polynomial system can be expressed as $\dot{x} = \mathbf A x^{k-1}$, which corresponds to a uniform hypergraph. \end{rem}

\subsection{Kronecker form}
Note that system (\ref{aa2}) can be represented in its Kronecker form as 
\begin{eqnarray}\label{a3}
             \left\{\begin{aligned} 
\dot{{x}}=&\sum_{k=2}^{c}\mathcal{A}_k{x}^{[k-1]}+\sum_{j=1}^{m}\sum_{k=2}^{c}\mathcal{B}_{k,j}{x}^{[k-1]}u_j \\
y_i=&\sum_{k=1}^{c}\mathcal{C}_{i,k}{x}^{[k]}+\sum_{l=1}^{w_i}\sum_{k=1}^{c}\mathcal{D}_{i,k,l}{x}^{[k]}u_l \end{aligned}\right.
\end{eqnarray}
where $i=1,\cdots,q$ and ${x}(0)={x}_o$.
$\mathcal{A}_k=\mathbf{A}_k\in\mathbf{R}^{n\times n^{k-1}}$,  $\mathcal{B}_{k,j}=\mathbf{B}_{k,j}\in\mathbf{R}^{n\times n^{k-1}}$, $\mathcal{C}_{i,k}=\mathbf{C}_{i,k}\in\mathbf{R}^{1\times n^{k}}$, and $\mathcal{D}_{i,k,l}=\mathbf{D}_{i,k,l}\in\mathbf{R}^{1\times n^{k}}$ are unfolding of $\mathbf{A}$, $\mathbf{B}$, $\mathbf{C}$ and $\mathbf{D}$.
Moreover, for simplicity of notation, the system (\ref{a3}) can be denoted as the following integrated form 
\begin{align}\label{intsys}
            \left\{\begin{aligned} 
\dot{{x}}=&f({x}, u), \ \ {x}(0)={x}_o \\ 
y_i=&h_i({x}, u), \ \ i=1,2,\cdots,q  \end{aligned}\right.
\end{align}
where 
\begin{align}\label{aa4}
\left\{\begin{aligned} 
f({x}, u):=g_0({x})+\sum_{j=1}^{m}g_j({x})u_j \\ 
h_i:=p_{i,0}({x})+\sum_{l=1}^{w_i}p_{i,l}({x})u_l \end{aligned}\right.
\end{align}
with $g_0({x}):=\sum_{k=1}^{c}\mathcal{A}_k{x}^{[k-1]}$, $g_j({x}):=\sum_{k=1}^{c}\mathcal{B}_{k,j}{x}^{[k-1]}$ for $j=1,\cdots,m$
and $p_{i,0}({x}):=\sum_{k=1}^{c}\mathcal{C}_{i,k}{x}^{[k]}$ and $p_{i,l}({x}):=\sum_{k=1}^{c}\mathcal{D}_{i,k,l}{x}^{[k]}$ for $l=1,\cdots,w_i$.

\section{Global observability criteria of hypergraphs}

Recall the relationship between hypergraphs and polynomial systems as described in Section III. In this section, inspired by the theorems on observability for polynomial systems proposed by \cite{Kawano13}, we present an global observability criterion for hypergraphs. This criterion is formulated based on the symmetrization of tensors and the underlying structure of the hypergraph.

\subsection{Global observability of higher-order networks}

Based on Definition \ref{defn2}, the definition of indistinguishability for initial states of system \eqref{aa2} can be represented as follows. 

\begin{defn} \label{defn5}
A pair of initial states $(\xi,\eta)\in \mathbf{R}^n\times \mathbf{R}^n$ of system (\ref{aa2}) is indistinguishable if $y(t;\xi,u)= y(t;\eta,u)$ holds for any admissible piecewise constant input $u$ and instant $t$.
\end{defn}

According to the above definition, the indistinguishable initial states can be obtained by the following lemma.

\begin{lem} \label{lem1} For the higher-order network system \eqref{aa2}, a pair of initial states $(\xi,\eta)$ is indistinguishable if and only if, for all $\tau_1,\tau_2,\cdots,\tau_r\in\{g_0,g_1,\cdots g_m\}, r=0,1,\cdots$ and $\upsilon_c\in \{p_{i,0},p_{i,1},\cdots,p_{i,w_i}\},i=1,\cdots,q$, $c=0, \cdots,\sum_{i=1}^{q}(w_i+1)-1$, it holds that
\begin{align}
L_{\tau_1}L_{\tau_2}\cdots L_{\tau_r}\upsilon_c(\xi)=L_{\tau_1}L_{\tau_2}\cdots L_{\tau_r}\upsilon_c(\eta)      \label{cond1}
\end{align}
where the Lie derivative $L_\tau v:=\frac{\partial v}{\partial x}\tau$.
\end{lem}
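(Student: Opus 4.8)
The statement is the standard Lie-series characterization of indistinguishability for input-affine analytic systems, so the plan is to adapt that argument to the present polynomial setting. The starting observation is that, because every admissible input is piecewise constant, it suffices to understand a single interval on which $u\equiv \bar u$ is a fixed vector. On such an interval the dynamics are governed by the autonomous analytic vector field $f_{\bar u}:=g_0+\sum_{j=1}^m \bar u_j g_j$ and the readout is the fixed function $h_i^{\bar u}:=p_{i,0}+\sum_{l=1}^{w_i}\bar u_l\, p_{i,l}$. Since the right-hand sides are polynomial, hence real-analytic, the output along the flow $\phi^{f_{\bar u}}_t$ admits the Lie series
\begin{align}
y_i(t;x_0,\bar u)=h_i^{\bar u}\big(\phi^{f_{\bar u}}_t(x_0)\big)=\sum_{s\ge 0}\frac{t^s}{s!}\big(L_{f_{\bar u}}^{\,s}h_i^{\bar u}\big)(x_0), \label{plan:series}
\end{align}
convergent for $t$ in the (nonempty) interval of existence.

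The first key step is to expand the coefficients of \eqref{plan:series}. Writing $L_{f_{\bar u}}=L_{g_0}+\sum_{j}\bar u_j L_{g_j}$ and $h_i^{\bar u}=p_{i,0}+\sum_l \bar u_l p_{i,l}$, multilinearity of the Lie derivative shows that each $\big(L_{f_{\bar u}}^{\,s}h_i^{\bar u}\big)(x_0)$ is a polynomial in the entries of $\bar u$ whose coefficients are exactly the iterated Lie derivatives $L_{\tau_1}\cdots L_{\tau_s}\upsilon_c(x_0)$ with $\tau_1,\dots,\tau_s\in\{g_0,\dots,g_m\}$ and $\upsilon_c\in\{p_{i,0},\dots,p_{i,w_i}\}$. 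Thus the entire output map, as a function of the initial condition, factors through precisely the quantities appearing in \eqref{cond1}.

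For the forward (necessity) direction I would specialize to constant inputs. If $(\xi,\eta)$ is indistinguishable, then $y_i(t;\xi,\bar u)=y_i(t;\eta,\bar u)$ on an interval for every fixed $\bar u$; by analyticity the Taylor coefficients in \eqref{plan:series} must agree, i.e. $\big(L_{f_{\bar u}}^{\,s}h_i^{\bar u}\big)(\xi)=\big(L_{f_{\bar u}}^{\,s}h_i^{\bar u}\big)(\eta)$ for all $s$, $i$, and $\bar u$. Because these are polynomial identities in $\bar u$ holding for all $\bar u$, every coefficient must match, yielding \eqref{cond1} for all orders $r=s$ and all admissible strings of vector fields and output functions.

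The backward (sufficiency) direction is where the real work lies, and the crux is a propagation argument. Assume \eqref{cond1} holds. The $r=0$ case already gives $\upsilon_c(\xi)=\upsilon_c(\eta)$, hence $h_i^{\bar u}(\xi)=h_i^{\bar u}(\eta)$ for every $\bar u$, so the outputs agree at $t=0$ on the first interval, and by \eqref{plan:series} together with the coefficient expansion they agree for all $t$ on that interval. To handle subsequent switches I would show that the property ``all iterated Lie derivatives of the $\upsilon_c$ agree at the two points'' is invariant under each flow $\phi^{f_{\bar u}}_{t}$: expanding $L_{\tau_1}\cdots L_{\tau_r}\upsilon_c\big(\phi^{f_{\bar u}}_{t}(x_0)\big)$ again as a Lie series in $t$ produces coefficients $L_{\sigma_1}\cdots L_{\sigma_s}L_{\tau_1}\cdots L_{\tau_r}\upsilon_c(x_0)$, which are themselves instances of \eqref{cond1}; hence the flowed pair $\big(\phi^{f_{\bar u}}_{t}(\xi),\phi^{f_{\bar u}}_{t}(\eta)\big)$ again satisfies \eqref{cond1}. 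Inducting over the switching times of an arbitrary piecewise constant input then forces the outputs to coincide on every interval, giving indistinguishability. The main obstacle I anticipate is exactly this invariance-under-flow step: it is what upgrades the single-interval (constant-input) analysis to arbitrary piecewise constant inputs, and it relies essentially on the analyticity afforded by the polynomial structure to justify the term-by-term manipulation of the Lie series.
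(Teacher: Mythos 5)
Your proposal is correct and follows essentially the same route as the paper: expand the output under piecewise constant inputs as a Lie/Taylor series, use the polynomial (multilinear) dependence on the input values to extract the individual iterated Lie derivatives for necessity, and reassemble the series for sufficiency. The only difference is organizational — the paper expands jointly in all switching times $t_1,t_2,\dots$ at once, whereas you induct over switching intervals via a flow-invariance step; your version of the sufficiency direction is in fact spelled out more carefully than the paper's rather terse argument.
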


\begin{proof}For system (\ref{aa2}), consider the following piecewise constant input
\begin{align}
u_k=\left\{
\begin{array}{ccc}
     u_{k1}& t\in[0,t_1)  \\
     u_{k2}& t\in[t_1,t_1+t_2)\\
     \vdots& 
\end{array} 
\right.
\end{align}
where  $t_1, t_2,\cdots$ are assumed to be sufficiently small. Since $h_1, h_2, \cdots, h_q$ are real analytic functions of $t$, the output $y_i$ at time $t \in \{t_1, t_1 + t_2, \cdots\}$ can be computed in the following Taylor series form
\begin{align} \label{Ts}
y_i(t_1)&=h_i(x_0,u_{k1})+L_{f(x,u_{k1})}h_{i}(x,u_{c1})|_{x_0}t_1+\cdots \nn\\
y_i(t_1+t_2)&=h_{i}(x_0,u_{k1})+L_{f(x,u_{k1})}h_j(x,u_{k1})|x_0t_1\\
&\hspace{5mm}+L_{f(x,u_{k2})}h_j(x,u_{k2})|_{x_0}t_2+\cdots \nn \\
&\hspace{0.3cm}\vdots \nn
\end{align}

\textit{Necessity:}
Since system (\ref{a3}) can be computed as a polynomial system, the functions $f, h$ are real analytic. Therefore, a pair of initial states $(\xi,\eta)$ is indistinguishable if for all $i=1,\cdots,q$, it holds that
\begin{align}\label{aa5}
\begin{array}{ccc}
    &h_{i}(x,u_{k1})|_{\xi}=h_{i}(x,u_{k1})|_{\eta}\\
&L_{f(x,u_{k1})}h_{i}(x,u_{c1})|_{\xi}=L_{f(x,u_{k1})}h_{i}(x,u_{c1})|_{\eta} \\
&\vdots 
\end{array}
\end{align}
Since for any possible $u$, the above equations (\ref{aa5}) should hold. It follows from (\ref{aa4}) that equations (\ref{aa5}) lead to    
\begin{align}\label{aa6}
\begin{array}{ccc}
&p_{i,k}(\xi)=p_{i,k}(\eta)  \\
&L_{g_r}p_{i,k}(\xi)=L_{g_r}p_{i,k}(\eta)  \\
\hspace{0.2cm} &\vdots
\end{array}
\end{align}
with $r=0,1,\cdots,m$ and $k=0,1,\cdots,w_i$, which can be incorporated as condition (\ref{cond1}).

\textit{Sufficiency:} Conversely, consider that condition (\ref{cond1}) is satisfied. By adding Lie derivatives according to the Taylor series \eqref{Ts}, one can obtain $y_i$, $i=1,\cdots, q$ for system \eqref{aa2}. Thus, \eqref{aa5} is satisfied. This further leads to $y_i(t;\xi,u)=y_i(t;\eta,u)$ for any piecewise $u$ and instant $t$. Therefore, the initial pair $(\xi, \eta)$ is indistinguishable according to Definition \ref{defn5}.
\end{proof}

\begin{rem} \label{rem3}According to Lemma \ref{lem1}, one can conclude that the set of initial states that satisfy condition (\ref{cond1}) are not observable.  However, it is difficult to determine how many Lie derivatives are sufficient to test the observability. To solve this problem, relying on Hilbert’s basis theorem, the following observability criteria can be obtained using a finite set of equations by using properties of polynomial rings \cite{Kawano13}.
\end{rem}

From condition (\ref{cond1}), if the initial state pair $(\xi,\eta)$ is indistinguishable, it holds that 
\begin{align}
v_c(\xi) &=v_c(\eta)  \nn \\
L_{g_j}v_c(\xi) &=L_{g_j}v_c(\eta) \nn \\
L_{g_k}L_{g_j}v_c(\xi) &=L_{g_k}L_{g_j}v_c(\eta)  \nn \\
&\hspace{2mm}\vdots  \nn
\end{align}
where $c=1, \cdots,\sum_{i=1}^{q}(w_i+1)$, $j\in 1,\cdots,m$, and $k\in 1,\cdots,m$. 
For $i=1,\cdots, q$ and $r=0, 1, \cdots$, define the ideals $J_{r,i}$ as
\begin{align}
J_{0,i}&=\langle v_{i_1}(\xi)-v_{i_i}(\eta), \cdots, v_{i_{s_i}}(\xi)-v_{i_{s_i}}(\eta)\rangle   \nn\\
J_{1,i}&=J_{0,i}+\langle L_{g_1}v_{i_1}(\xi)-L_{g_1}v_{i_1}(\eta), \nn\\
&\hspace{3cm}\cdots, L_{g_m}v_{i_{s_i}}(\xi)-L_{g_m}v_{i_{s_i}}(\eta)\rangle  \nn\\
J_{2,i}&=J_{1,i}+\langle L_{g_1}L_{g_1}v_{i_i}(\xi)-L_{g_1}L_{g_1}v_{i_1}(\eta), \nn\\
&\hspace{2cm}\cdots, L_{g_m}L_{g_m}v_{i_{s_i}}(\xi)-L_{g_m}L_{g_m}v_{i_{s_i}}(\eta)\rangle  \nn \\
&\hspace{2mm}\vdots  \nn \\
J_{r,i}&=J_{r-1,i}+\langle L_{g_1}\cdots L_{g_1}v_{i_i}(\xi)-L_{g_1}\cdots L_{g_1}v_{i_1}(\eta),\nn\\
&\hspace{1cm} \cdots, L_{g_m}\cdots L_{g_m}v_{i_{s_i}}(\xi)-L_{g_m}\cdots L_{g_m}v_{i_{s_i}}(\eta)\rangle  \nn 
\end{align}
where $i_1:=\sum_{c=1}^{i-1}w_c+i-1$ and $i_{s_i}:=\sum_{c=0}^{i}w_c+i-1$.

Further, define $J_r:=\sum_{i=1}^qJ_{r,i}$. 
It holds that $J_0\subset{J_1}\subset{J_2}\subset \cdots$. According to Hilbert's basis theorem and the ascending chain condition \cite{Cox1}, there exists a certain ideal satisfying 
\begin{align}
J_0\subset{J_1}\subset{J_2}\subset \cdots \subset J_N=J_{N+1}=\cdots =J \label{JN}
\end{align}
for a certain number $N$. Define  $\phi:\mathbf{R}[\xi,\eta]\rightarrow\mathbf{R}[\xi]$ as the mapping that substitutes a known initial state vector $\sigma$ into $\eta$. Denote the ideal $\mathcal{J}\subset \mathbf{R}[\xi]$ as the image $\phi(J)$, which characterizes the set of initial states that are indistinguishable from $\sigma$. $\mathcal{J}$ is generated by a finite set of polynomials and its algebraic variety $\mathbf{V}(\mathcal{J})$ consists of the common zeros of those finite polynomials.

To help test the observability of the initial state $\sigma:=(\sigma_1,\sigma_2,\cdots,\sigma_n)^T$ for the system (\ref{aa2}), define the following ideal $\ell\subset\mathbf{R}[\xi]$ as 
\begin{align}
    \ell :=\langle \xi_1-\sigma_1,\xi_2-\sigma_2,\cdots,\xi_n-\sigma_n\rangle 
\end{align}
The algebraic variety of $\ell $ is $\mathbf{V}(\ell)=\{\sigma\}$. 

Before going on, we give the following lemma of global observability for system \eqref{aa2} directly from Definition \ref{defn3}.

\begin{defn} \label{defn6}The system (\ref{aa2}) is globally observable at an initial state $\xi\in \mathbf{R}^n$ if for any $\eta \in \mathbf{R}^n\setminus \{\xi\}$, the pair of $(\xi,\eta)$ is distinguishable.
\end{defn}

Then the following Lemma can be obtained.

\begin{lem} \label{lem2} The higher-order network system (\ref{aa2}) is globally observable at an initial state $\sigma \in \mathbf{R}^n$ if and only if one of the following equivalent
conditions hold
\begin{align}
&(a) \ \mathbf{V}(\mathcal{J})=\mathbf{V}(\ell) \nn\\
&(b) \ \mathbf{V}(\sqrt{\mathcal{J}}:\ell)\subset\mathbf{V}(\ell) \nn
\end{align}
\end{lem}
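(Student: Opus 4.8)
The plan is to prove the two equivalences in three stages: first I would identify the variety $\mathbf{V}(\mathcal{J})$ with the set of initial states indistinguishable from $\sigma$, then read condition (a) directly off Definition \ref{defn6}, and finally convert (a) into the computational form (b) by a standard ideal-quotient argument.

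First I would pin down the meaning of $\mathbf{V}(\mathcal{J})$. A point $\xi$ lies in $\mathbf{V}(\mathcal{J})$ iff every generator of $\mathcal{J}=\phi(J)$ vanishes at $\xi$. Because the ascending chain $J_0\subset J_1\subset\cdots$ stabilizes at $J_N=J$ by Hilbert's basis theorem \cite{Cox1}, every Lie-derivative difference $L_{\tau_1}\cdots L_{\tau_r}\upsilon_c(\xi)-L_{\tau_1}\cdots L_{\tau_r}\upsilon_c(\eta)$ of arbitrary order $r$ already belongs to $J_N$; applying $\phi$, which substitutes $\sigma$ for $\eta$, shows each such difference evaluated against $\sigma$ lies in $\mathcal{J}$. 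Hence $\xi\in\mathbf{V}(\mathcal{J})$ forces $L_{\tau_1}\cdots L_{\tau_r}\upsilon_c(\xi)=L_{\tau_1}\cdots L_{\tau_r}\upsilon_c(\sigma)$ for all strings, and conversely; by Lemma \ref{lem1} this is exactly indistinguishability of $(\xi,\sigma)$. Thus $\mathbf{V}(\mathcal{J})$ is precisely the set of initial states indistinguishable from $\sigma$. Moreover, each generator of $\mathcal{J}$ has the form $p(\xi)-p(\sigma)$ and so vanishes at $\xi=\sigma$, giving $\mathbf{V}(\ell)=\{\sigma\}\subseteq\mathbf{V}(\mathcal{J})$ automatically.

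With this identification, condition (a) is immediate from Definition \ref{defn6}: global observability at $\sigma$ means no $\eta\neq\sigma$ is indistinguishable from $\sigma$, i.e. the indistinguishable set reduces to $\{\sigma\}$, which reads $\mathbf{V}(\mathcal{J})=\{\sigma\}=\mathbf{V}(\ell)$. Since the inclusion $\mathbf{V}(\ell)\subseteq\mathbf{V}(\mathcal{J})$ is automatic, (a) is equivalent to the single inclusion $\mathbf{V}(\mathcal{J})\subseteq\mathbf{V}(\ell)$.

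The last step is to show this inclusion is equivalent to (b). Here I would invoke the quotient--variety relation $\mathbf{V}(\sqrt{\mathcal{J}}:\ell)=\overline{\mathbf{V}(\mathcal{J})\setminus\mathbf{V}(\ell)}$ in the Zariski topology, valid for the radical ideal $\sqrt{\mathcal{J}}$, together with $\mathbf{V}(\sqrt{\mathcal{J}})=\mathbf{V}(\mathcal{J})$. Since $\mathbf{V}(\ell)=\{\sigma\}$ is closed and $\sigma\notin\mathbf{V}(\mathcal{J})\setminus\mathbf{V}(\ell)$, the closure $\overline{\mathbf{V}(\mathcal{J})\setminus\{\sigma\}}$ is contained in $\{\sigma\}$ iff $\mathbf{V}(\mathcal{J})\setminus\{\sigma\}$ is empty, i.e. iff $\mathbf{V}(\mathcal{J})\subseteq\mathbf{V}(\ell)$. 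This yields (b)$\ \Leftrightarrow\ \mathbf{V}(\mathcal{J})\subseteq\mathbf{V}(\ell)\ \Leftrightarrow\ $(a). The main obstacle I anticipate is justifying the quotient--variety equality over the real field $\mathbf{R}$ rather than an algebraically closed one: the inclusion $\overline{\mathbf{V}(\mathcal{J})\setminus\mathbf{V}(\ell)}\subseteq\mathbf{V}(\sqrt{\mathcal{J}}:\ell)$ holds in general, but the reverse inclusion and the precise role of the ordinary radical require the commutative-algebra machinery developed for polynomial observability in \cite{Kawano13}, which I would cite and adapt rather than reprove.
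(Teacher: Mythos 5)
Your overall strategy matches the paper's: identify $\mathbf{V}(\mathcal{J})$ with the set of initial states indistinguishable from $\sigma$ via Lemma \ref{lem1} and the stabilized chain $J_N=J$ (your first stage is in fact more careful than the paper, which simply asserts this identification), read condition (a) off Definition \ref{defn6}, and then relate (a) to (b) through ideal-quotient properties. Your direction (b) $\Rightarrow$ (a) is sound and coincides with the paper's use of the generally valid inclusion $\mathbf{V}(\sqrt{\mathcal{J}})\setminus\mathbf{V}(\ell)\subseteq\mathbf{V}(\sqrt{\mathcal{J}}:\ell)$.

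The gap is in the direction (a) $\Rightarrow$ (b). You route it through the equality $\mathbf{V}(\sqrt{\mathcal{J}}:\ell)=\overline{\mathbf{V}(\mathcal{J})\setminus\mathbf{V}(\ell)}$, and the half of that equality you actually need here is precisely the one that fails over $\mathbf{R}$: for example, with $I=\langle \xi_1^2+\xi_2^2\rangle$ (radical in $\mathbf{R}[\xi_1,\xi_2]$) and $\ell=\langle\xi_1,\xi_2\rangle$ one has $I:\ell=I$, so $\mathbf{V}(I:\ell)=\{0\}$ while $\overline{\mathbf{V}(I)\setminus\mathbf{V}(\ell)}=\emptyset$. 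Deferring to machinery "to be adapted" from \cite{Kawano13} therefore leaves this direction unestablished as written. The fix is far more elementary than the obstacle you anticipate, and it is what the paper does: since $\sqrt{\mathcal{J}}\subseteq\sqrt{\mathcal{J}}:\ell$ always holds, one gets $\mathbf{V}(\sqrt{\mathcal{J}}:\ell)\subseteq\mathbf{V}(\sqrt{\mathcal{J}})=\mathbf{V}(\mathcal{J})$, and condition (a) then immediately yields $\mathbf{V}(\sqrt{\mathcal{J}}:\ell)\subseteq\mathbf{V}(\ell)$; no Nullstellensatz or closure identity is needed for this direction. With that one substitution your argument is complete and equivalent to the paper's.
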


\begin{proof}
Consider condition (a). We prove sufficiency and necessity separately as follows.

\textit{Sufficiency:} Since $\mathbf{V}(\ell)=\{\sigma\}$, condition (a) is equivalent to $\mathbf{V}(\mathcal{J})=\{\sigma\}$.  The variety $\mathbf{V}(\mathcal{J})$ is the set of initial states that are indistinguishable from $\sigma$, which only contains $\sigma$. This means that $\sigma$ is distinguished. Thus, condition (a) implies that $\sigma$ is observable according to Definition \ref{defn6}. 

\textit{Necessity:} If the initial state $\sigma$ is observable, which means it is distinguishable according to Definition \ref{defn6}, the variety $\mathbf{V}(\mathcal{J})$ should be $\{\sigma\}$. Since $\mathbf{V}(\ell)=\{\sigma\}$, it holds that $\mathbf{V}(\mathcal{J})=\mathbf{V}(\ell)=\{\sigma\}$. Condition (a)  holds. Therefore, condition (a) is a necessary and sufficient condition for global observability of $\sigma$ according to Definition \ref{defn6}.

Consider condition (b). We prove that condition (a) is equivalent to condition (b).

\textit{Condition (a) implies condition (b): }Since for any two ideals $J, I\subset \mathbf{R}[\sigma]$, the following properties hold \cite{Cox1}
\begin{align}
    \mathbf{V}(\sqrt{J}:I)&\subset \mathbf{V}(\sqrt{J})\label{p1}\\
    \mathbf{V}(\sqrt{J})&=\mathbf{V}({J})\label{p2}\\
    \mathbf{V}(\sqrt{J})\setminus  \mathbf{V}(I)&\subset \mathbf{V}(\sqrt{J}:I)\label{p3}
\end{align}
Combining properties (\ref{p1}) and (\ref{p2}), one can deduce that $\mathbf{V}(\sqrt{\mathcal{J}}:\ell)\subset\mathbf{V}(\sqrt{\mathcal{J}})=\mathbf{V}(\mathcal{J})$. This means that condition (a) implies condition (b). 

\textit{Condition (b) implies condition (a): } Inversely, according to properties (\ref{p2}) and (\ref{p3}), it holds that 
\begin{align}
 \mathbf{V}(\mathcal{J})\setminus \mathbf{V}(\ell) = \mathbf{V}(\sqrt{\mathcal{J}})\setminus \mathbf{V}(\ell) \subset \mathbf{V}(\sqrt{\mathcal{J}}:\ell) \nn
\end{align}
If condition (b) holds, it yields that 
\begin{align}
 \mathbf{V}(\mathcal{J})\setminus \mathbf{V}(\ell) \subset \mathbf{V}(\ell)  \nn
 \end{align}
 This leads to $\mathbf{V}(\mathcal{J})=\mathbf{V}(\ell)$. Therefore, condition (b) implies condition (a) as well. This completes the proof.
\end{proof}

Based on the Gröbner basis and Hilbert's Nullstellensatz theories \cite{Cox1}, the following sufficient conditions can be deduced by Lemma \ref{lem1}.

\noindent\textbf{Corollary 1 : }The higher-order network system (\ref{aa2}) is globally observable at an
initial state $\sigma \in \mathbf{R}^n$ if one of the following conditions holds 
\begin{align}
   & (a) \mathbf{V}(\sqrt{\mathcal{J}}:\ell)=\emptyset \nn\\
   & (b)\mathcal{J}=\ell\nn\\
   & (c)\sqrt{\mathcal{J}}:\ell=\mathbf{R}[\xi]\nn\\
   & (d)1\in \sqrt{\mathcal{J}}:\ell \nn
\end{align}
\hspace*{\fill}$\Box$

\begin{rem} \label{rem4}By transferring system on higher-order networks into polynomial forms, conditions (a)-(b) in Corollary 1 are necessary and sufficient conditions to test global observability of higher-order network system  (\ref{aa2}). The basic idea is to check the common zeros of a finite set of equations given by generators of ideal $\mathcal{J}$ or $(\sqrt{\mathcal{J}}:\ell)$. By virtue of the properties of the polynomial ring and Hilbert's basis theorem, the global observability criteria are testable in finite steps different from the condition in Lemma \ref{lem1}, which makes it difficult to determine how many Lie derivatives should be calculated. 
Conditions (a)-(d) in Corollary 2 are sufficient conditions for testing the global observability of a higher-order network system (\ref{aa2}), which is easier to verify compared with those in Corollary 1. Conditions (c)-(d) can be verified using the Gröbner basis \cite{Cox1}.
\end{rem}

\subsection{Global Observability of Higher-order Networks associated with Hypergraph Structures}

The algebraic conditions for detecting the global observability of system \eqref{aa2} can be traced back to the structure of the hypergraph, specifically in relation to the system's tensor representation. In Lemma \ref{lem2} and Corollary 1, the primary challenge lies in the computation of ideal
$\mathcal{J}$, which is typically difficult to obtain.

However, in the case of the hypergraph, this difficulty is alleviated by expressing 
$\mathcal{J}$ in a generalized form using tensors in system \eqref{aa2}.
These tensors describe the group-wise connection behaviors of a hypergraph and can reveal its structure, thereby associating the observability criterion with hypergraph structures. On the other hand, by presenting a general expression of the algebraic condition based on system tensors, the computation is also simplified.

Firstly, we consider the following higher-order network system under symmetric hypergraph structure
\begin{align}\label{simplified}
            \left\{\begin{aligned} 
\dot{x}&=\mathbf{A}{x}^{k-1}\\
y&=\mathbf{C}{x}^{k} 
\end{aligned}\right.
\end{align}
where $\mathbf{A}$ and $\mathbf{C}$ are $n$-dimensional $k$-th order tensors.


Before going on, the following two lemmas are given to help investigate the observability based on the hypergraph structure. 


\begin{lem} [\!\cite{chen2022explicit}]
Given a one dimensional homogeneous polynomial function $f(x(t))$: $\mathbb{R}^n\rightarrow\mathbb{R}.$ It can be uniquely determined by $\mathbf{A}x^m$, where $m$ is the order of tensor $\mathbf{A}$, $\mathbf{A}$ is symmetric. 
\hspace*{\fill}$\Box$

Let $\mathbf{A}\in \mathbb{R}^{[m, n]}$. Then there is a unique symmetric tensor $\mathbf{B} \in \mathbb{R}^{[m, n]}$ such that for all $\mathbf{A}x^m = \mathbf{B}x^m$.
We call $\mathbf{B}$ the symmetrization of $\mathbf{A}$, and denote it $\operatorname{sym}(\mathbf{A})$. 
\end{lem}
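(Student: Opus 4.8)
The plan is to establish the classical one-to-one correspondence between homogeneous polynomials of degree $m$ in $n$ variables and symmetric $m$-th order tensors in $\mathbb{R}^{[m,n]}$, and then to exhibit an explicit symmetrization operator that realizes this correspondence for an arbitrary, possibly non-symmetric, tensor. Both the uniqueness in the first assertion and the uniqueness of $\operatorname{sym}(\mathbf{A})$ in the second will be consequences of the injectivity of the tensor-to-polynomial map.

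First I would expand $\mathbf{A}x^m=\sum_{i_1,\dots,i_m}\mathbf{A}_{i_1\cdots i_m}x_{i_1}\cdots x_{i_m}$ and collect terms according to the monomial $x^\alpha=x_1^{\alpha_1}\cdots x_n^{\alpha_n}$ indexed by a multi-index $\alpha$ with $|\alpha|=m$. Since the product $x_{i_1}\cdots x_{i_m}$ depends only on the multiset of its indices (its \emph{type} $\alpha$), the coefficient of $x^\alpha$ is the sum of $\mathbf{A}_{i_1\cdots i_m}$ over all index tuples of type $\alpha$. When $\mathbf{A}$ is symmetric, all such entries coincide, and since the number of tuples of a fixed type $\alpha$ equals the multinomial coefficient $\binom{m}{\alpha}=m!/(\alpha_1!\cdots\alpha_n!)$, the coefficient of $x^\alpha$ is precisely $\binom{m}{\alpha}\mathbf{A}_{i(\alpha)}$ for any representative tuple $i(\alpha)$ of type $\alpha$. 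This yields an explicit inverse: given any homogeneous $f(x)=\sum_{|\alpha|=m}c_\alpha x^\alpha$, setting $\mathbf{A}_{i_1\cdots i_m}=c_\alpha/\binom{m}{\alpha}$ (with $\alpha$ the type of the tuple) defines the \emph{unique} symmetric tensor with $\mathbf{A}x^m=f(x)$. Existence and uniqueness of the symmetric representation follow at once, which proves the first assertion.

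Next, for an arbitrary $\mathbf{A}\in\mathbb{R}^{[m,n]}$ I would define the symmetrization by averaging over the symmetric group $S_m$,
\[
\mathbf{B}_{i_1\cdots i_m}=\frac{1}{m!}\sum_{\pi\in S_m}\mathbf{A}_{i_{\pi(1)}\cdots i_{\pi(m)}},
\]
which is symmetric by construction. To verify $\mathbf{A}x^m=\mathbf{B}x^m$ for all $x$, I would observe that relabeling the summation indices of $\mathbf{A}x^m$ through any fixed $\pi\in S_m$ leaves the polynomial unchanged, because $x_{i_1}\cdots x_{i_m}$ is invariant under permutation of its factors; averaging these $m!$ identical expressions over $\pi$ produces exactly $\mathbf{B}x^m$. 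Uniqueness of such a symmetric $\mathbf{B}$ is then inherited directly from the injectivity of the symmetric-tensor-to-polynomial map established in the previous step, so $\mathbf{B}=\operatorname{sym}(\mathbf{A})$ is well defined.

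The bookkeeping is otherwise routine; the single point demanding care is the index-grouping argument, namely that the map sending a symmetric tensor to its polynomial $\mathbf{A}x^m$ is injective. This is exactly where the multinomial counting of index tuples of a given type is indispensable, since it is what lets one recover each symmetric entry $\mathbf{A}_{i(\alpha)}$ unambiguously from the coefficient $c_\alpha$. Once that counting is pinned down, both uniqueness claims and the correctness of the averaging operator follow without further difficulty.
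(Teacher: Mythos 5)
Your proposal is correct, and the paper itself offers no proof of this lemma (it is imported from the cited reference \cite{chen2022explicit}); your argument is the standard one and is exactly what the paper implicitly relies on, since the explicit averaging formula you construct, $\mathbf{B}_{i_1\cdots i_m}=\frac{1}{m!}\sum_{\pi\in S_m}\mathbf{A}_{i_{\pi(1)}\cdots i_{\pi(m)}}$, coincides with the paper's own symmetrization formula in the subsequent lemma (which averages over the \emph{distinct} permutations $\Sigma(j_1,\ldots,j_m)$ — the two agree because each distinct index tuple occurs with equal multiplicity $m!/|\Sigma(j_1,\ldots,j_m)|$ in the $S_m$-average). The key injectivity step via the multinomial count $\binom{m}{\alpha}$ is handled correctly, so both uniqueness claims go through.
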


\begin{lem}[Symmetrization of tensors] \label{lem3}
Define a permutation of indexes $j_1, \cdots, j_m$ if $\left\{k_1, \cdots, k_m\right\}=\left\{j_1, \cdots, j_m\right\}$ as $\left(k_1, \cdots, k_m\right)$. Denote permutation operation by $\sigma$, i.e., $\sigma\left(j_1, \cdots, j_m\right)=$ $\left(k_1, \cdots, k_m\right)$. Denote the set of all distinct permutations of an index set $\left(j_1, \cdots, j_m\right)$ by $\Sigma\left(j_1, \cdots, j_m\right)$. Note that $\left|\Sigma\left(j_1, \cdots, j_m\right)\right|$, the cardinality of $\Sigma\left(j_1, \cdots, j_m\right)$, is variant for different index sets. For example, if $j_1=\cdots=j_m$, then $\left|\Sigma\left(j_1, \cdots, j_m\right)\right|=1$; but if all of $j_1, \cdots, j_m$ are distinct, $\left|\Sigma\left(j_1, \cdots, j_m\right)\right|=m!$.
Let ${\mathbf{A}} \in \mathbb{R}^{[m, n]}$. Then, the symmetrization of tensor $\mathbf{A}$ is
\begin{equation}\nn
   {\operatorname{sym}({\mathbf{A}})}_{j_1 \cdots j_m}=\frac{\sum_{\sigma \in \Sigma\left(j_1, \cdots, j_m\right)} {\mathbf{A}}_{\sigma\left(j_1, \cdots, j_m\right)}}{\left|\Sigma\left(j_1, \cdots, j_m\right)\right|} 
\end{equation}
\end{lem}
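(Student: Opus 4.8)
The plan is to verify the displayed formula by checking the two properties that characterize $\operatorname{sym}(\mathbf{A})$ according to the preceding lemma, namely that the right-hand side is a symmetric tensor and that it induces the same degree-$m$ form as $\mathbf{A}$. Because that lemma guarantees such a symmetric tensor is \emph{unique}, confirming both properties forces the displayed expression to coincide with $\operatorname{sym}(\mathbf{A})$, which is exactly the claim.

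Write $\mathbf{B}$ for the right-hand side, so that
\[
\mathbf{B}_{j_1 \cdots j_m} := \frac{1}{\left|\Sigma(j_1, \cdots, j_m)\right|}\sum_{\sigma \in \Sigma(j_1, \cdots, j_m)} \mathbf{A}_{\sigma(j_1, \cdots, j_m)}.
\]
The first step is symmetry. The key point is that the index set $\Sigma(j_1, \cdots, j_m)$, and hence its cardinality, depends only on the multiset $\{j_1, \cdots, j_m\}$ and not on the order of its entries. Consequently, for any reordering $(j_{\pi(1)}, \cdots, j_{\pi(m)})$ the numerator and denominator defining $\mathbf{B}$ are unchanged, so $\mathbf{B}_{j_{\pi(1)} \cdots j_{\pi(m)}} = \mathbf{B}_{j_1 \cdots j_m}$ and $\mathbf{B}$ is symmetric.

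The second step is the form identity $\mathbf{A}x^m = \mathbf{B}x^m$. I would expand $\mathbf{B}x^m = \sum_{j_1, \cdots, j_m} \mathbf{B}_{j_1 \cdots j_m} x_{j_1}\cdots x_{j_m}$ and group the index tuples by their common multiset. For a fixed multiset the monomial $x_{j_1}\cdots x_{j_m}$ is the same, exactly $\left|\Sigma\right|$ distinct tuples realize it, and each carries the identical symmetric value $\mathbf{B}_{j_1\cdots j_m} = \frac{1}{|\Sigma|}\sum_{\sigma}\mathbf{A}_{\sigma}$. Hence the coefficient attached to that monomial in $\mathbf{B}x^m$ collapses to $\sum_{\sigma \in \Sigma}\mathbf{A}_{\sigma}$, which is precisely the coefficient of the same monomial in $\mathbf{A}x^m$; summing over all multisets gives equality of the two forms, and uniqueness then finishes the proof.

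I expect the combinatorial bookkeeping behind ``exactly $|\Sigma|$ tuples realize each multiset'' to be the only delicate point. Concretely, if index $a$ appears with multiplicity $c_a$ in $(j_1, \cdots, j_m)$, then $m!/\prod_a c_a!$ distinct reorderings exist, so $|\Sigma| = m!/\prod_a c_a!$; this is the identity that must be handled carefully across all degeneracy patterns, with the two extremes $|\Sigma| = 1$ (all indices equal) and $|\Sigma| = m!$ (all indices distinct) from the statement serving as checks. An equivalent and perhaps tidier route is to recognize the displayed average over $\Sigma$ as the classical full average $\frac{1}{m!}\sum_{\pi \in S_m}\mathbf{A}_{j_{\pi(1)} \cdots j_{\pi(m)}}$ over the symmetric group $S_m$, the two agreeing precisely through the same multiplicity count $\frac{\prod_a c_a!}{m!} = \frac{1}{|\Sigma|}$; once that reduction is made, symmetry and form-preservation of the full symmetrizer are standard.
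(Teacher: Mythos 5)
Your proposal is correct, and in fact it supplies more than the paper does: the paper states Lemma 3 without any proof, treating the averaging formula as a known fact that accompanies the uniqueness statement of the preceding lemma (cited from the reference on explicit solutions of homogeneous polynomial systems). Your two-step verification is exactly the right way to justify it: the average over $\Sigma(j_1,\dots,j_m)$ depends only on the underlying multiset, so the resulting tensor $\mathbf{B}$ is symmetric; and grouping the expansion of $\mathbf{B}x^m$ by multiset shows each monomial's coefficient collapses to $\sum_{\sigma\in\Sigma}\mathbf{A}_\sigma$, matching the coefficient in $\mathbf{A}x^m$, after which the uniqueness guaranteed by the preceding lemma forces $\mathbf{B}=\operatorname{sym}(\mathbf{A})$. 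Your combinatorial bookkeeping is also sound: with multiplicities $c_a$ one has $|\Sigma|=m!/\prod_a c_a!$, which is consistent with the two extremes quoted in the statement, and your observation that the average over distinct permutations coincides with the full average $\frac{1}{m!}\sum_{\pi\in S_m}\mathbf{A}_{j_{\pi(1)}\cdots j_{\pi(m)}}$ (each distinct arrangement being hit by exactly $\prod_a c_a!$ elements of $S_m$) gives a clean alternative route in which symmetry and form-preservation of the full symmetrizer are immediate. No gaps.
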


Based on Lemmas \ref{lem2}-\ref{lem3}, the following theorem can be derived.

\begin{thm} \label{thm:symO}
Higher-order network system \eqref{simplified} is globally observable at an initial state $\sigma \in {\mathbf{R}}^n$ if and only if condition (a) or (b) in Lemma \ref{lem2} holds with
$\mathcal{J}=J_N$. $\mathcal{J}$ is
\begin{align}
\mathcal{J}= & \langle   \{{\Pi}_{i=1}^{\tilde k} (i(k-1)+1) E_{\tilde{k}}  \xi^{\tilde{k}(k-1)+1}\nn\\
&-{\Pi}_{i=1}^{\tilde k} (i(k-1)+1) E_{\tilde{k}}  \sigma^{\tilde{k}(k-1)+1}|\tilde k\in \ell_N \}\rangle 
\end{align}
and $E_{\tilde{k}}:=\operatorname{sym}(\cdots\operatorname{sym}(\mathbf C\circ \mathbf A)\cdots\circ \mathbf A)$.
 \end{thm}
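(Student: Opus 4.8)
The plan is to specialize the general machinery of Lemma \ref{lem2} and the ascending-chain construction culminating in \eqref{JN} to the input-free, single-output system \eqref{simplified}, and then to evaluate the generators of the stabilized ideal $\mathcal J$ explicitly in tensor form. Since \eqref{simplified} carries no input, the only admissible vector field is $g_0=\mathbf A x^{k-1}$ and the only output function is $v=\mathbf C x^{k}$ (here $q=1$, $m=0$, $w_1=0$). Consequently every generator appearing in the ideals $J_{r,i}$ collapses to an iterated Lie derivative of the single scalar field $v$ along the single field $g_0$, namely $L_{g_0}^{\tilde k}v$, so that $J_{\tilde k}=\langle L_{g_0}^{\,j}v(\xi)-L_{g_0}^{\,j}v(\eta)\mid 0\le j\le\tilde k\rangle$. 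By Hilbert's basis theorem and the ascending chain condition the chain stabilizes at some $N$, giving $\mathcal J=\phi(J_N)$ exactly as in \eqref{JN}; the equivalence with conditions (a)/(b) is then inherited directly from Lemma \ref{lem2}. Hence the real content of the theorem is the closed-form evaluation of $L_{g_0}^{\tilde k}v$.

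To obtain that closed form I would argue by induction on $\tilde k$, maintaining the invariant that $L_{g_0}^{\tilde k}v$ equals a scalar constant times $E_{\tilde k}\,x^{\tilde k(k-1)+1}$ with $E_{\tilde k}$ symmetric and $E_{\tilde k}=\operatorname{sym}(\cdots\operatorname{sym}(\mathbf C\circ\mathbf A)\cdots\circ\mathbf A)$. The inductive step rests on two facts. First, differentiating a homogeneous form given by a symmetric tensor is clean: by symmetry $\nabla(E_{\tilde k}x^{d})=d\,E_{\tilde k}x^{d-1}$ (contraction on one fewer mode), which pulls the current degree out as a scalar factor---this is precisely the mechanism producing the telescoping product $\Pi_{i=1}^{\tilde k}(i(k-1)+1)$. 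Second, contracting the resulting gradient against $g_0=\mathbf A x^{k-1}$ glues one fresh copy of $\mathbf A$ onto the running tensor along its output (tail) mode, realizing one additional application of $\circ\mathbf A$ and advancing the exponent. After each contraction the intermediate tensor is generally non-symmetric, so I would re-symmetrize it; this is legitimate because, by the uniqueness lemma \cite{chen2022explicit}, a homogeneous polynomial is unchanged when its coefficient tensor is replaced by $\operatorname{sym}(\cdot)$, so the invariant---and hence the ideal generated---is preserved. Carrying the induction through yields the stated expression $L_{g_0}^{\tilde k}v=\Pi_{i=1}^{\tilde k}(i(k-1)+1)\,E_{\tilde k}x^{\tilde k(k-1)+1}$.

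Substituting this expression at $\xi$ and at $\sigma$ into the generators of $J_N$, and applying $\phi$ (which replaces $\eta$ by $\sigma$), produces exactly the stated generating set for $\mathcal J$, indexed by $\tilde k$ ranging over the stabilization set $\ell_N$. Combining with Lemma \ref{lem2} then closes the equivalence.

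I expect the induction of the second paragraph to be the principal obstacle. One must (i) fix an unambiguous definition of the contraction $\circ$---which mode of $\mathbf A$ is glued to which mode of the running tensor---so that it interacts correctly with the gradient; (ii) verify that re-symmetrizing at each step does not disturb the factors already accumulated; and (iii) track the degree/order counting and the scalar constants precisely enough to recover the exact product $\Pi_{i=1}^{\tilde k}(i(k-1)+1)$ and exponent $\tilde k(k-1)+1$, where the index alignment between the base case $v=\mathbf C x^{k}$ and the tensor chain $E_{\tilde k}$ must be pinned down carefully. The symmetry hypotheses on $\mathbf A$ and $\mathbf C$, together with the uniqueness lemma, are exactly what make each of these steps tractable: without symmetry the gradient would not collapse to a single scalar multiple and the inductive form would fail to close.
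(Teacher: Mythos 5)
Your proposal follows essentially the same route as the paper's proof: you specialize the ideal chain of Lemma \ref{lem2} to the single drift field $g_0=\mathbf A x^{k-1}$ and single output $v=\mathbf C x^{k}$, evaluate $L_{g_0}^{\tilde k}v$ by induction using the contraction ``$\circ$'', the degree-pulling property of symmetric tensors, and re-symmetrization justified by the uniqueness lemma, and then invoke Hilbert's basis theorem to stabilize the chain at $J_N$. The delicate bookkeeping you flag (the exact scalar product and the exponent $\tilde k(k-1)+1$) is indeed where care is needed, but your overall argument matches the paper's.
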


 \begin{proof}    
By simple calculations, we can confirm that if $\mathbf A$ is super-symmetric, then the derivative of $f(x)=\mathbf Ax^m$ is $\frac{df(x)}{dx}=m\mathbf Ax^{m-1}$.
Define $\tau:=\mathbf{A}{x}^{k-1}$ and $v:=\mathbf{C}{x}^{k}$. 
It holds that 
\begin{align}
L_\tau v
=&\frac{\partial (\mathbf{C}{x}^{k})}{\partial x} \mathbf{A}{x}^{k-1} \nn\\
=&\sum_{i=1}^{n}(k\mathbf Cx^{k-1})_i(\mathbf Ax^{k-1})_i \nn\\
=&k\sum_{i=1}^{n}\left(\sum_{i_1,\cdots, i_{k-1}=1}^{n}\mathbf C_{i_1\cdots i_{k-1}i}x_{i_1}\cdots x_{i_{k-1}}\right)\nn\\
&\times \left(\sum_{j_1,\cdots, j_{k-1}=1}^n \mathbf A_{j_1\cdots j_{k-1}i}x_{j_1}\cdots x_{j_{k-1}}\right)\nn
\end{align}

Define a new type of tensor-tensor multiplication operator as ``$\circ$" such that 
\begin{align}
(X\circ Y)_{i_1\cdots i_{k-1} j_1\cdots j_{k-1}}:=\sum_{i=1}^{n}X_{i_1\cdots i_{k-1}i}Y_{j_1\cdots j_{k-1}i} \label{multi}
\end{align}
where $X$ and $Y$ are $n$-dimensional tensors, i.e., $i_1,\cdots, i_{k-1}, j_1,\cdots, j_{k-1},i\in \{1,\cdots, n\}$.

Therefore, according to \eqref{multi}, it holds that
\begin{align}
L_\tau v
=&k\sum_{i=1}^{n}  \Bigg(\sum_{i_1,\cdots, i_{k-1}=1}^{n}\sum_{j_1,\cdots, j_{k-1}=1}^n E_{i_1\cdots i_{k-1}i j_1\cdots j_{k-1}} \nn\\
&\times x_{i_1}\cdots x_{i_{k-1}}x_{j_1}\cdots x_{j_{k-1}}\Bigg) \nn\\
=& kE_1 x^{2k-2} \nn
\end{align}
where 
\begin{align}
{(E_1)}_{i_1\cdots i_{k-1} j_1\cdots j_{k-1}}:=\sum_{i=1}^{n}\mathbf C_{i_1\cdots i_{k-1}i}\mathbf A_{j_1\cdots j_{k-1}i} \nn
\end{align}

According to Lemmas 2-3, $E_1$ can be symmetrized by $\operatorname{sym}(E_1)$.

Furthermore, according to Lemma \ref{lem1} and \eqref{multi}, one can obtain 
\begin{align}
L_\tau L_\tau v= k(2k-2)E_2 x^{3k-3} 
\end{align}
where $E_2:=\operatorname{sym}(E_1)\circ \mathbf A$.


Therefore, by induction, one can conclude that 
\begin{align}
\overbrace{{L_\tau} \cdots {L_\tau}}^{\tilde{k} \ \mathbf{times}} v=
k{\Pi}_{i=2}^{\tilde k}( i(k-1) )E_{\tilde{k}}  x^{\tilde{k}(k-1)} \nn
\end{align}
where $E_{\tilde{k}} :=\operatorname{sym}(E_{\tilde{k}-1})\circ \mathbf A$. 


Define $E_0:=C$ and $\ell_t:=\{0,\cdots, t\}$. Thus, for simplified system \eqref{simplified}, consider the initial state pair $(\xi,\eta)$. 
$J_{t}$ can be derived as
\begin{align}
J_{t}= & \langle   \{k{\Pi}_{i=2}^{\tilde k} (i(k-1)) E_{\tilde{k}}  \xi^{\tilde{k}(k-1)}\nn\\
&-k{\Pi}_{i=2}^{\tilde k}(i(k-1)) E_{\tilde{k}}  \eta^{\tilde{k}(k-1)}|\tilde k\in \ell_t \}\rangle 
\end{align}

Fix $\eta$ as an initial state $\sigma$. According to Lemma \ref{lem2}, there exists $N$ such that $v(\mathcal{J})=\{\sigma\}$
\begin{align}
\mathcal{J}= & \langle   \{{\Pi}_{i=1}^{\tilde k} (i(k-1)+1) E_{\tilde{k}}  \xi^{\tilde{k}(k-1)+1}\nn\\
&-{\Pi}_{i=1}^{\tilde k} (i(k-1)+1) E_{\tilde{k}}  \sigma^{\tilde{k}(k-1)+1}|\tilde k\in \ell_N \}\rangle 
\end{align}
and $E_{\tilde{k}}:=\operatorname{sym}(\cdots\operatorname{sym}(\mathbf C\circ \mathbf A)\cdots\circ \mathbf A)$.
 \end{proof}

\begin{rem} \label{rem5} This tensor-based formulation provides a clearer and more structured approach, simplifying the computation process and making the observability conditions more accessible. For system (\ref{aa2}), the definition of ideal $\mathcal{J}$  can be obtained with the same process as the simplified system \eqref{simplified} via the multiplication \eqref{multi} and Lemmas 2-3. Specifically, by replacing $\mathbf{A}_k$ with $\sum_{k=1}^{c}\mathbf{A}_k{x}^{k-1}$ and $\sum_{k=1}^{c}\mathbf{B}_{k,j}{x}^{k-1}$ for $j=1,\cdots,m$, and replacing $\mathbf{C}_{i,k}$ with $\sum_{k=1}^{c}\mathbf{C}_{k}{x}^{k}$ and $\sum_{k=1}^{c}\mathbf{D}_{i,k,l}{x}^{k}$ for $l=1,\cdots,w_i$, it is evident that the conclusion of Theorem \ref{thm:symO} can also be extended to more general forms of the system with the same procedure.
 \end{rem}

{\color{blue}
In the following, to further investigate the association between non-uniform hypergraph structure and the observability criteria, consider higher-order networks under non-uniform directed hypergraphs whose dynamics are
\begin{align}\label{eq:sys_multiA}
 \left\{\begin{aligned} 
\dot{x} &= \sum_{k=2}^{c}\mathbf{A}_k x^{k-1}  \\
y_i &= \sum_{k=1}^{c}\mathbf{C}_{i,k} x^{k}
\end{aligned}\right.
\end{align}
where $i=1,\ldots,q$, $\mathbf{A}_k\in\mathbb{R}^{n\times n^{k-1}}$ 
and $\mathbf{C}_{i,k}\in\mathbb{R}^{1\times n^{k}}$ 
are \emph{non-symmetric} tensors representing
the $k$-th order dynamic and output couplings, respectively.
Each $\mathbf{A}_k$ defines a directed $k$-uniform hyperedge family,
and each $\mathbf{C}_{i,k}$ corresponds to a directed output hyperedge
anchored at the $i$-th sensor node.

\begin{thm} \label{thm:generalO}
Higher-order network system \eqref{eq:sys_multiA} is globally observable at an initial state $\sigma \in {\mathbf{R}}^n$ if and only if condition (a) or (b) in Lemma \ref{lem2} holds with
$\mathcal{J}=J_N$. $\mathcal{J}$ is
\begin{align}
\!\!\mathcal{J}\!=\!\!\langle
\{
E_r(\xi^{\otimes d_r}&)-E_r(\sigma^{\otimes d_r})
\mid 
E_r= \nn
\\&\mathbf{C}^{(i,k_1)}_{(s_1)}\!\circ\!
\mathbf{A}_{k_2}\!\circ\!\cdots\!\circ\!\mathbf{A}_{k_r},
~
r=0,\ldots,N
\}
\rangle
\label{eq:J_multiA}
\end{align}
\end{thm}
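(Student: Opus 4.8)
The plan is to follow the same four-stage template used for Theorem \ref{thm:symO}, adapting it to the non-uniform, multi-output, non-symmetric setting of \eqref{eq:sys_multiA}. First I would exploit that \eqref{eq:sys_multiA} carries no input, so in Lemma \ref{lem1} the only admissible vector field is the drift $g_0 = \sum_{k=2}^{c}\mathbf{A}_k x^{k-1}$, and the indistinguishability test collapses to requiring $L_{g_0}^{\,s} y_i(\xi) = L_{g_0}^{\,s} y_i(\eta)$ for every output $i \in \{1,\dots,q\}$ and every $s \ge 0$. Consequently the ascending chain $J_0 \subset J_1 \subset \cdots$ of \eqref{JN} is generated purely by differences of iterated drift–Lie–derivatives of the $q$ outputs, and the whole argument reduces to computing $L_{g_0}^{\,s} y_i$ in closed tensor form.

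Next I would evaluate these iterated derivatives by induction on $s$, exactly as in the proof of Theorem \ref{thm:symO} but keeping the order indices free. By the symmetrization Lemma \ref{lem3} I may replace each output tensor $\mathbf{C}_{i,k_1}$ by its symmetrization without changing the polynomial $\mathbf{C}_{i,k_1}x^{k_1}$, so that $\partial_x(\mathbf{C}_{i,k_1}x^{k_1}) = k_1\,\mathrm{sym}(\mathbf{C}_{i,k_1})x^{k_1-1}$. Contracting this gradient against $g_0$ through the tail-index operator ``$\circ$'' of \eqref{multi} turns each pair $(k_1,k_2)$ into a term proportional to $(\mathbf{C}_{i,k_1}\circ \mathbf{A}_{k_2})\,x^{d}$ with $d = k_1 + k_2 - 2$. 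Re-symmetrizing and iterating $s$ times then yields
\[
L_{g_0}^{\,s} y_i \;=\; \sum_{k_1,\dots,k_{s+1}} c_{k_1\cdots k_{s+1}}\; E\, x^{\,d},
\]
with $E = \mathrm{sym}(\cdots \mathrm{sym}(\mathbf{C}_{i,k_1}\circ \mathbf{A}_{k_2})\cdots \circ \mathbf{A}_{k_{s+1}})$ and $d = k_1 + \sum_{j=2}^{s+1}(k_j-2)$, which is precisely the composite contraction tensor $E_r$ of the statement (with $r = s+1$). Collecting all such terms over $i$ and over $s \le N$, where $N$ is the stabilization index furnished by Hilbert's basis theorem and the ascending chain condition exactly as in \eqref{JN}, supplies every generator appearing in \eqref{eq:J_multiA}; Lemma \ref{lem2} then converts the variety equality into the stated global-observability criterion, in the spirit of Remark \ref{rem5}.

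The delicate step, and the one I expect to be the main obstacle, is showing that the \emph{individual} contraction tensors $E_r$ are the correct generators rather than only the aggregated Lie derivatives. Because the dynamics is non-uniform, a single application of $L_{g_0}$ produces a sum of homogeneous pieces of different degrees (one for each order $k_j$ drawn from $g_0$), so $L_{g_0}^{\,s} y_i$ is genuinely inhomogeneous. To extract each $E_r(\xi^{\otimes d_r}) - E_r(\sigma^{\otimes d_r})$ as a member of $J_N$ I would invoke the linear independence of monomials of distinct degree: grouping the expansion above by total degree $d$ lets one peel off each homogeneous component, and these components coincide with the claimed generators up to the nonzero combinatorial constants $c_{k_1\cdots k_{s+1}}$. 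The one point demanding genuine care is when two different order-sequences $(k_2,\dots,k_{s+1})$ produce the same total degree $d$; there the homogeneous component is a \emph{sum} of several $E_r x^{d}$, so I must argue that varying $s$ and $i$—and exploiting that the $\circ$-compositions land in tensor slots with distinguishable index structure—generates each summand separately. This would confirm $\langle\,\{E_r(\xi^{\otimes d_r})-E_r(\sigma^{\otimes d_r})\}\,\rangle = J_N = \mathcal{J}$ and close the equivalence, the remaining combinatorial coefficients and the commutation of symmetrization with the iteration being routine.
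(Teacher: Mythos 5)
Your proposal is correct in substance and follows the same overall strategy as the paper: express the iterated drift Lie derivatives of each output as ordered tensor contractions of $\mathbf{C}_{i,k_1}$ with copies of $\mathbf{A}_{k_j}$, invoke Hilbert's basis theorem for stabilization of the chain $J_0\subset J_1\subset\cdots$, and hand the variety test to Lemma \ref{lem2}. The one genuine methodological difference is in how you handle the gradient of a non-symmetric output tensor: you symmetrize $\mathbf{C}_{i,k_1}$ via Lemmas 2--3 (the Theorem \ref{thm:symO} template), so that $\partial_x(\mathbf{C}_{i,k}x^{k})=k\,\mathrm{sym}(\mathbf{C}_{i,k})x^{k-1}$ and each slot contributes identically, whereas the paper deliberately keeps the tensors non-symmetric and instead sums over the contracted slot by introducing the index-shifted tensors $\mathbf{C}^{(i,k)}_{(s)}$ together with a direction-preserving mode contraction. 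The two routes produce the same polynomials $L_f^{r}y_i$ and hence the same ideal, but the paper's version preserves the per-slot, per-path decomposition that underlies the structural interpretation (each ordered combination as a directed propagation path), while yours trades that for the cleaner combinatorial prefactors of the symmetric calculus; your degree bookkeeping $d=k_1+\sum_{j}(k_j-2)$ is in fact the consistent one. Finally, the ``delicate step'' you flag --- whether the individual contractions $E_r(\xi^{\otimes d_r})-E_r(\sigma^{\otimes d_r})$, rather than only their degree-aggregated sums $L_f^{r}y_i(\xi)-L_f^{r}y_i(\eta)$ arising from Lemma \ref{lem1}, legitimately serve as generators --- is a real subtlety, and the paper's own proof does not address it: it passes from \eqref{eq:Lfr_multiA} to the generator list in \eqref{al:J_multiA} without justification. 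Your degree-separation argument handles distinct total degrees, and your honesty about the unresolved same-degree collisions identifies a point on which your writeup is actually more careful than the paper's.
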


\begin{proof}
According to \eqref{eq:sys_multiA},
output $y_i$ satisfies
\[
y_i = \sum_{k=1}^{c}
\sum_{j_1,\ldots,j_k}
C_{i,k}(j_1,\ldots,j_k)
x_{j_1}\cdots x_{j_k}.
\]
Taking the Lie derivative of $y_i$ along $\dot{x}=\sum_{m=2}^{c}\mathbf{A}_m x^{m-1}$ gives
\begin{align}
L_f y_i
=& \sum_{k=1}^{c}\sum_{s=1}^{k}
   \sum_{j_1,\ldots,j_k}
   C_{i,k}(j_1,\ldots,j_k)
   x_{j_1}\!\cdots\! \dot{x}_{j_s}\!\cdots\! x_{j_k} \notag\\
=& \sum_{k=1}^{c}\sum_{s=1}^{k}
   \sum_{j_1,\ldots,j_k}\sum_{m=2}^{c}
   \sum_{p_1,\ldots,p_{m-1}}
   \!\!\!\!\!\!C_{i,k}(j_1,\ldots,j_k)
   A_{j_s p_1\ldots p_{m-1}}\nn\\
  &\times x_{j_1}\!\cdots\!x_{j_{s-1}}
   x_{p_1}\!\cdots\!x_{p_{m-1}}
   x_{j_{s+1}}\!\cdots\!x_{j_k}
\end{align}
For each index $s$, define an \emph{index-shifted tensor} as
\begin{align}
\mathbf{C}^{(i,k)}_{(s)}(j_1,\ldots,j_{s-1},&t,j_{s+1},\ldots,j_k)
:=\nn\\
&C_{i,k}(j_1,\ldots,j_{s-1},t,j_{s+1},\ldots,j_k)
\end{align}
Thus, the compacted form is
\begin{equation}
L_f y_i
= \sum_{k=1}^{c}\sum_{s=1}^{k}\sum_{m=2}^{c}
  \big(\mathbf{C}^{(i,k)}_{(s)}\circ \mathbf{A}_m\big)
  x^{k+m-2}
\label{eq:Lfy_multiA}
\end{equation}
where ``$\circ$'' denotes mode contraction preserving the index direction.

By iterating~\eqref{eq:Lfy_multiA},
we obtain for the $r$-th Lie derivative
\begin{equation}
L_f^{r} y_i
= \sum_{\substack{
(k_1,\ldots,k_r)\in\{2,\ldots,c\}^r\\
(s_1,\ldots,s_r)\in\mathcal{I}}}
  \mathbf{C}^{(i,k_1)}_{(s_1)}\circ
  \mathbf{A}_{k_2}\circ\cdots\circ\mathbf{A}_{k_r}\,
  x^{d_r}
\label{eq:Lfr_multiA}
\end{equation}
with $d_r = k_1+\sum_{j=2}^{r}(k_j-1)$
and $\mathcal{I}$ collecting admissible index positions. Hence, the ideal
\begin{align}
\!\!J=\!\!\langle
\{
E_r(\xi^{\otimes d_r}&)-E_r(\eta^{\otimes d_r})
\mid 
E_r= \nn
\\&\mathbf{C}^{(i,k_1)}_{(s_1)}\!\circ\!
\mathbf{A}_{k_2}\!\circ\!\cdots\!\circ\!\mathbf{A}_{k_r},
~
r=0,\ldots,N
\}
\rangle
\label{al:J_multiA}
\end{align}
contains finite generators since $J_N=J_{N+1}=J$ by
Hilbert’s basis theorem. 
\end{proof}

\begin{rem} \textit{(Structural Interpretation of the algebraic observability criteria on Hypergraphs)}
For a general higher-order network system under non-uniform directed hypergraphs, Theorem \ref{thm:generalO} clearly indicates that each item corresponds to a directed higher-order propagation path.
Each tensor $\mathbf{A}_k$ defines a family of directed hyperedges $(i_0,i_1,\ldots,i_{k-1})$ with $k$ nodes in system dynamics hypergraph $\mathcal{G}_f$,
and each $\mathbf{C}_{i,k}$ defines a $k$-node hyperedge in output hypergraph $\mathcal{G}_y$ connecting
state nodes to output node~$i$.
The ordered tensor contractions in~\eqref{eq:Lfr_multiA}
correspond to \emph{directed paths}
on this hypergraph.
\textit{Each ordered combination
$(\mathbf{C}^{(i,k_1)}_{(s_1)},\mathbf{A}_{k_2},\ldots,\mathbf{A}_{k_r})$
represents a distinct directed higher-order propagation path
from state nodes to the output $y_i$.}
Mathematically, $\mathbf{C}^{(i,k_1)}_{(s_1)}\!\circ\!
\mathbf{A}_{k_2}\circ\cdots\circ\mathbf{A}_{k_r}$ are equivalent to the path algebra of hypergraphs.
\end{rem}

\section{Insights on Algebraic Observability and Structural Designability}
\label{sec:insights}

In this section, based on the algebraic observability criteria established in Theorems~\ref{thm:symO}-\ref{thm:generalO}, we focus on creating the connection between algebraic global observability and designability of both structure and weight for hypergraphs.  
Several important insights are obtained in the following.


\subsection{Algebraic Criterion Overcomes Local Rank Deficiency}

For conventional nonlinear systems, the \emph{observability rank condition} only evaluates local information around an initial state. Once the observability matrix is rank-deficient at a point, this method cannot determine whether the system is observable or not. In contrast, our algebraic observability criterion completely breaks this limitation and exhibits a natural advantage when dealing with such systems. Taking the general higher-order network system \eqref{eq:sys_multiA} as an example, the following is an explanation of this insight. 

Under the observability rank-deficient case:
\begin{itemize}
\item
The ideal-based observability conditions in Theorems~\ref{thm:symO}--\ref{thm:generalO} remain not only sufficient but also necessary.
This ensures that global observability of a higher-order network system can still be conclusively determined even when the classical rank test fails.
\item
Rank deficiency implies that the polynomial ideal chain $J_0, J_1,\cdots, J_N$ terminates early, with $N<nq$ and possibly even $N=1$. Hence, global observability can be checked in only a few steps—or even a single step. For instance, if $L_f y=\sum_{k=1}^{c}\sum_{s=1}^{k}\sum_{m=2}^{c}
  \big(\mathbf{C}^{(k)}_{(s)}\circ \mathbf{A}_m\big)
  x^{k+m-2}=0$, the only remaining test is whether $\sum_{k=1}^{c}\mathbf{C}_{k} \xi^{k}- \sum_{k=1}^{c}\mathbf{C}_{k} \sigma^{k}=0$ holds when $\xi=\sigma$.
\end{itemize}


\textit{Example.}
Consider the observability at the initial state $\sigma=0$ for higher-order dynamics $\dot x= \mathbf{A}_3 x^2, y=\mathbf{C}_2x^2$ with
$(\mathbf{A}_3)_{231}=-\frac{1}{2}$, $(\mathbf{A}_3)_{132}=1$, $(\mathbf{A}_3)_{123}=-\frac{1}{2}$, and any other $(\mathbf{A}_3)_{i_1i_2i_3}=0$. For output hypergraph, $(\mathbf{C}_2)_{11}=(\mathbf{C}_2)_{22}=(\mathbf{C}_2)_{33}=1$, and any other $(\mathbf{C}_2)_{ij}=0$.

\begin{figure}[h]
        \centering
        \includegraphics[height=5cm]{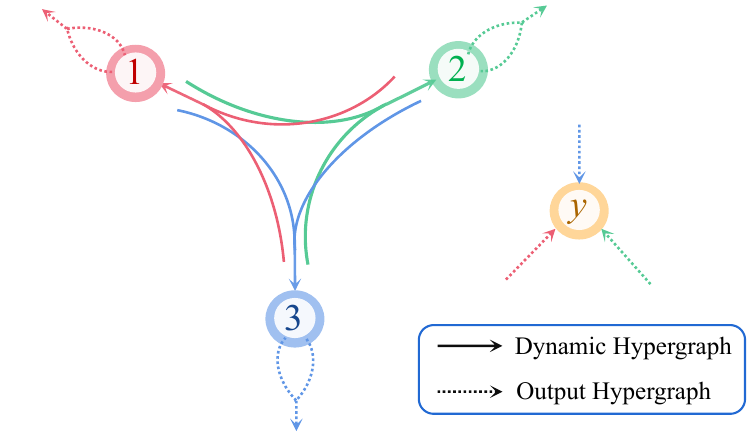}
        \caption{Example of a higher-order network with deficient observability matrix rank}
        \label{example1}
\end{figure}

Since $L_f y=\sum_{k=1}^{c}\sum_{s=1}^{k}\sum_{m=2}^{c}
  \big(\mathbf{C}^{(k)}_{(s)}\circ \mathbf{A}_m\big)
  x^{k+m-2}=2x_1(-\frac{1}{2}x_2x_3)+2x_2x_1x_3+2x_3(-\frac{1}{2}x_1x_2)=0$, the classical observability matrix has rank~1 for all $x$, so the rank test is inconclusive.  
However, according to Theorem \ref{thm:generalO}, the algebraic criterion gives $N=1$ and
$J_1={\xi}_1^2+{\xi}_2^2+{\xi}_3^2-{\eta}_1^2-{\eta}_2^2-{\eta}_3^2$. Then $V(\mathcal{J})=V({\xi}_1^2+{\xi}_2^2+{\xi}_3^2)=0$ leads to ${\xi}=0$. Since $V(\ell)=0$ leads to ${\xi}=0$,
$V(\mathcal{J})=V(\ell)$. This implies that $\sigma=0$ is the only global indistinguishable state.  
Hence, despite the local rank deficiency, the system is globally observable at $\sigma=0$.  
This shows that the proposed algebraic test can immediately identify global observability in observable rank-deficient systems where conventional methods fail.

\subsection{Hyperedge-Level Co-Design Enables Algebraic Reconstruction}

Although the observability condition is algebraic in nature, it provides a direct and constructive rule for system design. Consider higher-order network system \eqref{eq:sys_multiA}.
Since the tensors $\{\mathbf{A}_k\}$ and $\{\mathbf{C}_k\}$, which are hyperedges in dynamic and output hypergraphs, together determine the generators of the polynomial ideal $\mathcal{J}$, adjusting these tensors changes the algebraic dependency among group-wise states.  
Therefore, by slightly modifying $\mathbf{A}_k$ or $\mathbf{C}_k$, one can remove algebraic redundancy and recover global observability.  
This makes the algebraic criterion not only an analysis tool but also a practical \emph{design rule}: in a few algebraic steps, one can guide how to perturb or reconfigure the tensors to achieve global observability.

\textit{{Example.}}
Consider a higher-order network system \eqref{eq:sys_multiA} with ${(\mathbf{A}_{3})}_{123}=1$ ${(\mathbf{C}_1)}_{3}=1$. The ideal is $\mathcal{J}=\langle \xi_3-\sigma_3,\ \xi_1\xi_2-\sigma_1\sigma_2\rangle$.
In this case, $V(\mathcal{J})=\{\xi|\xi_3=\sigma_3, \xi_1\xi_2=\sigma_1\sigma_2\}$. Since $V(\ell)=\{\sigma\}$, $V(\mathcal{J})\not=V(\ell)$. 

Now we try to reconstruct the system to recover the observability.
we analyze the original ideal and augment the output by adding a new output ${(\mathbf{C}_1)}_{1}=1$ such that $y=[x_3,\,x_1]^\top$, the new ideal becomes
\[
\mathcal{J}=\langle \xi_3-\eta_3,\ \xi_1-\eta_1,\ \xi_1\xi_2-\eta_1\eta_2\rangle.
\]
Now, all variables $\xi_1,\xi_2,\xi_3$ appear in the generators, and $V(\mathcal{J})=V(\ell)$ holds.  
Thus, the system becomes globally observable simply by adjusting output $\mathbf{C}$ (adding one component to the output), without changing the overall topology or recomputing ranks.  
Similarly, one could perturb a single coefficient in $\mathbf{A}_2$ to break the algebraic symmetry and re-achieve observability.  
This illustrates that the co-design of $\mathbf{A}_k$ and $\mathbf{C}_k$ provides an algebraically guided reconstruction path from unobservable to observable configurations.

\subsection{Incremental Output Design via Lie-Derivative Vanishing}

Once the system dynamics are fixed, another insight is that global observability can also be achieved through an \emph{incremental Lie-derivative vanishing design} by appropriately choosing the output hypergraph. \textit{Our goal is to choose $\mathbf{C}_k$ so that the algebraic observability test is highly efficient}. Thus, we design the following algorithm to illustrate the design process. }

\begin{algorithm}[htbp]
{\color{blue}
\caption{Incremental Design of Output Hypergraph to Achieve Observability}
\KwIn{fixed $\{\sum_{k=2}^c\mathbf{A}_k\}$; maximal cardinality of output hyperedges $d_{\max}$; maximum allowable number of sensors $p$}
\KwOut{ $\{\mathbf{C}_{i,k}\}, i\leq p$ with $V(\mathcal J)=V(\ell)$ if found}
\textbf{Init:} $d\gets 1$; choose monomial basis $\mathcal M_d$.\;
\Repeat{nontrivial $\Gamma$ found or $d>d_{\max}$}{
  Parametrize $y_i=\sum_{k=1}^{c}\mathbf{C}_{i,k}x^k=\sum_{m\in\mathcal M_d}\gamma_{i,m}m(x)$; collect $\Gamma$\;
  Enforce $L_f y_i\equiv 0$, expand in $\mathcal M_d$ $\mathcal  M_d(\sum_{k=2}^c\mathbf{A}_k)\,\Gamma=0$.\;
  \lIf{$\mathrm{Null}(\mathcal M_d)=\{0\}$}{ $d\gets d+1$; enlarge $\mathcal M_d$ }
}
\If{$d>d_{\max}$ \textbf{ and } no $\Gamma$}{\textbf{go to} Higher-Order Step}
Choose $\Gamma\in \mathrm{Null}(M_d)$ with (coverage \& asymmetric supports).\;
Build $J_0:=y(\xi)-y(\eta)$ with $\eta=\sigma$; compute Gröbner basis and apply radical/variety test.\;
\While{$V(\mathcal J)\neq V(\ell)$}{
  Minimally augment $y$ (within $p$ or by increasing $d$ while keeping $L_f y_i\equiv 0$);\;
  Recompute $J_0$ and test again.\;
  \If{budget exhausted}{\textbf{go to} Higher-Order Step}
}
\textbf{return} designed $\{\mathbf{C}_{i,k}\}$.\;

\BlankLine
\textbf{Higher-Order Step:} relax to sparse $y$ with $L_f y_i\not\equiv 0$ and $L_f^2 y_i\equiv0$, or further to $L_f^r y_i\equiv0$ for $r\ge 3$, 
build the extended ideal chain $J_1,\ldots,J_r$ and test stabilization.\;
\lIf{$V(\mathcal J)=V(\ell)$}{\textbf{return} $\{\mathbf{C}_{i,k}\}$}
\lElse{\textbf{return} failure under current budget}}
\end{algorithm}

{\color{blue}
The \textit{design principle} is to construct the output $\mathbf{C}_{i,k}$ for a known higher-order network system such that it (i) minimally covers all state coordinates present in the generators of the dynamics and (ii) breaks all algebraic indistinguishability through a designed hyperedge support. A convenient starting point is to enforce the first-order Lie-derivative vanishing constraint, $L_f y(x) \equiv \sum_{k=1}^{c}\sum_{s=1}^{k}\sum_{m=2}^{c}
  \big(\mathbf{C}^{(k)}_{(s)}\circ \mathbf{A}_m\big)
  x^{k+m-2}\equiv 0$, which consequently forces all higher-order Lie derivatives to vanish. In this case, the observability problem reduces to verifying the zero-set equivalence $y(x)=\sum_{k=1}^{c}\mathbf{C}_{i,k} x^k=0$ has only one solution (i.e., $V(\mathcal{J}) = V(\ell)$), which can be checked via a Gröbner basis computation on the generators of the ideal $\mathcal{J}_0 = \langle y(\xi) - y(\eta) \rangle$. The search for $y$ is thus narrowed to outputs whose hyperedge tensors belong to the Lie-derivative vanishing nullspace and whose monomial supports ensure the variety of algebraic ideal reduces to the single point $\sigma$, which is the initial state.

\textit{Example.} 
Consider the higher-order dynamics
\[
\dot{x}_1 = -\tfrac12 x_2 x_3,\qquad
\dot{x}_2 =  x_1 x_3,\qquad
\dot{x}_3 = -\tfrac12 x_1 x_2 .
\]
According to Algorithm 1, start with the degree-$1$ monomial basis
$\mathcal M_1=\{x_1,x_2,x_3\}$. The output is parametrized as
$y=\gamma_1 x_1+\gamma_2 x_2+\gamma_3 x_3$.
Enforcing $L_f y\equiv 0$ yields $\gamma_1=\gamma_2=\gamma_3=0$, so no feasible
output exists at degree~$1$.

Enlarging to the degree-$2$ basis
\[
\mathcal M_2=\{x_1^2,\,x_2^2,\,x_3^2,\,x_1x_2,\,x_1x_3,\,x_2x_3\},
\]
and imposing $L_f y\equiv 0$ yields nontrivial solutions. A feasible choice is
\[
y = x_1^2 + x_2^2 + x_3^2 .
\]
The indistinguishability ideal
\[
J_0 = \xi_1^2+\xi_2^2+\xi_3^2 - (\eta_1^2+\eta_2^2+\eta_3^2)
\]
satisfies $V(\mathcal J_0)=\{0\}$ when $\sigma=0$, so the system is globally observable at the initial state $\sigma=0$.


\section{From Algebraic Observability to Structural Observability}
For higher-order network system \eqref{eq:sys_multiA}, while the algebraic observability results above characterize global identifiability in terms of Lie–ideal closure and polynomial variety equalities, these conditions inherently depend on the specific numerical values of the tensors $\mathbf{A}_k$ and $\mathbf{C}_{i,k}$ which represents the interactions in system dynamics and output hypergraphs $\mathcal{G}_f$ and $\mathcal{G}_y$, respectively.
However, in many higher-order networked systems, the coupling structure—captured by the zero–nonzero pattern of the tensors—is determined by the underlying communication, interaction, or physical topology, whereas the actual weights may vary with operating conditions, control objectives, or learning processes.
This motivates an analysis of observability that does not rely on precise parameter values but instead depends only on the interaction structure.
In other words, structural observability for higher-order network systems is studied in this section.

\subsection{Structural observability criterion for higher-order networks}

Before going on, we define the structural global observability for hypergaphs.

\begin{defn} \label{def:so}
 Consider a higher-order dynamical system defined on a dynamic hypergraph $\mathcal{G}_f=(\mathcal{V},\mathcal{E}_f)$ and an output hypergraph $\mathcal{G}_y =(\mathcal{V},\mathcal{E}_y)$. The system is said to be \textit{structurally globally observable} if for almost all admissible parameters of the tensors $\mathbf{A}_k$ and $\mathbf{C}_{i,k}$ consistent with the nonzero structure of $\mathcal{G}_f$ and $\mathcal{G}_y$ such that $V(\mathcal{J})=V(\ell)$ at any initial state $\sigma\in \mathbf{R}^n$.
\end{defn}


Consider the non-uniform directed system \eqref{eq:sys_multiA}
For two initial states $\xi,\eta\in\mathbb{R}^n$, define the polynomial differences
\begin{align}\label{Phi}
\Phi_s(\xi,\eta)= L_f^{s}y(\xi)-L_f^{s}y(\eta),\qquad s\ge 0,
\end{align}
and the ideal chain $J_0=\langle \Phi_0\rangle$, $J_{s+1}=J_s+\langle \Phi_{s+1}\rangle$ with $J_N=J_{N+1}=J$ for some finite $N$.

Recall that for directed hyperedge $\mathcal{E}:=(i_1,i_2,\cdots, i_n)\rightarrow j$, $i_1,i_2,\cdots, i_n\in \mathrm{head}(\mathcal{E})$, while $j\in \mathrm{tail}(\mathcal{E})$. 
In the following, we define the observational diameter as the observation propagation depth in the higher-order networks with both dynamic and output hypergraphs. 

\begin{defn}
Define the observational closure
\begin{align}
&\mathcal{R}_o^{(0)}:= \mathrm{head}(\mathcal{E}_y)\nn\\
&\mathcal{R}_o^{(t+1)}\!\!:= \mathcal{R}_o^{(t)} \,\cup\, 
\left\{ \mathrm{head}(\mathcal{E}_f^{(k)})|{\mathcal{E}_f^{(k)}\in\mathcal{E}_f:\, \mathrm{tail}(\mathcal{E}_f^{(k)})\in \mathcal{R}_o^{(t)}}\right\} \nn
\end{align}
where $\mathcal{E}^{(k)}$ is the all $k$-th order hyperedges and $k=2,\cdots,c$. Define the \emph{observational diameter} as $T:=\mathrm{inf}\{t:\mathcal{R}_o^{(t)}=\mathcal{V}\}$. If there exists no $t$ that leads to $\mathcal{R}_o^{(t)}=\mathcal{V}$, let $T=\infty$.
\end{defn}

According to the definition, the following lemma of observational diameter can be obtained directly.

\begin{lem}
\label{lem:T_equals_N}
Consider the system \eqref{eq:sys_multiA}. Partition the node set $\mathcal V$ into layers
$L_0:=\mathcal R_o^{(0)}$,
$L_{t+1}:=\mathcal R_o^{(t+1)}\!\setminus\!\mathcal R_o^{(t)}$.
For each node $j$, define its backward distance
$d(j):=\min\{t:j\in\mathcal R_o^{(t)}\}$. If there exists $t$ such that $\mathcal{R}_o^{(t)}=\mathcal{V}$, it holds that $T=\max_jd(j)$. Otherwise, $T=\infty$.
\end{lem}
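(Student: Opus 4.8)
The plan is to treat the observational closure $\{\mathcal R_o^{(t)}\}_{t\ge 0}$ as a monotone reachability iteration and to identify the observational diameter $T$ with the stabilization time of that iteration at the full node set. The first and essential step is to establish \emph{monotonicity}: directly from the recursion $\mathcal R_o^{(t+1)}=\mathcal R_o^{(t)}\cup\{\cdots\}$, each update only adjoins new heads and never removes nodes, so $\mathcal R_o^{(0)}\subseteq\mathcal R_o^{(1)}\subseteq\cdots$. Since $\mathcal V$ is finite, this increasing chain stabilizes after finitely many steps, which guarantees that the first-entry time $d(j)=\min\{t:j\in\mathcal R_o^{(t)}\}$ and the quantity $\max_j d(j)$ are well defined (with the convention $\min\emptyset=\infty$ for nodes that are never reached). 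Monotonicity then yields the crucial membership characterization
\[
j\in\mathcal R_o^{(t)} \iff t\ge d(j),
\]
because once node $j$ first appears at time $d(j)$ it remains in every later closure, and by definition it is absent before $d(j)$.

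With this equivalence in hand, the finite case reduces to a single chain of equivalences obtained by quantifying over all nodes:
\[
\mathcal R_o^{(t)}=\mathcal V \iff \big(\forall j\in\mathcal V:\ j\in\mathcal R_o^{(t)}\big) \iff \big(\forall j:\ t\ge d(j)\big) \iff t\ge\max_{j}d(j).
\]
Taking the infimum over all $t$ satisfying the left-hand equality gives $T=\inf\{t:\ t\ge\max_j d(j)\}=\max_j d(j)$, which is the claimed identity. Equivalently, I would split this into two inequalities: $\max_j d(j)\le T$ holds because $\mathcal R_o^{(T)}=\mathcal V$ forces $d(j)\le T$ for every $j$; and $T\le\max_j d(j)$ holds because at time $\max_j d(j)$ monotonicity places every node in the closure, so $\mathcal R_o^{(\max_j d(j))}=\mathcal V$ and hence $T$ cannot exceed $\max_j d(j)$.

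Finally, the \emph{otherwise} case follows from the definition itself: if no $t$ satisfies $\mathcal R_o^{(t)}=\mathcal V$, then $T=\infty$ by convention, so nothing further is required; one may additionally observe that some node $j$ is then never reached, giving $d(j)=\infty$ and hence $\max_j d(j)=\infty=T$, so the single formula $T=\max_j d(j)$ in fact subsumes both regimes under the convention above. I do not expect a genuine obstacle here, as the argument is a standard fixed-point reachability computation; the only point demanding care is the well-definedness of $d(j)$ and of $\max_j d(j)$, which rests entirely on the monotonicity of $\{\mathcal R_o^{(t)}\}$ together with the finiteness of $\mathcal V$ ensuring stabilization of the chain.
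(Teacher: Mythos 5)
Your proposal is correct. The paper offers no written proof for this lemma---it states only that the result ``can be obtained directly'' from the definition of the observational closure---and your argument (monotonicity of the chain $\mathcal R_o^{(0)}\subseteq\mathcal R_o^{(1)}\subseteq\cdots$, stabilization by finiteness of $\mathcal V$, the membership characterization $j\in\mathcal R_o^{(t)}\Leftrightarrow t\ge d(j)$, and the resulting identity $T=\max_j d(j)$) is precisely the standard elaboration the authors have in mind, so it matches the paper's intended approach.
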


%

\begin{lem}
\label{lem:coverage_variables}
Let $\mathcal{R}_o^{(\infty)}:=\bigcup_{t\ge 0}\mathcal{R}_o^{(t)}$.
For any node $v_j\in\mathcal{V}$ with state coordinate $x_j$:
\noindent If there exists node $j\notin \mathcal{R}_o^{(\infty)}, j\in\{1,2,\cdots, q\}$, then $x_j$ does not appear in any generator of $J$, and there exists a one-parameter family of indistinguishable initial states varying $x_j$.  
Conversely, under generic parameters, if $j\in \mathcal{R}_o^{(t)}$ for some finite $t$, then $x_j$ appears in at least one generator of $J_i$, $i\leq N$.
\end{lem}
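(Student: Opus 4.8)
The plan is to reduce the statement to a precise bookkeeping of which state coordinates appear in the successive Lie derivatives $L_f^s y_i$, and to match this monomial support with the observational closure levels $\mathcal{R}_o^{(s)}$. Writing $\mathrm{supp}(L_f^s y_i)\subseteq\mathcal V$ for the set of node indices $j$ such that $x_j$ occurs in the polynomial $L_f^s y_i$, the two assertions correspond to: (i) a structural containment $\bigcup_{s,i}\mathrm{supp}(L_f^s y_i)\subseteq\mathcal R_o^{(\infty)}$ valid for \emph{every} choice of tensor weights, and (ii) a matching reverse inclusion holding for \emph{generic} weights. Since each generator of $J$ has the form $\Phi_s=L_f^s y(\xi)-L_f^s y(\eta)$ (with $\eta$ later fixed to $\sigma$), a coordinate $x_j$ appears in a generator if and only if $j\in\mathrm{supp}(L_f^s y_i)$ for some $s,i$, so the whole problem is about this support.

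For the forward direction I would argue by induction on $s$ that $\mathrm{supp}(L_f^s y_i)\subseteq\mathcal R_o^{(s)}$ for all parameter values. The base case $s=0$ is the definition $\mathcal R_o^{(0)}=\mathrm{head}(\mathcal E_y)$, i.e.\ the support of the output tensors $\mathbf C_{i,k}$. For the inductive step, the chain rule gives $L_f^{s+1} y_i=\sum_j \frac{\partial (L_f^s y_i)}{\partial x_j}\,\dot x_j$; differentiation introduces no new coordinate, while $\dot x_j=\sum_k(\mathbf A_k x^{k-1})_j$ involves exactly the head nodes of dynamics hyperedges whose tail is $j$. Hence a new coordinate can enter $L_f^{s+1}y_i$ only as the head of a dynamics hyperedge whose tail already lies in $\mathrm{supp}(L_f^s y_i)\subseteq\mathcal R_o^{(s)}$, which is precisely the recursion defining $\mathcal R_o^{(s+1)}$. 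Consequently, if $j\notin\mathcal R_o^{(\infty)}$ then $\xi_j$ occurs in no generator of $J$; the generators are then independent of $\xi_j$, and since $\sigma\in\mathbf V(\mathcal J)$ trivially, the whole line $\{\sigma+t e_j:t\in\mathbf R\}$ lies in $\mathbf V(\mathcal J)$, furnishing the claimed one-parameter family of states indistinguishable from $\sigma$.

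For the converse I would establish the reverse inclusion generically by a non-vanishing-coefficient argument. Fix $j$ with backward distance $d(j)=t$; by definition of the closure there is an ordered propagation path starting at an output head $j_0\in\mathcal R_o^{(0)}$ and descending through dynamics hyperedges $\mathcal E_1,\dots,\mathcal E_t$, where $j_\ell$ is a head of $\mathcal E_\ell$ and the tail of $\mathcal E_{\ell+1}$, ending with $j$ as a head of $\mathcal E_t$. This path is exactly one of the ordered tensor contractions $\mathbf C^{(i,k_1)}_{(s_1)}\circ\mathbf A_{k_2}\circ\cdots\circ\mathbf A_{k_{t+1}}$ in the expansion \eqref{eq:Lfr_multiA} of $L_f^{t} y_i$, and it contributes a monomial containing $x_j$ whose coefficient is a polynomial in the tensor entries. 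To see this coefficient is not identically zero, I would specialize by setting every entry off the chosen path to zero: with only the path active, the sole contribution to that monomial is the product of the path weights times a strictly positive combinatorial factor from the product rule (as in the factorials of Theorem~\ref{thm:symO}), so no cancellation occurs and the coefficient is nonzero. Its zero set is therefore a proper algebraic subset of parameter space, and for all weights outside this measure-zero set $x_j$ genuinely appears in $L_f^t y_i$, hence in the generator $\Phi_t$. Since $t=d(j)\le T$ and the observational diameter coincides with the ideal-chain stabilization index by Lemma~\ref{lem:T_equals_N}, we obtain $t\le N$, so $x_j$ registers in a generator of $J_i$ with $i\le N$.

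The main obstacle is the generic non-cancellation step in the converse: several distinct propagation paths may feed the same monomial of $L_f^t y_i$ and cancel, so one cannot simply read off a nonzero coefficient from the full-support expansion. The support-restriction specialization resolves this by isolating a single path and turning ``the coefficient is not identically zero'' into the existence of one weight assignment making it nonzero, which suffices by Zariski density. The remaining bookkeeping — that each covered coordinate first appears at an index $t\le N$ — follows from identifying the observational diameter with the stabilization index of the ideal chain via Lemma~\ref{lem:T_equals_N} together with $J_N=J_{N+1}=J$.
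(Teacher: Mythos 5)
Your proposal follows essentially the same route as the paper's proof: the forward direction via the observation that no contraction sequence in any $L_f^s y$ can involve a coordinate outside the observational closure (hence a line through $\sigma$ in direction $e_j$ lies in $\mathbf{V}(\mathcal J)$), and the converse via a generic non-vanishing argument for the ordered contraction corresponding to a backward hyperpath of length $d(j)$. You supply more detail than the paper does — the explicit induction $\mathrm{supp}(L_f^s y_i)\subseteq\mathcal R_o^{(s)}$ and the single-path specialization that rules out cancellation — so the argument is correct and, if anything, more complete than the original.
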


\begin{proof}
If node $j$ is never reached, no directed path from node $j$ to any output hyperedge exists. Thus no contraction sequence in any $L_f^s y$ can contain $x_j$, so $\partial \Phi_s/\partial x_j\equiv 0$ for all $s$, and $x_j$ is absent from all generators. Varying $x_j$ while fixing other coordinates leaves all outputs and their Lie derivatives unchanged, creating a continuum of indistinguishable initial states.
If node $j\in \mathcal{R}_o^{(t)}$, there exists a directed sequence of hyperedges of length $t$ carrying $x_j$ to an output. Under generic parameters, the corresponding ordered contraction yields a nonzero monomial containing $x_j$ in $L_f^{t}y$, hence $x_j$ appears in some generator of $J_i$.
\end{proof}

\begin{rem}
In Lemma \ref{lem:T_equals_N}, the integer $T<\infty$ means that all nodes become observable after $T$ layers of the backward hyperedge propagation. Lemma \ref{lem:coverage_variables} illustrates that the reachability of node $j\in\{1, 2,\cdots, q\}$ from the output is equivalence to the existence of the state $x_j$ in generators $J_i,i\leq N$.
\end{rem}

Based on the above two lemmas, the following theorem of structural observability criterion for higher-order networks can be derived.

\begin{thm} \label{thm-so}
Consider the higher-order networked system \eqref{eq:sys_multiA}, whose dynamic and output structures are described by the directed hypergraphs 
$\mathcal{G}_f$ and $\mathcal{G}_y$, respectively.
The system is \emph{structurally globally observable} if the following two conditions hold:

\begin{itemize}
    \item[] \textbf{C1) Topological reachability. } The observational diameter is finite, i.e., $T<\infty$,
    \item[] \textbf{C2) Symmetry breaking.} The coupled dynamic–output hypergraph is structurally asymmetric, i.e.,
    $\mathrm{Aut}(\mathcal{G}_f,\mathcal{G}_y)=\{\mathrm{id}\}$,
\end{itemize}
where $\mathrm{Aut}(\mathcal{G}_f,\mathcal{G}_y)$ denotes the automorphism group of the coupled hypergraph, and $\mathrm{id}$ denotes the identity permutation.
\end{thm}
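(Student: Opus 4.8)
The plan is to reduce \emph{structural} observability to a single, verifiable instance and then promote it to almost all parameters by genericity. Observe first that, by Theorem~\ref{thm:generalO}, the generators of $\mathcal J$ are the differences $E_r(\xi^{\otimes d_r})-E_r(\sigma^{\otimes d_r})$, so the set of states indistinguishable from $\sigma$ is exactly the fiber $\Psi^{-1}\!\bigl(\Psi(\sigma)\bigr)$ of the polynomial \emph{generator map} $\Psi:\xi\mapsto\bigl(E_r(\xi)\bigr)_{r=0}^{N}$. Hence ``globally observable at every $\sigma$'' is equivalent to injectivity of $\Psi$ on $\mathbf R^n$, a condition that no longer mentions $\sigma$. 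Since the entries of each $E_r$ are polynomials in the tensor parameters, generic injectivity of $\Psi$ is a Zariski-open condition on the parameter space cut out by the zero--nonzero pattern of $(\mathcal G_f,\mathcal G_y)$; equivalently, the certificate $1\in\sqrt{\mathcal J}:\ell$ of Corollary~1 fails only on a proper subvariety. Therefore, to establish Definition~\ref{def:so} it suffices to exhibit \emph{one} structure-consistent weight assignment making $\Psi$ injective: nonemptiness of the good set forces its complement into a measure-zero set.

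Next I would use condition \textbf{C1} to secure \emph{coverage}. Because $T<\infty$ gives $\mathcal R_o^{(\infty)}=\mathcal V$, Lemma~\ref{lem:coverage_variables} guarantees that under generic parameters every coordinate $x_j$ appears in at least one generator of $J$; conversely, any coordinate missing from all generators would, by the same lemma, create a one-parameter family of indistinguishable states and destroy injectivity. Operationally I would exploit the layered decomposition of Lemma~\ref{lem:T_equals_N}, $\mathcal V=L_0\cup L_1\cup\cdots\cup L_T$, and assign weights depth by depth so that the ordered contractions appearing in $L_f^{\,t}y$ express the coordinates newly covered at layer $t$ in terms of those already resolved at smaller depth. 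This mirrors a cascade (triangular) observer whose solvability is precisely the finiteness of the observational diameter, and it shows that the reachability hypothesis alone already rules out the positive-dimensional component of $V(\mathcal J)$.

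Finally I would invoke \textbf{C2} to upgrade coverage to genuine injectivity of $\Psi$. The central claim is that the only \emph{parameter-independent} obstruction to injectivity is a coordinate permutation $\pi$ that fixes the ordered-contraction pattern of every $E_r$; such a $\pi$ must preserve the head/tail incidence of every hyperedge of both $\mathcal G_f$ and $\mathcal G_y$, hence lies in $\mathrm{Aut}(\mathcal G_f,\mathcal G_y)$. Since \textbf{C2} forces this group to be $\{\mathrm{id}\}$, no nontrivial permutation leaves $\Psi$ invariant across all admissible weights, so within each layer $L_t$ the structural asymmetry lets me pick distinct, algebraically independent weights that \emph{separate} the coordinates newly covered at that depth. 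Combining the cascade solvability from \textbf{C1} with this layerwise separation yields a concrete weight assignment for which $\Psi(\xi)=\Psi(\sigma)\Rightarrow\xi=\sigma$, i.e.\ $V(\mathcal J)=\{\sigma\}=V(\ell)$; the genericity reduction of the first paragraph then completes the argument.

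The hard part will be the third step: rigorously proving that the \emph{sole} persistent symmetry of the nonlinear tensor-contraction generators is a hypergraph automorphism. In the polynomial setting indistinguishability can in principle arise from algebraic coincidences among the monomials of $E_r$ that are \emph{not} induced by any vertex permutation, so I must argue that for generic weights all such non-permutation coincidences break, leaving only the genuine automorphisms that \textbf{C2} eliminates. Making this ``structural obstruction $=$ automorphism'' dictionary precise, and controlling the interaction between the multi-way contractions and the symmetric-function nature of each $E_r$, is where the real work lies; by contrast, the coverage argument from \textbf{C1} and the concluding Zariski-density step are comparatively routine.
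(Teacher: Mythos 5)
Your overall strategy---reduce structural observability to injectivity of the generator map $\Psi$, argue the good parameter set is Zariski-open, and then exhibit one witness weight assignment---is a legitimate template for structural results, and your use of C1 via Lemmas~\ref{lem:T_equals_N} and~\ref{lem:coverage_variables} for coverage matches the paper. However, the proposal has two genuine gaps. First, the claim that generic injectivity of $\Psi$ (equivalently, $V(\mathcal J)=V(\ell)$ at every $\sigma$) ``is a Zariski-open condition on the parameter space'' is asserted, not proved, and it is not obvious: the locus of parameters admitting $\xi\neq\eta$ with $\Psi(\xi)=\Psi(\eta)$ is the projection of a quasi-affine set, hence only constructible, so nonemptiness of the good set does not by itself confine the bad set to a proper subvariety. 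For linear structural observability this openness is immediate from minors of the observability matrix; for the polynomial ideal condition here (and over $\mathbf{R}$ rather than an algebraically closed field) it requires an argument you do not supply. Second, you yourself flag that the ``structural obstruction $=$ automorphism'' dictionary---that for generic weights the only persistent symmetry of the tensor-contraction generators is a vertex permutation in $\mathrm{Aut}(\mathcal G_f,\mathcal G_y)$---is the crux, and you leave it entirely open; without it C2 does no work, and the cascade weight assignment you sketch as the witness is never actually constructed. As written, the proposal is a plan with its two load-bearing steps missing.

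For contrast, the paper's proof does not route through genericity-of-injectivity at all. It fixes a monomial order adapted to the layer decomposition $L_0,\dots,L_T$, uses C1 and Lemma~\ref{lem:coverage_variables} to produce for each coordinate $j$ a witness monomial in some $\Phi_{s_j}$ with $s_j\le d(j)\le T$, and invokes C2 to guarantee that distinct nodes have distinct backward hyperpath signatures, so that under generic independent parameters the initial ideal $\mathrm{in}_\prec(\mathcal J_T)$ contains the pure powers $\delta_1^{\alpha_1},\dots,\delta_n^{\alpha_n}$. Taking radicals and combining $\sqrt{\mathrm{in}_\prec(\mathcal J_T)}\subseteq\sqrt{\mathcal J_T}$ with the reverse inclusion obtained from vanishing on the diagonal yields $\sqrt{\mathcal J_T}=\langle\delta_1,\dots,\delta_n\rangle$ directly, hence $V(\mathcal J)=V(\ell)$. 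That argument localizes all genericity to the nonvanishing of finitely many witness coefficients---which is precisely where your proposal would also need to land in order to close both gaps.
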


\begin{proof}
According to Lemma \ref{lem:T_equals_N}, 
Condition C1) ensures that $\max_j d(j)=T<\infty$.
By Lemma~\ref{lem:coverage_variables}, each coordinate $x_j$
first appears in some generator $\Phi_{s_j}$ with
$s_j\le d(j)$, and this appearance is realized by a
\emph{witness monomial}
$m_j(\delta)$ where $\delta:=\xi-\eta$
whose support encodes at least one backward hyperpath from an
output node to node $j$.  Hence
\begin{equation}  \label{eq:Nvar=T-en}
T=\max_j d(j)\geq \max_j s_j
\end{equation}

Define a monomial order ``$\prec$" consistent with the layer structure:
variables belonging to lower layers are larger,
and within the same layer a lexicographic order is used.
For each $j$, choose in $\Phi_{s_j}$ the witness monomial $m_j(\delta)$
that contains $\delta_j$ with the smallest possible exponent.
Under generic independent parameters,
the coefficient of $m_j$ is nonzero, and because
$\mathrm{Aut}(\mathcal G_f,\mathcal G_y)=\{\mathrm{id}\}$,
no two distinct nodes share identical sets of backward hyperpath signatures, where backward hyperpath signature of node $j$ as
the set of all directed higher-order hyperedge chains that start from any output hyperedge and terminate at $j$. $\mathrm{Aut}(\mathcal{G}_f, \mathcal{G}_y)=\{\pi: \mathcal{V}\rightarrow \mathcal{V}|\pi(\mathcal{E}_f)=\mathcal{E}_f, \pi(\mathcal{E}_y)=\mathcal{E}_y\}$. Algebraically, the monomial support sets will not change for $J_M, \forall M\geq \max_j{s_j}$. Thus, under generic and independent parameters, $V(J_M)=V(J_{M+1}),\forall M\geq \max_j{s_j}$. According to \eqref{eq:Nvar=T-en}, one has $V(J_{\max_j{s_j}})=\cdots=V(J_{T})=V(J_{T+1})=\cdots$.

Substituting any known initial state $\sigma$ into $\eta$, define $\delta:=\xi-\sigma$ and $\mathcal{J}_T(\delta):=J_T(\sigma+\delta, \delta):=\langle \Phi_0(\delta),\Phi_1(\delta),\cdots, \Phi_T(\delta)\rangle$ where $\Phi$ is defined in \eqref{Phi}.
Therefore, the leading monomials of the generators $\Phi_s, \forall s\leq T$ include $\{\delta_1^{\alpha_1},\dots,\delta_n^{\alpha_n}\}$
for some integers $\alpha_j\ge1$, i.e., the initial ideal
\begin{equation}\label{eq:init-ideal}
\mathrm{in}_\prec(\mathcal{J}_T)\ \supseteq\
\big\langle\delta_1^{\alpha_1},\delta_2^{\alpha_2},\dots,\delta_n^{\alpha_n}\big\rangle.
\end{equation}

From~\eqref{eq:init-ideal},
the radical of the initial ideal satisfies
$\sqrt{\mathrm{in}_\prec(\mathcal{J}_T)}\supseteq\langle\delta_1,\dots,\delta_n\rangle$.
According to Theorem 3.3.4 in \cite{Cox2}, the standard result from Gröbner-basis theory holds
\[
\sqrt{\mathrm{in}_\prec(\mathcal{J}_T)}\subseteq
\mathrm{in}_\prec\big(\sqrt{\mathcal{J}_T}\big)\subseteq \sqrt{\mathcal{J}_T}
\]
which implies
\[
\langle\delta_1,\dots,\delta_n\rangle
\subseteq
\sqrt{\mathcal{J}_T}
\]
Since each $\Phi_s$ vanishes on the diagonal $\xi=\eta$, $V(\langle\delta_1,\dots,\delta_n\rangle)\subseteq V(\sqrt{\mathcal{J}_T})$, which leads to
$\sqrt{\mathcal{J}_T}\subseteq\langle\delta_1,\dots,\delta_n\rangle$.
Then it follows:
\[
\sqrt{\mathcal{J}_T}=\langle\delta_1,\dots,\delta_n\rangle.
\]
Consequently, $V(\mathcal{J}_T)=V(\ell)$. under independent independent parameters, it combines with  $V(J_{\max_j{s_j}})=\cdots=V(J_{T})=V(J_{T+1})=\cdots$ follows $V(\mathcal{J})=V(\ell)$. This establishes global structural observability according to Definition \ref{def:so}. 
\end{proof}

\begin{rem}
   Condition C1) means that every node can reach an output node through a finite sequence of backward-directed hyperedges, while Condition C2) eliminates any structural symmetry that could cause indistinguishable state directions. \textit{Intuitively, the first condition  guarantees that all the nodes can be ``reachable" from the system outputs, while the second condition ensures that they are distinguishable.} These two straightforward conditions enable a highly effective observability verification process from the higher-order topology perspective.
\end{rem}

\subsection{Illustrative Examples of Structural Observability}

To illustrate the structural observability criterion established in Theorem~\ref{thm-so}, 
we present two small-scale higher-order systems showing the effect of structural symmetry 
on observability.

\textit{{Example.}}
Consider the higher-order network system with dynamics
\begin{equation}
    \dot{x}_3 = a x_1 x_2,\qquad \dot{x}_1=\dot{x}_2=0,\qquad y=x_3
\end{equation}
The dynamic hyperedge is $(1,2)\!\to\!3$, and the output hyperedge is $3\!\to\!y$. The hypergraph structure is presented in Fig. \ref{symmetric} (a).
The backward closure gives $L_0=\{3\}$ and $L_1=\{1,2\}$, hence $T=1$. 
Since swapping nodes $1\leftrightarrow 2$ leaves all hyperedges invariant, 
the automorphism group is nontrivial, i.e., $\mathrm{Aut}\neq\{\mathrm{id}\}$. Thus the hypergraph structure 
violating Condition~C2. According to Theorem \ref{thm-so}, the system is not structurally observable.

Now we verify our results algebraically. For two initial states $\xi,\eta$,
\[
\Phi_0(\xi,\eta)=\xi_3-\eta_3,\qquad
\Phi_1(\xi,\eta)=a(\xi_1\xi_2-\eta_1\eta_2)
\]
Setting $\eta=\sigma$ and $\delta=\xi-\sigma$ yields
\[
\Psi_0(\delta)=\delta_3,\qquad
\Psi_1(\delta)=a(\sigma_1\delta_2+\sigma_2\delta_1+\delta_1\delta_2)
\]
If $\sigma_1=\sigma_2=0$, then $\Psi_1(\delta)=a\,\delta_1\delta_2$, so
\[
\mathcal{J}_1=\langle\delta_3,\ \delta_1\delta_2\rangle,\qquad
\sqrt{\mathcal{J}_1}=\langle\delta_3\rangle\cap\langle\delta_1,\delta_2\rangle
\subsetneq \langle\delta_1,\delta_2,\delta_3\rangle
\]
Hence, distinct initial states differing in one of $\delta_1$ or $\delta_2$ 
produce identical outputs, confirming the loss of observability.

\begin{figure}[h]
        \centering
        \includegraphics[height=4.5cm]{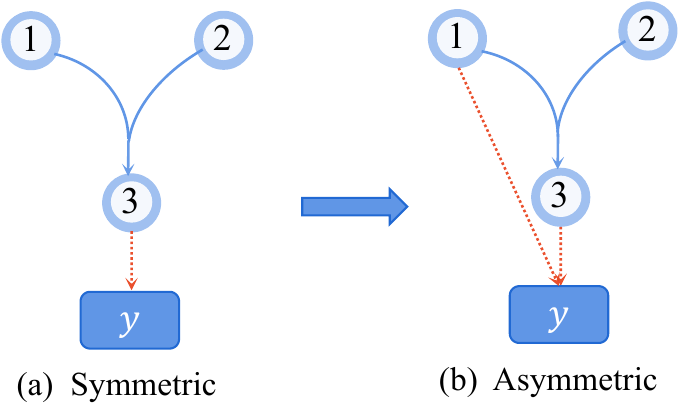}
        \caption{Examples with symmetric and asymmetric structures}
        \label{symmetric}
\end{figure}

Modify the previous system by adding one additional measurement $y_2=x_1$, i.e.,
\begin{equation}
    \dot{x}_3 = a x_1 x_2, \ y_1=x_3, \ y_2=x_1
\end{equation}
The new hypergraph structure is presented in Fig. \ref{symmetric} (b).
Now the backward closure gives $L_0=\{3,1\}$ and $L_1=\{2\}$, 
and the automorphism group becomes trivial, $\mathrm{Aut}=\{\mathrm{id}\}$. 
All state directions are covered and distinguishable, satisfying both Conditions~C1) and~C2); 
hence the system is structurally observable.

Algebraically, the corresponding polynomial differences are
\[
\Psi_0^{(1)}(\delta)=\delta_3,
\Psi_0^{(2)}(\delta)=\delta_1,
\Psi_1^{(1)}(\delta)=a(\sigma_1\delta_2+\sigma_2\delta_1+\delta_1\delta_2)
\]
where $\delta_1$ and $\delta_3$ are directly identified from $\Psi_0^{(2)}$ and $\Psi_0^{(1)}$, 
and $\delta_2$ appears in $\Psi_1^{(1)}$. Thus
\[
\mathrm{in}_\prec(\mathcal{J}_1)\supseteq
\langle \delta_1,\delta_3,\delta_2^{\alpha_2}\rangle
\Rightarrow
\sqrt{\mathcal{J}_1}=\langle\delta_1,\delta_2,\delta_3\rangle
\]
which confirms global structural observability.
}


\section{Local observability criteria of hypergraphs}

In the above section, global
observability criteria of the higher-order network system on hypergraphs are derived, which can determine whether an initial state can be distinguished in the entire statespace by the system trajectory. 
However, for some control systems in practice, 
it is sufficient to distinguish it 
in the subset of the statespace such as its neighborhood. This leads us to investigate the local observability of hypergraphs in this section.


As a foundation, Herman and Krener have derived the rank condition for nonlinear systems \cite {Hermann1977}. Based on it, the local observability criteria of higher-order network systems on hypergraphs are proposed here, which reveal the relation between dynamics of hypergraphs and the rank condition.

Consider the integrated form (\ref{intsys}) of a higher-order network system (\ref{aa2}).  
The local observability of the system (\ref{intsys}) is determined by the dimension of the space spanned by gradients of the Lie derivatives
\begin{align}
    L:=\sum_{i=1}^{n}f_i\frac{\partial}{\partial{x_i}}+\sum_{j= 1}^{N-1}\sum_{l=1}^{w_i}u_{l}^{(j+1)}\frac{\partial}{\partial{u_{l}^{(j)}}} \label{lo1}
\end{align}
of its output function $h_i(x,u)$.

Define the observable Jacobian matrix of the system (\ref{intsys}) as 
\begin{align}
J:=
\left[
\begin{array}{cccc}
    \frac{\partial{L_f^0h_1}}{\partial{x_1} }& \frac{\partial L_f^0h_1}{\partial x_2} & \cdots & \frac{\partial L_f^0h_1}{\partial x_n}\\
     \frac{\partial L_f^0h_2}{\partial x_1} & \frac{\partial L_f^0h_2}{\partial x_2} & \cdots & \frac{\partial L_f^0h_2}{\partial x_n} \\
     \vdots &\vdots &\vdots &\vdots \\
     \frac{\partial L_f^0h_q}{\partial x_1} & \frac{\partial L_f^0h_q}{\partial x_2} & \cdots & \frac{\partial L_f^0h_q}{\partial x_n}\\
      \frac{\partial L_f^1h_1}{\partial x_1} & \frac{\partial L_f^1h_1}{\partial x_2} & \cdots & \frac{\partial L_f^1h_1}{\partial x_n}\\
     \vdots &\vdots &\vdots &\vdots \\
     \frac{\partial L_f^1h_q}{\partial x_1} & \frac{\partial L_f^1h_q}{\partial x_2} & \cdots & \frac{\partial L_f^1h_q}{\partial x_n} \\
    \vdots &\vdots &\vdots &\vdots \\
     \frac{\partial L_f^{n-1}h_q}{\partial x_1} & \frac{\partial L_f^{n-1}h_q}{\partial x_2} & \cdots & \frac{\partial L_f^{n-1}h_q}{\partial x_n}\label{lo2}
\end{array}
\right]_{nq\times n}
\end{align}

\begin{lem}[local observability] \label{lem4}The higher-order network system (\ref{aa2}) is locally observable if and only if the Jacobian matrix $J$ satisfies $\mathrm{rank}(J)=n$. \end{lem}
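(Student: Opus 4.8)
The plan is to obtain Lemma~\ref{lem4} by specializing the Hermann--Krener observability rank condition \cite{Hermann1977} to the polynomial system \eqref{aa2}, using the fact that polynomial (hence real-analytic) data make the rank test simultaneously necessary and sufficient. First I would recall the \emph{observation space} $\mathcal{O}$ of the integrated form \eqref{intsys}: the smallest $\mathbf{R}$-linear space of functions containing the outputs $h_1,\dots,h_q$ and closed under the differential operator $L$ in \eqref{lo1}, which prolongs $L_f$ so that the successive input derivatives $u_l^{(j)}$ are carried along as independent parameters. The associated codistribution $d\mathcal{O}:=\mathrm{span}_{\mathbf{R}}\{d\phi:\phi\in\mathcal{O}\}$ governs local distinguishability, and by \cite{Hermann1977} local weak observability (the sense fixed at the end of Section~II-C) holds at the state in question exactly when $\dim d\mathcal{O}=n$. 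I would emphasize that for analytic systems this becomes an honest ``if and only if'': the rank of the analytic codistribution is generically constant, so the converse direction of Hermann--Krener, valid only generically for smooth systems, is exact here.

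The second step is to show that only the drift-directed derivatives up to order $n-1$ are needed, so that the finite matrix $J$ in \eqref{lo2} already realizes $d\mathcal{O}$. Since the $u_l^{(j)}$ are treated as parameters under $L$, differentiating an element of $\mathcal{O}$ in the state variables amounts to iterating $L_f$ with $u$ frozen; hence $d\mathcal{O}$ is generated, coordinate-wise in $x$, by the gradients $\{dL_f^{k}h_i:k\ge 0,\ i=1,\dots,q\}$. I would then invoke the standard stabilization argument: the nested spans $\mathcal{S}_r:=\mathrm{span}\{dL_f^{k}h_i:0\le k\le r\}$ form a non-decreasing chain inside an $n$-dimensional ambient space, and once $\mathcal{S}_{r}=\mathcal{S}_{r+1}$ the chain is stationary because $\mathcal{S}_{r+1}$ arises from $\mathcal{S}_r$ by applying the Lie derivative along $f$, so $L_f$-invariance of the stabilized codistribution propagates to all higher orders. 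As $\dim\mathcal{S}_r$ can strictly increase at most $n-1$ times beyond $\mathcal{S}_0$, the chain saturates by $r=n-1$, giving $\mathcal{S}_{n-1}=d\mathcal{O}$.

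Finally I would identify the rows of $J$ with exactly the generators of $\mathcal{S}_{n-1}$: the block indexed by $(k,i)$ is the gradient $dL_f^{k}h_i=\bigl(\partial L_f^{k}h_i/\partial x_1,\dots,\partial L_f^{k}h_i/\partial x_n\bigr)$ for $k=0,\dots,n-1$ and $i=1,\dots,q$, so that $\mathrm{rank}(J)=\dim\mathcal{S}_{n-1}=\dim d\mathcal{O}$. The chain of equivalences $\mathrm{rank}(J)=n\iff\dim d\mathcal{O}=n\iff$ local observability then closes the argument. I expect the main obstacle to be the rigorous justification of truncation at order $n-1$ together with the reduction from the prolonged operator $L$ to the drift $L_f$: one must argue that carrying the input derivatives $u_l^{(j)}$ as parameters does not enlarge the state-codistribution beyond what the iterated drift derivatives produce, and that the analytic structure of \eqref{aa2} upgrades the generically-valid converse of Hermann--Krener to a genuine pointwise equivalence.
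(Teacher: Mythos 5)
The first thing to note is that the paper does not actually prove Lemma~\ref{lem4}: Remark~\ref{rem6} imports it wholesale from the nonlinear-observability literature (Hermann--Krener, Zeitz, Liu et al.), so there is no in-paper argument to match. Judged on its own merits, your sufficiency direction ($\mathrm{rank}(J)=n$ at a point $\Rightarrow$ local weak observability there) is the standard Hermann--Krener implication and is fine. The necessity direction, however, contains a genuine gap that analyticity does not repair. The converse of Hermann--Krener states only that local weak observability implies the rank condition on an \emph{open dense} subset; it is not a pointwise equivalence even for polynomial data. A counterexample living inside the very class \eqref{aa2}: take $n=1$, $\dot x=0$, $y=\mathbf{C}_3x^{3}=x^{3}$. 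This system is globally, hence locally, observable at $x=0$ (the map $x\mapsto x^{3}$ is injective), yet every row of $J$ is a multiple of $3x^{2}$, which vanishes at the origin, so $\mathrm{rank}(J)=0$ there. Your appeal to ``the rank of the analytic codistribution is generically constant'' shows the rank is constant on an open dense set, not at the particular initial state where Definition~\ref{defn4} is being tested, and that is exactly where the claimed equivalence breaks.

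The truncation step has a second, independent problem. Your stabilization argument treats the pointwise spans $\mathcal{S}_r(x_0)$ as if equality $\mathcal{S}_r(x_0)=\mathcal{S}_{r+1}(x_0)$ forced stationarity of the chain; but that inference requires $\mathcal{S}_{r+1}=\mathcal{S}_r$ as codistributions (modules over functions) on a neighborhood, i.e.\ $L_f$-invariance, not merely equality of the evaluated subspaces at one point. Without that, the dimension at $x_0$ can stall at order $r$ and then jump at order $r+2$, so the bound ``saturates by $r=n-1$'' is a Cayley--Hamilton phenomenon special to linear systems and does not follow here. The same caveat applies to your reduction from the prolonged operator $L$ in \eqref{lo1} to iterated $L_f$ with frozen inputs: for input-dependent $f$ the observation space is generated by mixed Lie derivatives $L_{f(\cdot,u^{1})}\cdots L_{f(\cdot,u^{k})}h_i$ over all constant input values, and identifying its state-codistribution with the rows of \eqref{lo2} needs an argument, not just the remark that the $u_l^{(j)}$ ride along as parameters. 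In short: the ``if'' half is sound, but the ``only if'' half cannot be established pointwise by this route, and indeed the lemma as an unqualified biconditional is itself only true up to a generic/open-dense qualifier that the paper (and your proof) suppresses.
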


\begin{rem} \label{rem6} The above lemma is a local observability criterion for the higher-order network system \eqref{aa2} which can be directly obtained from the existing results of observability for nonlinear systems \cite{hermann77, zeitz84, liu13}. This lemma will be utilized to further associate the rank condition with the parameters of higher-order networks in the following subsections. Three scenarios will be considered, respectively: the higher-order network system without input, the system with input, and the system with direct transmission.
\end{rem}

\subsection{Local observability of higher-order networks without input and direct transmission}

First, consider the higher-order network system (\ref{aa2}) without input. According to (\ref{a3}), 
the Kronecker form can be written as
\begin{eqnarray}\label{V1}
\left\{\begin{aligned} 
\dot{{x}}=&\sum_{k=2}^{c}\mathcal{A}_k{x}^{[k-1]}, \ \ {x}(0)={x}_o \\
y_i=&\sum_{k=1}^{c}\mathcal{C}_{i,k}{x}^{[k]}, \ \ i=1,\cdots,q \end{aligned}\right.
\end{eqnarray}
where $\mathcal{A}$ and $\mathcal{C}$ are defined in (\ref{a3}). 

According to the property of products of matrices that $(A\otimes B)(C\otimes D)=(AB)\otimes (CD)$, it holds that
\begin{align}
{L_f^0h_i}&=\sum_{k=1}^{c}\mathcal{C}_{i,k}{x}^{[k]}\nn
\end{align}
\begin{align}
{L_f^1h_i}&=\frac{d}{d t}\sum_{k=1}^{c}\mathcal{C}_{i,k}{x}^{[k]}=\sum_{k=1}^{c}\mathcal{C}_{i,k}\frac{d}{d t}\left(\overbrace{x \otimes x \otimes \cdots\otimes x}^{k \ \mathrm{times}} \right)\nn\\
&=\sum_{k=1}^{c}\mathcal{C}_{i,k}\bigg(\dot x \otimes \cdots\otimes x + \cdots+ x \otimes \cdots\otimes \dot x  \bigg)\nn\\
&=\sum_{k=1}^{c}\mathcal{C}_{i,k}\bigg(\sum_{m=2}^{c}\mathcal{A}_m{x}^{[m-1]} \otimes \cdots\otimes x + \cdots\nn\\
&\hspace{5mm}+ x \otimes \cdots\otimes \sum_{m=2}^{c}\mathcal{A}_m{x}^{[m-1]} \bigg)\nn\\
&=\sum_{k=1}^{c}\mathcal{C}_{i,k}\sum_{j=1}^{k}\left(x\otimes \cdots\otimes \underbrace{\sum_{m=2}^{c}\mathcal{A}_m{x}^{[m-1]}}_{{j{\rm -th~position}}} \otimes \cdots\otimes x\right)\nn\\
&=\sum_{k=1}^{c}\mathcal{C}_{i,k}\sum_{j=1}^{k}\sum_{m=2}^{c}\left(I\otimes \cdots\otimes \underbrace{\mathcal{A}_m}_{{j{\rm -th~position}}} \otimes \cdots\otimes I\right)\nn\\
&\hspace{5mm}\times{x}^{[k+m-2]}\nn\\
&\vdots\nn\\
{L_f^nh_i}&=\sum_{k=1}^{c}\mathcal{C}_{i,k} \bar{A}_1 \bar{A}_2\cdots \bar{A}_n x^{[k+n(m-2)]}\nn
\end{align}
where 
\begin{align}
\bar{A}_n=\hspace{-0.6cm}\sum_{j=1}^{k+(n-1)(m-2)}\sum_{m=2}^{c}\left(\overbrace{I\otimes \cdots\otimes \hspace{-0.4cm}\underbrace{\mathcal{A}_m}_{j{\rm -th~position}}\hspace{-0.4cm} \otimes \cdots\otimes I}^{k+(n-1)(m-2) \ \mathrm{times}}\right) \label{barA}
\end{align}

Then, the observability matrix $\mathbf{O}(x)$ is defined as
\begin{eqnarray}
\mathbf{O}(x):=\nabla_x\left( \begin{array}{cccc}
\sum_{k=1}^{c}\mathcal{C}_{i,k}{x}^{[k]} \\
\sum_{k=1}^{c}\mathcal{C}_{i,k}\bar{A}_1{x}^{[k+m-2]} \\
\vdots \\
\sum_{k=1}^{c}\mathcal{C}_{i,k} \bar{A}_1 \bar{A}_2\cdots \bar{A}_n x^{[k+n(m-2)]} 
\end{array}\right) \label{maO}
\end{eqnarray}


\begin{thm} \label{thm3} The higher-order network system
\begin{align}\label{V11}\left\{\begin{aligned}
\dot{{x}}&=\sum_{k=1}^{c}\mathbf{A}_k{x}^{k-1} \\
y_i&=\sum_{k=1}^{c}\mathbf{C}_{i,k}{x}^{k}\end{aligned}\right.
\end{align}
is locally observable if and only if $\text{rank}(\mathbf{O}(x))=n$.
\end{thm}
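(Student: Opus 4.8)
The plan is to deduce the theorem from the Herman--Krener rank criterion recorded in Lemma~\ref{lem4}, which states that system \eqref{aa2}---and in particular its input-free specialization \eqref{V11}---is locally observable if and only if the observable Jacobian $J$ of \eqref{lo2} has rank $n$. It therefore suffices to prove that the explicitly constructed observability matrix $\mathbf{O}(x)$ of \eqref{maO} and the Jacobian $J$ have the same rank, after which the equivalence ``locally observable $\Leftrightarrow \mathrm{rank}(\mathbf{O}(x))=n$'' follows immediately.

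First I would make the row-by-row correspondence precise. The Kronecker computation carried out just before the theorem---iterating the product rule on $x^{[k]}$ together with $(A\otimes B)(C\otimes D)=(AC)\otimes(BD)$---yields, for each output index $i$ and each order $s$, the closed form $L_f^{s}h_i=\sum_{k=1}^{c}\mathcal{C}_{i,k}\,\bar{A}_1\cdots\bar{A}_s\,x^{[k+s(m-2)]}$ with the matrices $\bar A_s$ as in \eqref{barA}. Taking $\partial/\partial x$ of this scalar produces exactly the block $\partial L_f^{s}h_i/\partial x$ that constitutes the $(i,s)$ row-block of $J$. Hence the rows of $J$ (for orders $0$ through $n-1$) coincide, up to reordering, with a subset of the rows of $\mathbf{O}(x)$, so that $\mathrm{rank}(\mathbf{O}(x))\ge\mathrm{rank}(J)$ is automatic and only the reverse inequality needs work.

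The step I expect to be the main obstacle is showing that the extra, order-$n$ rows appearing in $\mathbf{O}(x)$ add no new rank, i.e.\ that truncating the Lie-derivative tower is lossless. For this I would invoke the standard filtration argument. Let $\mathcal{O}_s:=\mathrm{span}\{dL_f^{r}h_i:0\le r\le s,\ 1\le i\le q\}$, an ascending chain of subspaces of the $n$-dimensional cotangent space. If $\mathcal{O}_0=0$ then every $dh_i=0$, the outputs are constant, and all Lie derivatives vanish, so $\mathbf{O}(x)$ and $J$ are both zero and the rank identity is trivial. Otherwise $\dim\mathcal{O}_0\ge 1$, and since a strictly increasing chain of subspaces in an $n$-dimensional space can grow at most $n-1$ times beyond dimension one, the chain saturates by step $n-1$: once $\mathcal{O}_{s+1}=\mathcal{O}_{s}$ one has $dL_f^{s+1}h_i\in\mathcal{O}_s$, whence $dL_f^{s+2}h_i\in\mathcal{O}_{s+1}=\mathcal{O}_s$, and so on. Consequently $\mathcal{O}_{n}=\mathcal{O}_{n-1}=\mathrm{row\,space}(J)$, so the order-$n$ differentials in \eqref{maO} are linear combinations of rows already present in $J$. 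This gives $\mathrm{rank}(\mathbf{O}(x))\le\mathrm{rank}(J)$, and combined with the previous paragraph, $\mathrm{rank}(\mathbf{O}(x))=\mathrm{rank}(J)$.

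Putting the pieces together: local observability $\Leftrightarrow \mathrm{rank}(J)=n$ by Lemma~\ref{lem4}, and $\mathrm{rank}(J)=\mathrm{rank}(\mathbf{O}(x))$ by the equal-row-space argument, hence local observability $\Leftrightarrow \mathrm{rank}(\mathbf{O}(x))=n$. The only genuinely delicate bookkeeping is in \eqref{barA}: the summation index $m$ must be read termwise, so the exponent $k+s(m-2)$ varies across the sum over $m\in\{2,\dots,c\}$, and one should verify that the product rule distributes $\bar A_{s+1}$ consistently over all Kronecker slots when passing from $L_f^{s}$ to $L_f^{s+1}$, so that no interaction term is dropped and the inductive closed form for $L_f^{s}h_i$ is exact.
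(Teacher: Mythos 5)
Your proof follows essentially the same route as the paper's: invoke the Herman--Krener rank criterion of Lemma~\ref{lem4} and identify the rows of $\mathbf{O}(x)$ in \eqref{maO} with the gradients of the Lie derivatives populating the Jacobian $J$ of \eqref{lo2}. Your additional filtration/saturation argument showing that the order-$n$ block of $\mathbf{O}(x)$ (which exceeds the order-$(n-1)$ truncation used in $J$) contributes no new rank is a careful touch the paper's one-line proof omits, but it does not change the overall approach.
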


\begin{proof}According to equation \eqref{a3}, system \eqref{V11} can be transformed into \eqref{V1}. Then, based on the analysis in Section V-A, by substituting \eqref{maO} into \eqref{lo2}, Theorem \ref{thm3} can be obtained according to Lemma \ref{lem4}.
\end{proof}

\begin{rem} \label{rem7} The theorem provides a general formulation of the observable Jacobian matrix for a higher-order network system without input. By analyzing the span of the Lie derivative, we identify the underlying rule governing how the multi-order derivatives of the system output evolve with the structure of higher-order networks. This establishes a connection between the general rank condition of nonlinear systems and the structural properties of higher-order networks. Furthermore, when the system order reduces to $1$, corresponding to the general linear system $\dot{x} = Ax, y = Cx$, the observability matrix $\mathbf{O}$ in \eqref{maO} simplifies to $(C, CA, CA^2, \dots, CA^n)$, which aligns with the classical observability criterion for linear systems. \end{rem}



\subsection{Local observability of higher-order networks with input}

Consider a higher-order network system with input.
According to (\ref{a3}), the Kronecker form can be written as 
\begin{eqnarray}\label{V2}
 \left\{\begin{aligned} 
\dot{{x}}=&\sum_{k=2}^{c}\mathcal{A}_k{x}^{[k-1]}+\sum_{j=1}^{m}\sum_{k=2}^{c}\mathcal{B}_{k,j}{x}^{[k-1]}u_j\\
y_i=&\sum_{k=1}^{c}\mathcal{C}_{i,k}{x}^{[k]}, \ \ i=1,\cdots,q \end{aligned}\right.
\end{eqnarray}
where $\mathcal{A}$, $\mathcal{B}$, and $\mathcal{C}$ are defined in (\ref{a3}). 
It holds that 
\begin{align}
{L_f^0h_i}&=\sum_{k=1}^{c}\mathcal{C}_{i,k}{x}^{[k]}\nn\\
{L_f^1h_i}&=\frac{d}{d t}\sum_{k=1}^{c}\mathcal{C}_{i,k}{x}^{[k]}=\sum_{k=1}^{c}\mathcal{C}_{i,k}\frac{d}{d t}\left(\overbrace{x \otimes x \otimes \cdots\otimes x}^{k \ \mathrm{times}} \right)\nn\\
&=\sum_{k=1}^{c}\mathcal{C}_{i,k}\bigg(\dot x \otimes \cdots\otimes x + \cdots+ x \otimes \cdots\otimes \dot x  \bigg)\nn\\
&=\sum_{k=1}^{c}\mathcal{C}_{i,k}\bigg(\sum_{\ell=2}^{c}\mathcal{A}_{\ell}{x}^{[\ell-1]} \otimes \cdots\otimes x + \cdots\nn\\
&\hspace{5mm}+ x \otimes \cdots\otimes \sum_{\ell=2}^{c}\mathcal{A}_{\ell}{x}^{[\ell-1]} \nn\\
&\hspace{5mm}+\sum_{j=1}^{m}\sum_{\ell=2}^{c}\mathcal{B}_{\ell,j}{x}^{[\ell-1]}u_j \otimes \cdots\otimes x + \cdots\nn\\
&\hspace{5mm}+ x \otimes \cdots\otimes \sum_{p=1}^{m}\sum_{\ell=2}^{c}\mathcal{B}_{\ell,p}{x}^{[\ell-1]}u_p \otimes \cdots\otimes x\bigg)\nn\\
&=\sum_{k=1}^{c}\sum_{\ell=2}^{c}\mathcal{C}_{i,k}\tilde A_1 {x}^{[k+\ell-2]}\nn
\end{align}
where 
\begin{align}
\tilde A_1= \sum_{j=1}^{k}\left(I\otimes \cdots\otimes \underbrace{\mathcal{A}_{\ell}+\sum_{p=1}^{m}\mathcal{B}_{\ell,p}u_p}_{j{\rm -th~position}} \otimes \cdots\otimes I\right) \nn
\end{align}

First, recall the general Leibniz Rule \cite{Leibniz}. The $n$-th derivative of the product of two functions $F$ and $G$ can be obtained as $(FG)^{(n)}=\sum_{\xi=0}^{n}C_{n}^{\xi}F^{(n-\xi)}G^{(\xi)}$ where $C_{n}^{\xi}=\frac{n!}{\xi!{(n-\xi)}!}$ is the binomial coefficient and $F^{(s)}$ denotes the $s$-th derivative of $F$.
Therefore, ${L_f^nh_i}$ can be derived as
\begin{align}
{L_f^nh_i}=\sum_{k=1}^{c}\sum_{\ell=2}^{c}\mathcal{C}_{i,k}\sum_{\xi=0}^{n-1}C_{n-1}^{\xi}\tilde A_1^{(n-1-\xi)} ({x}^{[k+\ell-2]})^{(\xi)} \label{lnf}
\end{align}

In \eqref{lnf}, define $A_{Bu}=\mathcal{A}_{\ell}+\sum_{p=1}^{m}\sum_{\ell=2}^{c}\mathcal{B}_{\ell,p}u_p$. Then
the $s$-th derivative of $A_{Bu}$ is
\begin{align}
A_{Bu}^{(s)}=\mathcal{A}_{\ell}+\sum_{p=1}^{m}\frac{d^s\mathcal{B}_{\ell,p}}{d t}u_p
\end{align}

According to the definition of $\tilde A_1$, one can deduce that the $s$-th derivative of $A_1$ to $A_n$ are 
\begin{align}
\tilde A_1^{(s)}&= \sum_{j=1}^{k}\left(I\otimes \cdots\otimes 
\underbrace{A_{Bu}^{(s)}}_{j{\rm -th~position}} \otimes \cdots\otimes I\right) \nn\\
&\vdots\nn\\
\tilde{A}^{(s)}_n&=\hspace{-0.6cm}\sum_{j=1}^{k+(n-1)(\ell-2)}\sum_{\ell=2}^{c}\left(\overbrace{I\otimes \cdots\otimes \hspace{-0.4cm}\underbrace{A_{Bu}^{(s)}}_{j{\rm -th~position}}\hspace{-0.4cm} \otimes \cdots\otimes I}^{k+(n-1)(\ell-2) \ \mathrm{times}}\right)  \nn
\end{align}

Furthermore, it holds that
\begin{align}
{x}^{[k+\ell-2]'}=\tilde A_2 x^{[k+2(\ell-2)]} \nn
\end{align}

This leads to the following equations
\begin{align}
({x}^{[k+\ell-2]})^{(n-1)}
&=(\tilde A_2 x^{[k+2(\ell-2)]})^{(n-2)}\nn\\
&=\sum_{\xi=0}^{n-2}C_{n-2}^{\xi}\tilde A_2^{(n-2-\xi)} (x^{[k+2(\ell-2)]})^{(\xi)} 
\nn\\
(x^{[k+2(\ell-2)]})^{(n-2)}&= \sum_{\xi=0}^{n-3}C_{n-3}^{\xi}\tilde A_3^{(n-3-\xi)} (x^{[k+3(\ell-2)]})^{(\xi)}\nn \\
(x^{[k+3(\ell-2)]})^{(n-3)}&=
\sum_{\xi=0}^{n-4}C_{n-4}^{\xi}\tilde A_4^{(n-4-\xi)} (x^{[k+4(\ell-2)]})^{(\xi)} \nn\\
&\vdots\nn\\
(x^{[k+(n-1)(\ell-2)]})'&=\tilde A_{n}x^{[k+n(l-2)]}\nn
\end{align}

Therefore, ${L_f^nh_i}$ can be derived as
\begin{align}
{L_f^nh_i}=&\sum_{k=1}^{c}\sum_{\ell=2}^{c}\sum_{\gamma=1}^{n}\mathcal{C}_{i,k}\sum_{\xi_1=0}^{n-1}C_{n-1}^{\xi_1}\tilde A_1^{(n-1-\xi_1)}\nn\\
&\sum_{\xi_2=0}^{\xi_1-1}C_{\xi_1-1}^{\xi_2}\tilde A_2^{(\xi_1-1-\xi_2)}\cdots\times\nn\\
& \sum_{\xi_{\gamma-1}=0}^{\xi_{\gamma-2}-1}C_{\xi_{\gamma-2}-1}^{\xi_{\gamma-1}}\tilde A_{\gamma-1}^{(\xi_{\gamma-2}-1-\xi_{\gamma-1})} \tilde A_{\gamma}x^{[k+\gamma(l-2)]}\nn
\end{align}
where $\xi_t\in N, 0\leq \xi_t\leq t-2,t\in\{2,3,\cdots,n\}$, and $\xi_r=0$.

\begin{figure*}[h]
\begin{align}\label{O1}
\mathbf{O}_1(x):=&\nabla_x\left( \begin{matrix}
\ts\sum_{k=1}^{c}\mathcal{C}_{i,k}{x}^{[k]} \\
\ts\sum_{k=1}^{c}\sum_{\ell=2}^{c}\mathcal{C}_{i,k}\tilde A_1 {x}^{[k+\ell-2]} \\
\ts\vdots \\
\ts\sum_{k=1}^{c}\sum_{\ell=2}^{c}\sum_{\gamma=1}^{n}\mathcal{C}_{i,k}\sum_{\xi_1=0}^{n-1}C_{n-1}^{\xi_1}\tilde A_1^{(n-1-\xi_1)}\\
\ts\sum_{\xi_2=0}^{\xi_1-1}C_{\xi_1-1}^{\xi_2}\tilde A_2^{(\xi_1-1-\xi_2)}\cdots\sum_{\xi_{\gamma-1}=0}^{\xi_{\gamma-2}-1}C_{\xi_{\gamma-2}-1}^{\xi_{\gamma-1}}\tilde A_{\gamma-1}^{(\xi_{\gamma-2}-1-\xi_{\gamma-1})} \tilde A_{\gamma}x^{[k+\gamma(l-2)]}
\end{matrix}\right)
\end{align}
\end{figure*}

Then, by defining the observability matrix of system \eqref{V2} as $\mathbf{O}_1$ in \eqref{O1}, one can obtain the following result.

\begin{thm}\label{thm4} The higher-order network system
\begin{align}\label{V22}
\left\{\begin{aligned} 
\dot{{x}}=&\sum_{k=2}^{c}\mathbf{A}_k{x}^{k-1} +\sum_{j=1}^{m}\sum_{k=2}^{c}\mathbf{B}_{k,j}{x}^{k-1}u_j  \\ y_i=&\sum_{k=1}^{c}\mathbf{C}_{i,k}{x}^{k} 
\end{aligned}\right.
\end{align}
is locally observable if and only if $\text{rank}(\mathbf{O}_1)=n$.
\end{thm}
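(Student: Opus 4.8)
The plan is to follow the same route as the proof of Theorem \ref{thm3}, now accounting for the input channels. First I would transform system \eqref{V22} into its Kronecker form \eqref{V2} via the unfolding relations in \eqref{a3}, so that all dynamics and outputs are expressed through the matrices $\mathcal{A}_k$, $\mathcal{B}_{k,j}$, and $\mathcal{C}_{i,k}$ acting on Kronecker powers $x^{[\cdot]}$. With this representation in hand, Lemma \ref{lem4} reduces the question of local observability entirely to a rank condition: the system is locally observable if and only if the Jacobian $J$ in \eqref{lo2}, formed by stacking the gradients $\nabla_x L_f^s h_i$ for $s=0,\dots,n-1$ and $i=1,\dots,q$, has full rank $n$.

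The core of the argument is then to show that this Jacobian coincides with the matrix $\mathbf{O}_1$ defined in \eqref{O1}. To this end I would carry out the successive Lie-derivative computation already laid out in Section V-B. The essential difference from the input-free case is that the effective coupling tensor is now the time-varying $A_{Bu}=\mathcal{A}_\ell+\sum_{p=1}^m\sum_{\ell=2}^c\mathcal{B}_{\ell,p}u_p$. Each time derivative of $L_f^s h_i$ therefore differentiates both the Kronecker monomial $x^{[\cdot]}$ and the input-dependent coefficient tensors $\tilde A_\gamma$; applying the general Leibniz rule to every such product yields the nested binomial sums $\sum_{\xi}C_{n-1}^{\xi}\tilde A_1^{(n-1-\xi)}\cdots$ appearing in \eqref{O1}. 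Taking $\nabla_x$ of the resulting expressions for $L_f^0 h_i,\dots,L_f^{n-1} h_i$ reproduces exactly the rows of $\mathbf{O}_1$, so that $\mathrm{rank}(J)=\mathrm{rank}(\mathbf{O}_1)$, and the theorem follows from Lemma \ref{lem4}.

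I expect the main obstacle to be the bookkeeping in the Lie-derivative recursion once the input is present. Unlike Theorem \ref{thm3}, where only the monomial $x^{[\cdot]}$ is differentiated, here every level of differentiation splits into a state-part and an input-part, and one must verify that these contributions assemble precisely into the nested-Leibniz structure of \eqref{O1} with the correct binomial coefficients and derivative orders $\tilde A_\gamma^{(\cdot)}$. A secondary point requiring care is the role of the extended operator \eqref{lo1}, which treats the input derivatives $u_l^{(j)}$ as auxiliary coordinates: one must confirm that the differentiation is ultimately taken with respect to $x$ only (as $\mathbf{O}_1=\nabla_x(\cdot)$), so that the rank test genuinely certifies state distinguishability rather than input--state distinguishability. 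Once the identification $J=\mathbf{O}_1$ is established, the equivalence is immediate from Lemma \ref{lem4}.
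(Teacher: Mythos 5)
Your proposal matches the paper's proof: transform \eqref{V22} into its Kronecker form \eqref{V2} via \eqref{a3}, use the Lie-derivative computation of Section V-B (with the Leibniz rule handling the input-dependent tensors) to identify the Jacobian \eqref{lo2} with $\mathbf{O}_1$ in \eqref{O1}, and conclude by Lemma \ref{lem4}. The additional care you flag about the extended operator \eqref{lo1} and the nested binomial bookkeeping is exactly the content of the derivation the paper relies on, so the argument is essentially identical.
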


\begin{proof}According to equation \eqref{a3}, system \eqref{V22} can be transformed into \eqref{V2}. Based on the analysis in Section V-B, by substituting the definition $\mathbf{O}_1$ in \eqref{O1} into \eqref{lo2}, Theorem \ref{thm4} can be obtained according to Lemma \ref{lem4}.
\end{proof}

\subsection{Local observability of higher-order networks with direct transmission}

Now we consider the local observability of a higher-order network system (\ref{aa2}) with direct transmission. 
According to (\ref{a3}), the Kronecker form can be written as 
\begin{align}\label{V3}
\left\{\begin{aligned}
\dot{{x}}=&\sum_{k=2}^{c}\mathcal{A}_k{x}^{[k-1]} \\
y_i=&\sum_{k=1}^{c}\mathcal{C}_{i,k}{x}^{[k]}+\sum_{l=1}^{w_i}\sum_{k=1}^{c}\mathcal{D}_{i,k,l}{x}^{[k]}u_l 
\end{aligned}\right.
\end{align}

Based on (\ref{lo1}) and the Kronecker form (\ref{a3}), it yields that
\begin{align}
{L_f^0h_i}&=\sum_{k=1}^{c}\mathcal{C}_{i,k}{x}^{[k]}+\sum_{l=1}^{w_i}\sum_{k=1}^{c}\mathcal{D}_{i,k,l}{x}^{[k]}u_l\nn\\
{L_f^1h_i}&=\sum_{k=1}^{c}\mathcal{C}_{i,k}\frac{d {x}^{[k]}}{d t}+\sum_{l=1}^{w_i}\sum_{k=1}^{c}\mathcal{D}_{i,k,l}\frac{d {x}^{[k]}}{d t}u_l\nn\\
&\hspace{5mm}+\sum_{l=1}^{w_i}\sum_{k=1}^{c}\mathcal{D}_{i,k,l}{x}^{[k]}\frac{d u_l}{d t} \nn\\
&=\left(\sum_{k=1}^{c}\mathcal{C}_{i,k}+\sum_{l=1}^{w_i}\sum_{k=1}^{c}\mathcal{D}_{i,k,l}u_l\right)\sum_{m=2}^{c}\sum_{j=1}^{k}\nn\\
&\hspace{5mm}\left(I\otimes \cdots\otimes \underbrace{\mathcal{A}_m}_{{j{\rm -th~position}}} \otimes \cdots\otimes I\right){x}^{[k+m-2]}\nn\\
&\ \ \ \ +\sum_{l=1}^{w_i}\sum_{k=1}^{c}\mathcal{D}_{i,k,l}{x}^{[k]}\frac{d u_l}{d t}\nn\\
&\hspace{3mm}\vdots \nn
\end{align}
\begin{align}
{L_f^nh_i}&=\!a_1\!\!\left(\sum_{k=1}^{c}\mathcal{C}_{i,k}\!+\!\!\sum_{l=1}^{w_i}\sum_{k=1}^{c}\mathcal{D}_{i,k,l}u_l\!\right)\!\!\bar A_1\bar A_2\cdots\bar A_n x^{[k+n(m-2)]}\nn\\
&\ \ \ \ +a_2\sum_{l=1}^{w_i}\sum_{k=1}^{c}\mathcal{D}_{i,k,l}\bar A_1\cdots \bar A_{n-1}x^{[k+(n-1)(m-2)]}\frac{d u}{d t}\nn\\
&\ \ \ \ +a_3\sum_{l=1}^{w_i}\sum_{k=1}^{c}\mathcal{D}_{i,k,l}\bar A_1\cdots \bar A_{n-2}x^{[k+(n-2)(m-2)]}\frac{d^2 u}{d t}\nn\\
&\ \ \ \ +\cdots +a_n\sum_{l=1}^{w_i}\sum_{k=1}^{c}\mathcal{D}_{i,k,l}\bar A_1x^{[k]}\frac{d^n u}{d t}\nn
\end{align}
where $\bar A_n$ is defined in (\ref{barA}). $a_1, a_2,\cdots, a_n$ correspond to the terms in the $n$-th raw of the Yanghui triangle in turn. According to the property of Yanghui triangle, they are equal to the coefficients in the expansion of $(a+b)^{n-1}$. 


Then, by defining the observability matrix $\mathbf{O}_2(x)$ of system \eqref{V3}  as \eqref{O2}, one can obtain the following result.

\begin{thm}\label{thm5}  The system on higher-order network 
\begin{align}\label{V33}
\left\{\begin{aligned} 
\dot{{x}}&=\sum_{k=1}^{c}\mathbf{A}_k{x}^{k-1} \\ y_i&=\sum_{k=1}^{c}\mathbf{C}_{i,k}{x}^{k}+\sum_{l=1}^{w_i}\sum_{k=1}^{c}\mathbf{D}_{i,k,l}{x}^{k}u_l 
\end{aligned}\right.
\end{align}
is locally observable if and only if $\text{rank}(\mathbf{O}_2)=n$.
\end{thm}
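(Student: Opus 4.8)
The plan is to follow exactly the template that established Theorems~\ref{thm3} and~\ref{thm4}, adapting it to the presence of the direct-transmission tensors $\mathbf{D}_{i,k,l}$. First I would invoke the unfolding identities in \eqref{a3} to pass from system \eqref{V33} to its Kronecker form \eqref{V3}; this is a purely notational equivalence and carries no analytic content, so it reduces the question to studying the iterated Lie derivatives of the outputs $y_i$ along the extended operator $L$ in \eqref{lo1}. Since $L$ includes the shift terms $\sum_l u_l^{(j+1)}\partial/\partial u_l^{(j)}$, differentiation correctly propagates the time-derivatives of the inputs, which is precisely what the direct-transmission term forces us to track.

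Second, I would reproduce the iterated differentiation of Section~V-C. The essential new phenomenon relative to Theorem~\ref{thm4} is that $h_i$ now contains $\sum_{l}\sum_{k}\mathbf{D}_{i,k,l}x^{[k]}u_l$, so each application of $L$ generates, in addition to the state-propagation terms routed through the operators $\bar A_m$ of \eqref{barA}, a family of terms carrying the successive input derivatives $\tfrac{d^p u}{dt^p}$. Using the general Leibniz rule on the products $(\mathbf{C}+\mathbf{D}u)\,(x\text{-monomial})$, I would show by induction on the derivative order that these contributions assemble with the binomial (Yanghui-triangle) coefficients $a_1,\dots,a_n$, reproducing the closed expression for $L_f^n h_i$ displayed in the subsection.

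Third, I would stack the gradients $\nabla_x L_f^0 h_i,\dots,\nabla_x L_f^{n-1}h_i$ to form $\mathbf{O}_2(x)$ and substitute it into the Jacobian $J$ of \eqref{lo2}. Here the key observation is that $\nabla_x$ leaves the pure input-derivative prefactors untouched (they are $x$-independent scalars), so the rank content of each row is governed by the $x$-monomials exactly as in the input-free and input-driven cases; the direct-transmission terms merely re-weight existing monomials rather than contributing genuinely new $x$-directions. Invoking Lemma~\ref{lem4} then yields local observability if and only if $\mathrm{rank}(\mathbf{O}_2)=n$.

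The hard part will be the bookkeeping in the second step: I must verify that the terms produced by differentiating the $u_l$-factors, interleaved with the state-derivative chain through $\bar A_1\bar A_2\cdots$, collapse \emph{exactly} into the claimed Leibniz-coefficient pattern, with no stray terms and no missing ones. A secondary subtlety is to confirm that the input-derivative rows are consistent with the Hermann--Krener extended-state formulation underlying Lemma~\ref{lem4}, so that taking $\nabla_x$ retains precisely the information encoded in \eqref{lo2} and the equivalence with the rank condition is genuinely two-sided.
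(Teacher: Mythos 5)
Your proposal is correct and follows essentially the same route as the paper: pass to the Kronecker form \eqref{V3}, derive $L_f^n h_i$ via the general Leibniz rule so that the input-derivative terms collect with the binomial (Yanghui-triangle) coefficients into $\mathbf{O}_2$ as in \eqref{O2}, and then conclude by substituting into \eqref{lo2} and invoking Lemma~\ref{lem4}. The paper's own proof is just a citation of that preceding computation, so your fleshed-out version of the same steps is exactly what is intended.
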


\begin{proof}According to equation \eqref{a3}, system \eqref{V33} can be transformed into \eqref{V3}. Based on the analysis in Section VII-C, by substituting the definition  $\mathbf{O}_2$ in \eqref{O2} into \eqref{lo2}, Theorem \ref{thm5}  can be obtained according to Lemma \ref{lem4}.
\end{proof}

\begin{figure*}[h]
\begin{align}
\label{O2}
\mathbf{O}_2(x):=&\nabla_x\left( \begin{matrix}
\sum_{k=1}^{c}\mathcal{C}_{i,k}{x}^{[k]}+\sum_{l=1}^{w_i}\sum_{k=1}^{c}\mathcal{D}_{i,k,l}{x}^{[k]}u_l \\
\left(\sum_{k=1}^{c}\mathcal{C}_{i,k}+\sum_{l=1}^{w_i}\sum_{k=1}^{c}\mathcal{D}_{i,k,l}u_l\right)\sum_{m=2}^{c}\sum_{j=1}^{k}\bar A_1 {x}^{[k+m-2]} +\sum_{l=1}^{w_i}\sum_{k=1}^{c}\mathcal{D}_{i,k,l}{x}^{[k]}\frac{d u_l}{d t} \\
\vdots \\
\left(\sum_{k=1}^{c}\mathcal{C}_{i,k}+\sum_{l=1}^{w_i}\sum_{k=1}^{c}\mathcal{D}_{i,k,l}u_l\right)\bar A_1\bar A_2\cdots\bar A_n x^{[k+n(m-2)]}\\
+\sum_{p=1}^{n-1}a_{p+1}
\sum_{l=1}^{w_i}\sum_{k=1}^{c}\mathcal{D}_{i,k,l}\bar A_1\cdots \bar A_{n-p}x^{[k+(n-p)(m-2)]}\frac{d^p u}{d t}
\end{matrix}\right)
\end{align}
\end{figure*}

\begin{rem} \label{rem8} The study of local observability for higher-order network systems on hypergraphs above is crucial for understanding the system's ability to distinguish between different initial states within a localized region of the state space. Unlike global observability, which requires observability across the entire state space, local observability focuses on specific states and smaller neighborhoods. This is relevant in practical scenarios where it may not be feasible or necessary to assess the entire system globally. 
Rank criteria in Theorems 3-5 can be used to test whether the nodes/hyperedges can be measured or not in a system on hypergraph.
Based on the proposed specific formulas of the observability matrices for different higher-order network systems, by using  searching approach such as the greedy heuristic approach in \cite{Joshua23}, the minimum set of observable nodes can be estimated. Furthermore, the measurement information will be applied to help design the higher-order feedback control of systems on hypergraph. 
\end{rem}

\section{Examples}

\subsection{Numerical case}

In the following, an numerical example is given to demonstrate the testability of the aforementioned criteria for higher-order network systems.

\textit{Example.} Consider the system on a higher-order network 
\begin{align}\label{E1}
\left\{\begin{aligned} 
\dot x =\mathbf{A}_3x^2+\mathbf{A}_2x\\
y=\mathbf{C}_3x^3+\mathbf{C}_2x^2 
\end{aligned}\right.
\end{align}
where ${(\mathbf{A}_3)}_{123}={(\mathbf{A}_3)}_{132}={(\mathbf{A}_3)}_{213}={(\mathbf{A}_3)}_{231}={(\mathbf{A}_3)}_{312}={(\mathbf{A}_3)}_{321}=\frac{1}{2}$, ${(\mathbf{A}_2)}_{12}={(\mathbf{A}_2)}_{21}=1$. ${(\mathbf{C}_3)}_{123}={(\mathbf{C}_3)}_{132}={(\mathbf{C}_3)}_{213}={(\mathbf{C}_3)}_{231}={(\mathbf{C}_3)}_{312}={(\mathbf{C}_3)}_{321}=\frac{1}{2}$ and ${(\mathbf{A}_2)}_{12}={(\mathbf{A}_2)}_{21}=1$.

According to (\ref{product}), the system (\ref{E1}) can be further transformed as follows
\begin{align}
\dot x &=
\left(\begin{array}{ccc}
x_2x_3+x_2\\
x_1x_3+x_1\\
x_1x_2 \label{ex1}
\end{array}\right)\\
y&=x_1x_2x_3+x_1x_2 \label{ex2}
\end{align}
where $x:=(x_1,x_2,x_3)^T$.

For the initial state pair $(\xi,\eta)$, one can obtain that 
\begin{align}
    J_0&=\langle \xi_1\xi_2(\xi_3+1)-\eta_1\eta_2(\eta_3+1)\rangle \nn\\
    J_1&=J_0+\langle (\xi_2\xi_3+\xi_2)^2+(\xi_1\xi_3+\xi_1)^2+\xi_1^2\xi_2^2\nn\\
    &\hspace{5mm}-\left((\eta_2\eta_3+\eta_2)^2+(\eta_1\eta_3+\eta_1)^2+\eta_1^2\eta_2^2\right)\rangle \nn\\
    J_2&=J_1+\langle 2\xi_1\xi_2\left((\xi_3+1)^2+\xi_2^2\right)(\xi_3+1)\nn\\
    &\hspace{5mm}+2\xi_1\xi_2\left((\xi_3+1)^2+\xi_1^2\right)(\xi_3+1)\nn\\
&\hspace{5mm}+\big(2(\xi_2\xi_3+\xi_2)\xi_2+2(\xi_1\xi_3+\xi_1)\xi_1\big)\xi_1\xi_2\nn\\
&\hspace{5mm}+2(\xi_3+1)\xi_1\xi_2(\xi_1^2+\xi_2^2)-\big(2\eta_1\eta_2\left((\eta_3+1)^2+\eta_2^2\right)\nn\\
&\hspace{5mm}\times(\eta_3+1)+2\eta_1\eta_2\left((\eta_3+1)^2+\eta_1^2\right)(\eta_3+1)\nn\\ &\hspace{5mm}+\left(2(\eta_2\eta_3+\eta_2)\eta_2+2(\eta_1\eta_3+\eta_1)\eta_1\right)\eta_1\eta_2\nn\\ &\hspace{5mm}+2(\eta_3+1)\eta_1\eta_2(\eta_1^2+\eta_2^2) \big)\rangle \nn\\
    &=J_1+\langle \xi_1\xi_2(\xi_3+1)\left(4(\xi_3+1)^2+6\xi_1^2+6\xi_2^2\right)\nn\\ &\hspace{5mm}-\left(\eta_1\eta_2(\eta_3+1)\left(4(\eta_3+1)^2+6\eta_1^2+6\eta_2^2\right) \right)\rangle \nn
\end{align}
Therefore, it holds that 
\begin{align}
    J_0\subset{J_1}=J_2=J
\end{align}
For any fixed initial state $\eta=\sigma$, according to Lemma \ref{lem1}, it yields that 
\begin{align}
    \mathcal{J}=&\langle \xi_1\xi_2(\xi_3+1)-\sigma_1\sigma_2(\sigma_3+1), (\xi_2\xi_3+\xi_2)^2\nn\\
    &\hspace{5mm}+(\xi_1\xi_3+\xi_1)^2+\xi_1^2\xi_2^2
    -(\sigma_2\sigma_3+\sigma_2)^2 \nn\\
    &\hspace{5mm}-(\sigma_1\sigma_3+\sigma_1)^2-\sigma_1^2\sigma_2^2\rangle 
\end{align}
Based on the theory of solutions for equations with multiple varieties, the set $\mathbf{V}(\mathcal{J})$ has multiple different elements. This means that there always exists element $\xi\neq\sigma$ in set $\mathbf{V}(\mathcal{J})$ such that $\mathbf{V}(\mathcal{J})\neq\mathbf{V}(\ell)$. Then according to Lemma \ref{lem1}, any initial state of the system (\ref{ex1})-(\ref{ex2}) is undistinguished. 
\hspace*{\fill}$\Box$

\subsection{Higher-order competitive population model}

Consider a competitive population model with third-order interactions \cite{Bairey, cuiLV}. The form of such a system is described as follows
\begin{align}
\dot x_i=r_ix_i \left(1+\sum_{j=1}^{n}{(\mathbf{E})}_{ji}x_j +\sum_{j=1}^{n}\sum_{k=1}^{n}{(\mathbf{F})}_{jki}x_jx_k\right) \nn
\end{align}
where $x_i$ and $r_i$ are the abundance and per-capita intrinsic growth rate such as the reproduction and mortality rates of species $s_i$, respectively. $\mathbf{E}$ and $\mathbf{F}$ are interaction tensors that capture second-order and third-order interactions. ${(\mathbf{E})}_{ji}\leq 0$ denotes the species $j$’s additive influence on $i$ and ${(\mathbf{F})}_{jki}\leq 0$ denotes the species $j$ and $k$’s joint non-additive influence on $i$. In this case, all species compete with each other. Consider the following specific population model among three species
\begin{align}\label{comp}
\left\{\begin{aligned} 
\dot x_1=&x_1-4x_1x_2 \\
\dot x_2=&x_2-x_2x_3\\
\dot x_3=& x_3-x_2x_3-x_1x_2x_3 \\
y=&x_2 \end{aligned}\right.
\end{align}

The tensor-based form of the above population system \eqref{comp} can be written as the system in Theorem \ref{thm3}
\begin{align}
\left\{\begin{aligned} 
\dot{x}&=\mathbf{A}_2 {x}+\mathbf{A}_3 {x}^2 +\mathbf{A}_4 {x}^3 \nn\\
y&=\mathbf{C}_1 {x}\nn
\end{aligned}\right.
\end{align}
where ${(\mathbf{A}_2)}_{11}={(\mathbf{A}_2)}_{22}={(\mathbf{A}_2)}_{33}=1$. ${(\mathbf{A}_3)}_{121}=-4$. ${(\mathbf{A}_3)}_{232}={(\mathbf{A}_3)}_{233}=-1$. ${(\mathbf{A}_4)}_{1233}=-1$. ${(\mathbf{C}_1)}_{2} =1$.

For the initial state pair $(\xi,\eta)$, one can obtain that the ideals $J_i$ as
\begin{align}
J_0=&\langle y(\xi)-y(\eta)\rangle \nn\\
=&\langle \xi_2-\eta_2\rangle \nn\\
J_1=&J_0+\langle y'(\xi)-y'(\eta)\rangle \nn\\
=&J_0+\langle (\xi_2-\xi_2\xi_3)-(\eta_2-\eta_2\eta_3)\rangle \nn\\
J_2=&J_1+\langle y''(\xi)-y''(\eta)\rangle \nn\\
=&J_1+\langle \xi_2(1-\xi_3)^2-\xi_2(\xi_3-\xi_2\xi_3-\xi_1\xi_2\xi_3)-\nn\\
&\eta_2(1-\eta_3)^2-\eta_2(\eta_3-\eta_2\eta_3-\eta_1\eta_2\eta_3)\rangle \nn\\
\cdots& \nn
\end{align}

The Gröbner basis of $J_2$ and $J_3$ are equal and it holds that
\begin{align}
J&=J_2=J_3\nn\\
&=\langle -\eta_1\eta_2^2\eta_3 + \eta_2^2\eta_3\xi_1, -\eta_2 + \xi_2, -\eta_2\eta_3 + \eta_2\xi_3\rangle \nn
\end{align}

Consider the initial state $\sigma=(\sigma_1,\sigma_2,\cdots,\sigma_n)^T$, replacing the known initial state $\sigma$ with $\eta$. $\mathcal{J}=\langle -\sigma_1\sigma_2^2\sigma_3 + \sigma_2^2\sigma_3\xi_1, -\sigma_2 + \xi_2, -\sigma_2\sigma_3 + \sigma_2\xi_3\rangle $. Further, by Hilbert's Nullstellensatz theories \cite{Cox1}, one can deduce that $\mathbf V(\mathcal{J})=\{\sigma\}$ when $\sigma \ne 0 $.
$\ell=\langle \xi_1-\sigma_1,\xi_2-\sigma_2,\cdots, \xi_n-\sigma_n\rangle $ and $\mathbf V(\ell)=\{\sigma\}$.
Therefore, according to Lemma \ref{lem1}, the competitive population system \eqref{comp} is globally observable at $\sigma\ne0$. The system \eqref{comp} is not globally observable at $\sigma=0$.

According to Theorem \ref{thm3}, we test the local observability of system \eqref{comp} by calculating the rank of the observability matrix $\mathbf{O}(x)$. By substituting \eqref{comp} into \eqref{maO}, $\text{rank}(\mathbf{O}(x))=3$ when $x_2\neq 0$ and $x_3\neq 0$, which means the system \eqref{comp} is locally observable if $x_2\neq 0$ and $x_3\neq 0$.

\section{Conclusion}
In this paper, we consider both the global and local observability of higher-order network systems with both inputs and outputs by exploiting algebraic conditions and rank conditions under hypergraph structure, respectively. Based on properties of polynomial ideals and varieties, the global observability criteria of hypergraphs are established. The criteria are related to the hypergraph structure through symmetrization of system tensors. On the other hand, by transforming the system into Kronecker product form, local observability criteria are obtained based on rank conditions of the observability matrix. Both the proposed global and local observability criteria are applied to a competitive population model with third-order interactions, which shows the testability of the criteria. Future work will focus on designing hyperedge estimation and higher-order feedback control of network systems on hypergraphs.

%

\newpage

\vspace{0 cm}
\begin{IEEEbiography}[{\includegraphics[width=1.4in,height=1.35in,clip,keepaspectratio]{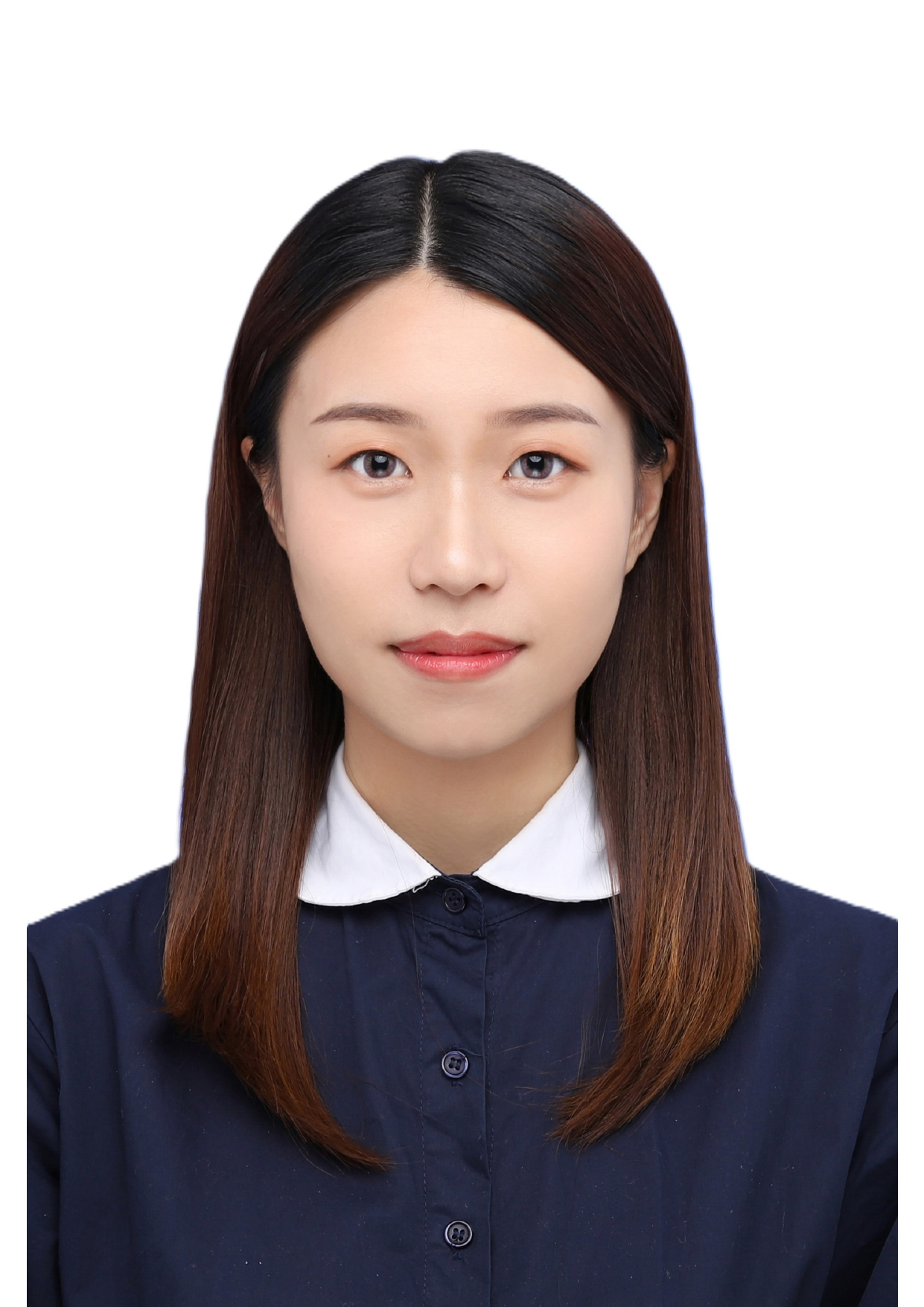}}]{Chencheng Zhang}
received the B.S. degree in automation from the College of Information Engineering, Yangzhou University, Yangzhou, China, in 2017. She was a visiting
 Ph.D. student at the University of Groningen from 2022 to 2023.
She received the Ph.D. degree in control theory and control engineering from Nanjing University of Aeronautics and Astronautics, Nanjing, China, in
2024, where she is currently a Post-Doctoral Researcher.
Her current research focuses on analysis of higher-order network systems, fault estimation and fault-tolerant control for helicopter systems.
\end{IEEEbiography}

\vspace{-1 cm}
\begin{IEEEbiography}[{\includegraphics[width=1in,height=1.25in,clip,keepaspectratio]{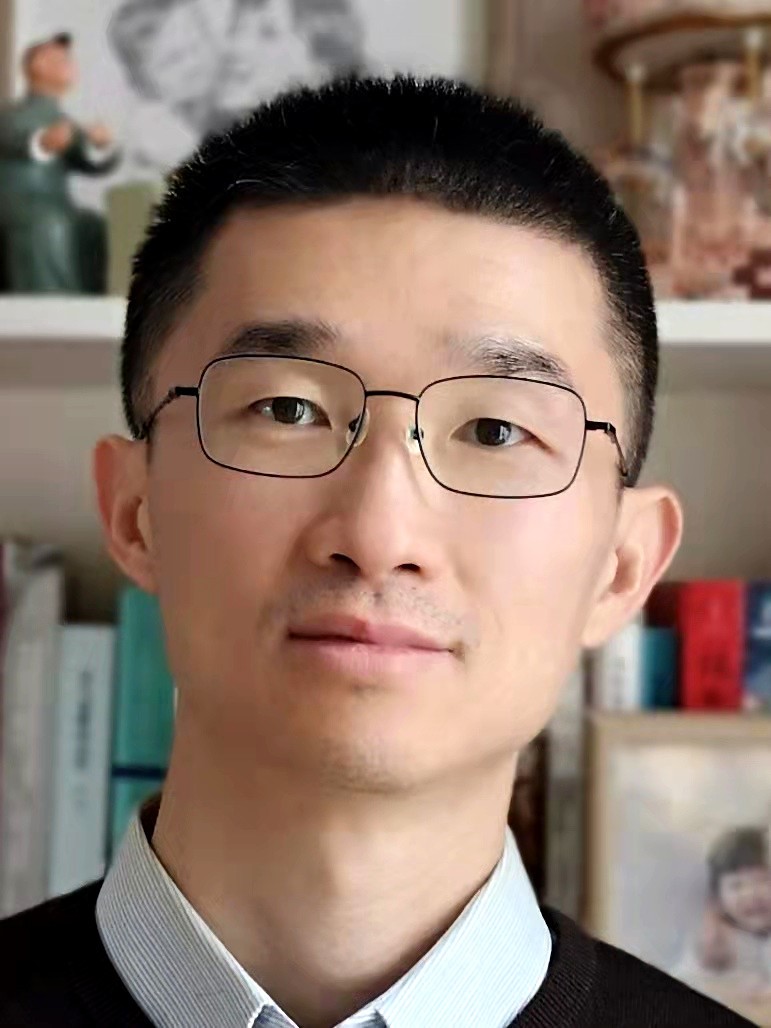}}]{Hao Yang}
(M'13-SM'15) received the B.Sc. degree in electrical automation from Nanjing Tech University, Nanjing, China, in 2004, and the Ph.D. degrees in automatic
control from Université de Lille: Sciences et Technologies, Lille, France, and Nanjing University of Aeronautics and Astronautics (NUAA), Nanjing, China, both in 2009. Since 2010, he has been working with College of Automation Engineering in NUAA, where he has been a Full Professor since 2015. His research interest includes control, optimization, game and fault tolerance of switched and network systems with their aerospace applications.

Dr. Yang was the recipient of the National Science Fund of China for Excellent Young Scholars in 2016, and the Top-Notch Young Talents of Central Organization Department of China in 2017. He has served as Associate Editor for Nonlinear Analysis: Hybrid Systems, Cyber-Physical Systems, Acta Automatica Sinica, and Chinese Journal of Aeronautics. He is a member of the IFAC Technical Committee on Fault Detection, Supervision and Safety of Technical Processes.
\end{IEEEbiography}

\vspace{-1 cm}
\begin{IEEEbiography}[{\includegraphics[width=1in,height=1.25in,clip,keepaspectratio]{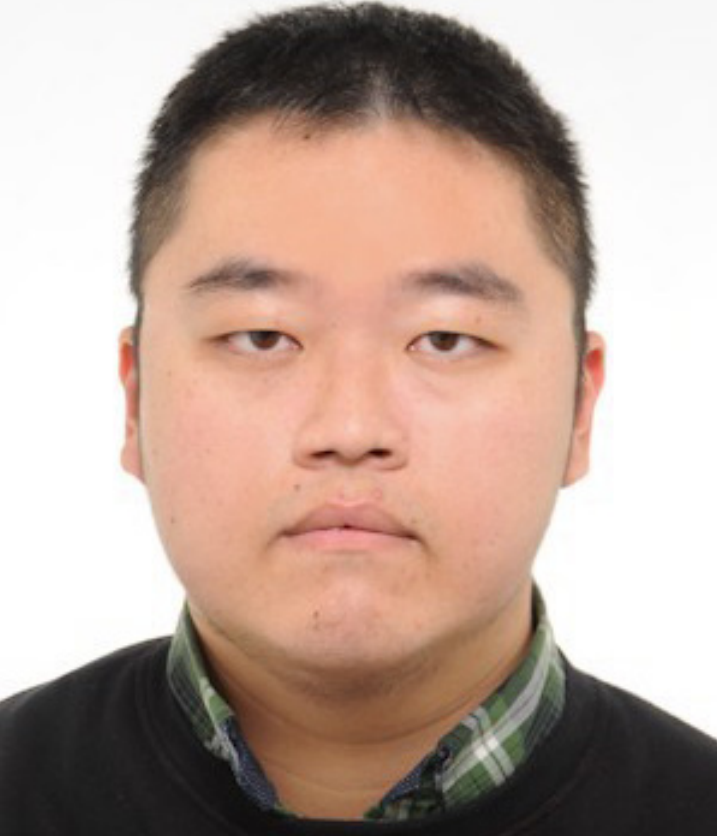}}]{Shaoxuan Cui} is a Ph.D. student with the Bernoulli Institute for Mathematics, Computer Science and Artificial Intelligence of the University of Groningen, the Netherlands. He received his B.Sc. degree in Electrical Engineering and Information Technology from the Technical University of Kaiserslautern, Germany, and Fuzhou University, China in 2018, and his M.Sc. Degree in Electrical Engineering and Information Technology from the Technical University of Munich, Germany, in 2020. His research interests include modeling, analysis, and control of networked systems, especially those networked systems on hypergraphs.
\end{IEEEbiography}

\vspace{-1 cm}
\begin{IEEEbiography}[{\includegraphics[width=1in,height=1.25in,clip,keepaspectratio]{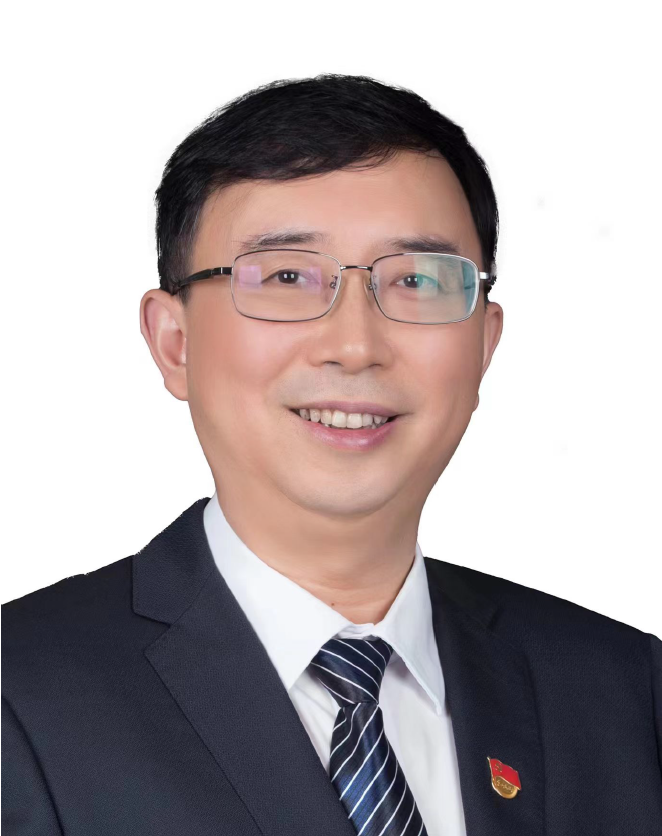}}]{Bin Jiang}
(M'03-SM'05-F'20) received the Ph.D.degree in automatic control from Northeastern University, Shenyang, China, in 1995. He had ever been a Post-Doctoral Fellow, a Research Fellow,
an Invited Professor, and a Visiting Professor in Singapore, France, USA, and Canada, respectively.
He is currently the President of Nanjing University of Aeronautics and Astronautics, Nanjing, China and the Chair Professor of Cheung Kong Scholar Program with the Ministry of Education.
His current research interests include intelligent fault diagnosis and fault tolerant control and their applications to helicopters, satellites, and high-speed trains. He has authored eight books in the related
fields.

Dr. Jiang was a recipient of the Second Class Prize of National Natural Science Award of China. He is a Fellow of Chinese Association of Automation (CAA). He currently serves as an Editor of International Journal of Control, Automation and Systems, and an Associate Editor or an Editorial Board Member for a number of journals, such as IEEE Trans. on Cybernetics, Journal of the Franklin Institute, Neurocomputing, etc. He is a Chair of Control Systems Chapter in IEEE Nanjing Section, and a member of IFAC Technical Committee on Fault Detection, Supervision, and Safety of Technical Processes.
\end{IEEEbiography}

\vspace{-1 cm}
\begin{IEEEbiography}[{\includegraphics[width=1in,height=1.25in,clip,keepaspectratio]{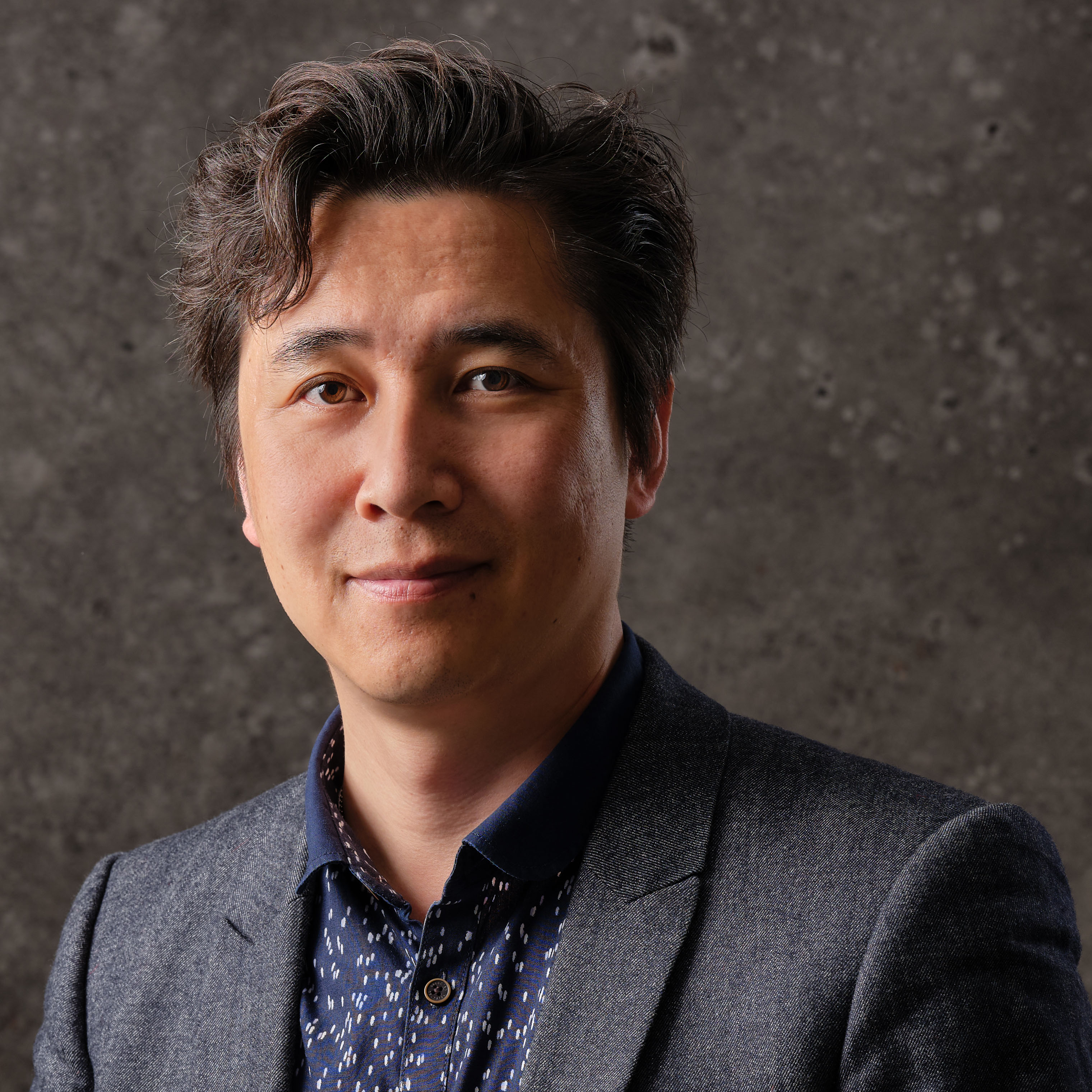}}]{Ming Cao} (F'22)
is a professor of networks and robotics at the University of Groningen, the Netherlands. He received the Bachelor degree in 1999 and the Master degree in 2002 from Tsinghua University, China, and the Ph.D. degree in 2007 from Yale University, USA, all in Electrical Engineering. From 2007 to 2008, he was a Research Associate at Princeton University, USA. He worked as a research intern in 2006 at the IBM T. J. Watson Research Center, USA. He is the 2017 and inaugural recipient of the Manfred Thoma medal from the International Federation of Automatic Control (IFAC) and the 2016 recipient of the European Control Award sponsored by the European Control Association (EUCA). He is an IEEE fellow. He is a Senior Editor for Systems and Control Letters, an Associate Editor for IEEE Transactions on Automatic Control, IEEE Transactions on Control of Network Systems and IEEE Robotics and Automation Magazine, and was an associate editor for IEEE Transactions on Circuits and Systems and IEEE Circuits and Systems Magazine. He is a member of the IFAC Conference Board and a vice chair of the IFAC Technical Committee on Large-Scale Complex Systems. His research interests include autonomous agents and multi-agent systems, complex networks and decision-making processes.
\end{IEEEbiography}

\end{document}